\title{Undecidability of Tiling with a Tromino}%
\author{%
  MIT--ULB CompGeom Group\thanks{%
    Artificial first author to highlight that the other authors (in
    alphabetical order) worked as an equal group. Please include all
    authors (including this one) in your bibliography, and refer to the
    authors as “MIT--ULB CompGeom Group” (without “et al.”).}%
  \and
  Zachary Abel\thanks{%
    Department of Electrical Engineering and Computer Science,
    Massachusetts Institute of Technology, Cambridge, MA, USA,
    \protect\href{mailto:zabel@mit.edu}{\protect\nolinkurl{zabel@mit.edu}}}%
  \and
  Hugo Akitaya\thanks{%
    Miner School of Computer and Information Sciences,
    University of Massachusetts, Lowell, MA, USA,
    \protect\href{mailto:hugoakitaya@gmail.com}{\protect\nolinkurl{hugoakitaya@gmail.com}}}%
  \and
  Lily Chung\thanks{%
    Computer Science and Artificial Intelligence Laboratory,
    Massachusetts Institute of Technology, Cambridge, MA, USA,
    \protect\href{mailto:lkdc@mit.edu,edemaine@mit.edu,diomidova@mit.edu,della@mit.edu,jaysonl@mit.edu}{\protect\nolinkurl{{lkdc,edemaine,diomidova,della,jaysonl}@mit.edu}}}%
  \and
  Erik D. Demaine\footnotemark[4]
  \and
  Jenny Diomidova\footnotemark[4]
  \and
  Della Hendrickson\footnotemark[4]
  \and
  Stefan Langerman\thanks{%
    Computer Science Department, Universit\'e libre de Bruxelles, Belgium,
    \protect\href{mailto:stefan.langerman@ulb.ac.be}{\protect\nolinkurl{stefan.langerman@ulb.ac.be}}. S. Langerman is Directeur de Recherche du F.R.S.-FNRS.}%
  \and
  Jayson Lynch\footnotemark[4]}%
\date{}
\newcolumntype{L}{>{\raggedright\arraybackslash}p}
\definecolor
   {bgred}%
   {HTML}%
   {F6C1C1}
\definecolor
   {bgyellow}%
   {HTML}%
   {FFF5C4}
\definecolor
   {bggreen}%
   {HTML}%
   {CDEECF}
\definecolor
   {bgblue}%
   {HTML}%
   {C1C1F6}
\newtheorem
   {theorem}%
   {Theorem}%
   [section]%
\newtheorem
   {lemma}%
   [theorem]%
   {Lemma}%
\newtheorem
   {corollary}%
   [theorem]%
   {Corollary}%
\newtheorem
   {problem}%
   {Problem}%
\def\defn#1{\textbf{\textit{\boldmath #1}}}
 \gdef\xxxmark{%
   \expandafter\ifx\csname {@}mpargs\endcsname\relax 
     \expandafter\ifx\csname {@}captype\endcsname\relax 
       \marginpar{xxx}
     \else
       xxx 
     \fi
   \else
     xxx 
   \fi}%
 \gdef\xxx{\@ifnextchar[\xxx@lab\xxx@nolab}
 \long\gdef\xxx@lab[#1]#2{\textbf{[\xxxmark #2 ---{\sc #1}]}}
 \long\gdef\xxx@nolab#1{\textbf{[\xxxmark #1]}}
\let\realbfseries=\bfseries
\def\bfseries{\realbfseries\boldmath}%
\newcommand
   {\PER}%
   [1]%
   {\mathord{%
      \begin{tikzpicture}[baseline=(X.base),scale=0.2]
        \draw (-1.5,1) -- (1.5,1);
        \draw (-1.5,-1) -- (1.5,-1);
        \draw (-1,1.5) -- (-1,-1.5);
        \draw (1,1.5) -- (1,-1.5);
        \node (X) at (0,0) {$#1$};
      \end{tikzpicture}}}%
\def\u{_}
\def\L{\raisebox{-0.1ex}{\includegraphics[height=1.75ex]{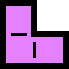}}\xspace}
\def\I{\raisebox{-0.1ex}{\includegraphics[height=1ex]{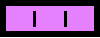}}\xspace}
\def\Ir{\raisebox{-0.1ex}{\includegraphics[height=2.5ex]{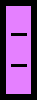}}\xspace}
\def\D{\raisebox{-0.1ex}{\includegraphics[height=1.75ex]{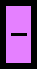}}\xspace}
\def\hide#1{}
\let\epsilon=\varepsilon
\begin{document}
\maketitle
\begin{abstract}
  Given a periodic placement of copies of a tromino (either \L or \I),
  we prove co-RE-completeness (and hence undecidability) of
  deciding whether it can be completed to a plane tiling.
  By contrast, the problem becomes decidable if the initial placement is
  finite, or if the tile is a domino \D instead of a tromino (in any dimension).
  As a consequence, tiling a given periodic subset of the plane
  with a given tromino (\L or \I) is co-RE-complete.

  We also prove co-RE-completeness of tiling the entire plane
  with two polyominoes
  (one of which is disconnected and the other of which has constant size),
  and of tiling 3D space with two connected polycubes
  (one of which has constant size).
  If we restrict to tiling by translation only (no rotation),
  then we obtain co-RE-completeness with one more tile:
  two trominoes for a periodic subset of 2D,
  three polyominoes for the 2D plane, and
  three connected polycubes for 3D space.

  Along the way, we prove several new complexity and algorithmic results about periodic (infinite) graphs.
  Notably, we prove that Periodic Planar (1-in-)3SAT-3, 3DM, and
  Graph Orientation are co-RE-complete in 2D and PSPACE-complete in 1D;
  we extend basic results in graph drawing to 2D periodic graphs; and
  we give a polynomial-time algorithm for perfect matching in
  bipartite periodic graphs.
\end{abstract}

\section{Introduction}

Given one or more \defn{prototiles} (shapes) and a target \defn{space}
(e.g., the plane), a \defn{tiling} \cite{Gruenbaum-Shephard}
is a covering of the space with
nonoverlapping copies of the prototiles, called \defn{tiles},
without gaps or overlaps.
By default, we allow the copies to translate, rotate, and reflect,
though reflections do not affect our (or most) results,
and we will also consider translation-only tiling.
In this paper, we study three fundamental computational problems about tilings:

\begin{problem}[$d$D Tiling] \label{prob:tiling}
  Given one or more prototiles, can they tile $d$-dimensional Euclidean space?
\end{problem}
\begin{problem}[$d$D Tiling Completion] \label{prob:completion}
  Given one or more prototiles, and given some already placed tiles,
  can this partial placement be extended to a tiling of
  $d$-dimensional Euclidean space?
\end{problem}
\begin{problem}[$d$D Subspace Tiling] \label{prob:subspace}
  Given one or more prototiles, and given a subset of $d$-dimensional
  Euclidean space, can the prototiles tile that space?
\end{problem}

Problem~\ref{prob:tiling} is a special case of Problem~\ref{prob:completion}
(with no preplaced tiles), and
Problem~\ref{prob:completion} is a special case of Problem~\ref{prob:subspace}
(where the preplaced tiles form the excluded subspace).
In Problems~\ref{prob:completion} and~\ref{prob:subspace}, there are multiple ways to
specify the preplaced tiles or subspace respectively:

\begin{enumerate}
\item \textbf{Finite}: There are finitely many preplaced tiles,
   or finitely many excluded regions from $d$-dimensional space,
   and we encode each explicitly.
\item \textbf{Periodic}: The preplaced tiles or excluded regions are periodic
   in $d' \leq d$ dimensions, and we encode the fundamental domain
   and the $d'$ translation vectors
   along which to repeat the fundamental domain.
   (Our results use $d'=d$.)
\item \textbf{Eventually periodic}:
   The preplaced tiles or excluded regions are periodic outside
   a finite region, so we use a hybrid: a periodic encoding,
   plus an explicit finite list of exceptions
   (excluded/included preplaced tiles or excluded regions).
   (Our results do not use this form of the problems,
   but we mention it for completeness.)
\end{enumerate}

All three problems have been shown \defn{undecidable}
(solved by no finite algorithm) in a variety of settings.
Such undecidability proofs generally simulate a Turing machine,
where finding an (infinite) tiling corresponds to the machine running forever,
which shows \defn{co-RE-hardness}.
Recently, Demaine and Langerman \cite{Demaine-Langerman-2025}
proved that these problems are in co-RE in very general settings,
and thus co-RE-hardness in fact establishes \defn{co-RE-completeness}.

Table~\ref{tab:history} summarizes the history of many such results,
focusing on Problem~\ref{prob:tiling},
but also capturing Problem~\ref{prob:subspace} in the form of a periodic "piece".
In general, we aim for undecidability under the following objectives:

\begin{enumerate}
\item Minimize the target dimension~$d$.
   In addition to integer $d$, we define $d=i+\frac12$ to consist of
   $i$ infinite real dimensions plus one bounded dimension
   given by a real interval.
   For example, 2.5D means $\mathbb R^2 \times [a,b]$
   for some $a, b \in \mathbb R$.
\item Minimize the number of distinct prototiles required.
\item Simplify the prototile shapes:
  \begin{enumerate}
  \item Prefer smaller families of shapes (e.g., polyominoes or polycubes)
     over general shapes (polygons or polyhedra).
  \item Prefer connected over disconnected shapes.
  \item Prefer shapes of smaller size/complexity
     (e.g., trominoes over pentominoes).
  \end{enumerate}
\item Minimize the preplaced tiles or excluded subspace:
   zero is better than finite, which is better than periodic,
   which is better than eventually periodic.
\end{enumerate}

\begin{table}
  \small
  \def\NEW{\textsc{new}}%
  \def\halfdown#1{\raisebox{-1ex}{#1}}
  \centering
  \begin{tabular}{l L{1.6in} L{1.6in} L{1.5in} l}%
    \toprule
    \multicolumn{1}{l}{\halfdown{Dim.}} & \multicolumn{2}{c}{Number/types of pieces\qquad} & \multicolumn{1}{c}{\halfdown{Result\qquad}} & \multicolumn{1}{l}{\halfdown{Date}} \\
    \cline{2-3}
         & Tiling by translation & by rotation${}+{}$translation &  & \\
    \midrule
    $d$D & \textbf{1 disconnected polycube + periodic} & n/a & undecidable \cite{greenfeld2025undecidability} & 2023-09 \\
    4D & 4 connected polycube & n/a & undecidable \cite{yang2024undecidabilitytranslationaltiling4dimensional} & 2024-09 \\
    4D & 3 connected polycube & n/a & undecidable \cite{yang2024undecidabilitytranslationaltilingtiles} & 2024-12 \\
    3D & 6 connected polycube & n/a & undecidable \cite{Yang_2025_sixpolycubes} & 2024-08 \\
    3D & 3 connected polycube & n/a & undecidable \cite{yang2025undecidabilitytiling3dimensionalspace} & 2025-07* \\
    3D & \textbf{2 connected polycube} & n/a & undecidable \cite{kim2025undecidabilitytranslationaltiling2} & 2025-08* \\
    \rowcolor{bgblue}\cellcolor{white}%
    2.5D & \textbf{3 connected polycube} & \textbf{2 connected polycube} & {undecidable (Cor.~{\ref{cor:threepolycube-translation}}~\&~\ref{cor:twopolycube})} & \NEW \\
    \midrule
    2D & $n$ connected polyomino & $n$ connected polyomino & undecidable \cite{Golomb1970} & 1970 \\
    2D & 11 connected polyomino & 5 connected polyomino & undecidable \cite{Ollinger09} & 2009-04 \\
    2D & 10 connected polyomino & n/a & undecidable \cite{Yang2023-10polyominoes} & 2023-02 \\
    2D & 9 connected polyomino & n/a & undecidable \cite{Yang2025-9polyominoes} & 2024 \\
    2D & 8 connected polyomino & n/a & undecidable \cite{Yang-Zhang-2024-8polyominoes} & 2024-03 \\
    2D & 7 polyomino & n/a & undecidable \cite{Yang-Zhang-2024-7polyominoes} & 2024-12 \\
    2D & 7 orthoconvex polyomino & n/a & undecidable \cite{Yang-Zhang-2025-7orthoconvexpolyominoes} & 2025-06* \\
    2D & 4 disconnected polyomino & n/a & undecidable \cite{Yang-Zhang-2025-4polyominoes} & 2025-06* \\
    2D & \textbf{5 connected polyomino} & \textbf{3 connected polyomino} & undecidable \cite{kim2025undecidabilitytilingplaneset} & 2025-08* \\
    2D & n/a & 3 polygons, or 2 polygons + periodic & undecidable, co-RE complete \cite{Demaine-Langerman-2025} & 2024-09 \\
    2D & n/a & \textbf{2 polyhex} & undecidable \cite{Stade2025-2tiling} & 2025-06* \\
    \rowcolor{bgblue}\cellcolor{white}%
    2D & \textbf{3 polyomino: 2~connected + 1~disconnected} & \textbf{2 polyomino: 1~connected + 1~disconnected} & undecidable (Cor.~{\ref{cor:threepolyomino-translation}}~\&~Thm.~\ref{thm:twopolyomino}) & \NEW \\
    \noalign{\vskip 1pt}%
    \rowcolor{bgblue}\cellcolor{white}%
    2D & \textbf{2 tromino + periodic} & \textbf{1 tromino + periodic} & undecidable ({Cor~\ref{cor:periodicsubsettranslation}}~\&~Thm.~\ref{thm:periodicsubset}) & \NEW \\
    \hdashline
    \rowcolor{bgyellow}%
    2D & 2 polyomino & 1 polyomino & OPEN & --- \\
    \hdashline
    \rowcolor{bgblue}\cellcolor{white}%
    $d$D & domino + periodic & domino + periodic & polynomial (Cor.~\ref{cor:domino-periodic-polytime}) & \NEW\\
    2D & 1 connected polyomino & n/a & decidable, $O(n)$ \cite{winslow2015translated_polyomino} & 2015 \\
    2D & 1 disconnected polyomino & n/a & decidable, periodic \cite{B20-R2-periodicity} & 2016-02 \\
    \rowcolor{bgblue}\cellcolor{white}%
    1.5D & \textbf{2 tromino + periodic} & \textbf{1 tromino + periodic} & PSPACE-complete ({Cor.~\ref{cor:periodicsubsettranslation}}~\&~Thm.~\ref{thm:periodicsubset}) & \NEW \\
    \noalign{\vskip 1pt}%
    \rowcolor{bgblue}\cellcolor{white}%
    1.5D & \textbf{3 polyomino: 2~connected + 1~disconnected} & \textbf{2 polyomino: 1~connected + 1~disconnected} & PSPACE-complete ({Cor.~\ref{cor:threepolyomino-translation}}~\&~Thm.~\ref{thm:twopolyomino}) & \NEW \\
    \midrule
    1D & $n$ polyomino & $n$ polyomino & decidable, $O(n)$ \cite{Greenfeld-Tao-2023} & 2023 \\
    \bottomrule
  \end{tabular}
  \caption{Past and new (un)decidability results for tiling Euclidean space,
    in various dimensions,
    and tiling by translation (left) or by rotation/translation (right).
    Bold indicates current hardness record holders.
    Dates are year-month, and ``*'' indicates recent independent work.
    Dashed lines separate undecidable from decidable. Our results are highlighted in blue.}%
  \label{tab:history}
\end{table}

In this paper, we improve the state-of-the-art for all three problems.

\subsection{Our Results: Tiling Completion}

We obtain particularly tight results for tiling completion
with a single prototile (see Section~\ref{sec:completion}):

\begin{enumerate}
\item Given a periodic preplacement of copies of a single tromino (\L or \I) in 2D,
   tiling completion is co-RE-complete (Theorem~\ref{thm:periodic-completion}),
   and hence undecidable.
   As a consequence, there are periodic preplacements that can be completed but only aperiodically (Corollary~\ref{cor:aperiodic-completion}).
   This undecidability result is tight by the following contrasting results:
\item Given a periodic preplacement of copies of a single tromino (\L or \I) in \emph{1.5D},
   tiling completion is PSPACE-complete (Theorem~\ref{thm:periodic-completion}),
   and hence decidable.
\item Given a \emph{finite} preplacement of copies of a single tromino (\L or \I) in 2D,
   tiling completion is NP-complete (Theorem~\ref{thm:finite-completion}),
   and hence decidable.
\item Given a periodic preplacement of \emph{dominoes} \D in $d$D for any $d \geq 1$,
   tiling completion can be solved in polynomial time
   (Corollary~\ref{cor:domino-periodic-polytime}).
   This is a consequence of the mathematical property that,
   if a periodic preplacement of dominoes \D in $d$D can be completed
   to a tiling, then it can be completed periodically with the same period
   (Corollary~\ref{cor:domino-completion-periodic}).
\end{enumerate}

Our results are the first to prove undecidability of tiling completion
with a \emph{single prototile}.
The earliest undecidability result for tiling completion was by
Robinson \cite{Robinson1971}, who used 36 Wang tiles
(which can be implemented by 36 polyominoes \cite{Golomb1970})
and required only finite preplacement.
Yang \cite{Yang-2013-thesis} showed how rule 110 can be simulated using 6 Wang tiles, and thus 6 polyominoes; together with Cook's undecidability of rule 110 for an eventually periodic initial configuration \cite{cook2004universality}, this implies undecidability of tiling completion with 6 polyominoes and eventually periodic preplacement \cite{Demaine-Langerman-2025}, using only translations.
More recently, this bound was lowered for fixed general polygons
to 3 using finite preplacement, or 2 using eventually periodic preplacement \cite{Demaine-Langerman-2025}.
In addition to reducing to 1 shape and simplifying the shape to a polyomino,
we characterize the exact size of polyomino (3) and dimension (2)
required for undecidability,
while (necessarily) requiring periodic preplacement.

\subsection{Our Results: Subspace Tiling}

Our hardness results for tiling completion immediately apply to
the more general Problem~\ref{prob:subspace}.
In fact, all four results generalize (see Section~\ref{sec:subspace}):

\begin{enumerate}
\item Tiling a given periodic polyomino subset of 2D
   with copies of a single tromino (\L or \I) is co-RE-complete (Theorem~\ref{thm:periodicsubset}).
   This result is tight by the following contrasting results:
\item Tiling a given periodic polyomino subset of \emph{1.5D}
   with copies of a single tromino (\L or \I) is PSPACE-complete (Theorem~\ref{thm:periodicsubset}).
\item Tiling a given \emph{finite} polyomino subset of 2D
   with copies of a single tromino (\L or \I) is NP-complete.
   This result was already known for both \I
   \cite{Beauquier-Nivat-Remila-Robson-1995,Horiyama-Ito-Nakatsuka-Suzuki-Uehara-2017} and
   \L \cite{Moore-Robson-2001,Horiyama-Ito-Nakatsuka-Suzuki-Uehara-2017},
   but we give a simpler proof based on Planar 3DM.
\item Tiling a given periodic polyomino subset of $d$D
   with a \emph{domino} \D is polynomial (Corollary~\ref{cor:domino-periodic-polytime}),
   and when possible, possible with the same period
   (Theorem~\ref{thm:domino-periodic}), for any $d \geq 1$.
\end{enumerate}

The main point of comparison is a recent result of
Greenfeld and Tao \cite{greenfeld2025undecidability}:
tiling a given periodic subset of $d$D space, where $d$ is part of the input,
with a single disconnected polyhypercube by translation is co-RE-complete.
Our result improves the number of dimensions from unbounded
(depending on the Turing machine being simulated) to the optimal~$2$,
improves the shape to be connected, and optimizes the complexity of the shape.
However, while Greenfeld and Tao's results is translation-only,
our result requires rotations.
For translation-only, our results need two trominoes:

\begin{enumerate}[start=5]
  \item Tiling a given periodic polyomino subset of 2D by translation
    with the two \I trominoes (\I and \Ir) is co-RE-complete (Corollary~\ref{cor:periodicsubsettranslation}).
  \item Tiling a given periodic polyomino subset of \emph{1.5D} by translation
    with the two \I trominoes (\I and \Ir) is PSPACE-complete (Corollary~\ref{cor:periodicsubsettranslation}).
\end{enumerate}

The only previous result exploiting rotations is that tiling a given
periodic polygonal subset of 2D with two polygons is co-RE-complete
\cite{Demaine-Langerman-2025}.
Our results improve the number of shapes by~$1$,
improve the shape to be polyomino,
and optimize the complexity of the polyomino to the optimal~$3$.

\subsection{Our Results: Tiling}

With a little work (see Section~\ref{sec:two-polyomino}),
we can convert the complement of the periodic subspace
into a second prototile which is a disconnected polyomino,
while expanding the tromino prototiles to constant-size connected polyominoes.
As a result, we obtain several new results for the most specific
Problem~\ref{prob:tiling}:

\begin{enumerate}
\item Tiling 2D with two polyomino prototiles, one of which is disconnected
   and the other of which is connected and has constant size,
   is co-RE-complete (Theorem~\ref{thm:twopolyomino}).
\item Tiling 2.5D or 3D with two connected polycube prototiles,
   one of which has constant size, is co-RE-complete (Corollary~\ref{cor:twopolycube}).
   The 2.5D result needs just two layers.
\item Tiling 1.5D with two polyomino prototiles, one of which is disconnected
   and the other of which is connected and has constant size,
   is PSPACE-complete (Theorem~\ref{thm:twopolyomino}).
\end{enumerate}

Until recently, the only previous polyomino tiling results were the
original by Golomb \cite{Golomb1970}, which required arbitrarily many polyominoes
(depending on the Turing machine being simulated),
and Ollinger's improvement to just five polyominoes \cite{Ollinger09}.
Very recently (and independent of our work),
Kim \cite{kim2025undecidabilitytilingplaneset}
improved this result to just three polyominoes
(also improving on a prior result for three \emph{polygons}
\cite{Demaine-Langerman-2025}).
Even so, by allowing one polyomino to be disconnected or lifting to 3D,
we improve the bound to just two polyominoes/polycubes.
Also very recently (and independent of our work),
Stade \cite{Stade2025-2tiling} showed undecidability of tiling
with two \emph{polyhexes},
while mentioning that it seems difficult to modify his construction
to two polyominoes.
Our result achieves this goal, by very different techniques,
and either allowing one polyomino to be disconnected or lifting to 3D.

Table~\ref{tab:tiling} summarizes the current state-of-the-art
for tiling space, allowing rotation.

\begin{table}
  \centering
  \begin{tabular}{c c c c c c}%
    \toprule
    pieces & 1D & 2D & 3D & $d$D \\
    \midrule
    \rowcolor{bgyellow}
    \cellcolor{white}1 & \cellcolor{bggreen}R ($\downarrow$) & OPEN & OPEN & OPEN \\
    \rowcolor{bgred}
    \cellcolor{white}2 & \cellcolor{bggreen}R ($\downarrow$) & co-RE-c (Thm.~\ref{thm:twopolyomino}) & co-RE-c ($\leftarrow$, Cor.~\ref{cor:twopolycube}) & co-RE-c ($\leftarrow$) \\
    \rowcolor{bgred}
    \cellcolor{white}3 & \cellcolor{bggreen}R ($\downarrow$) & co-RE-c \cite{kim2025undecidabilitytilingplaneset} & co-RE-c ($\leftarrow$) & co-RE-c ($\leftarrow$) \\
    \rowcolor{bgred}
    \cellcolor{white}$n$ & \cellcolor{bggreen}R \cite{greenfeld2025undecidability} & co-RE-c ($\uparrow$) & co-RE-c ($\uparrow$) & co-RE-c ($\uparrow$) \\
    \bottomrule
  \end{tabular}
  \caption{Current records for complexity of tiling space (allowing rotation) with possibly disconnected polycubes.
    "R" denotes decidability, and "co-RE-c" denotes co-RE-completeness.}%
  \label{tab:tiling}
\end{table}

For translation-only tiling, our results need three polyominoes
(to represent the two rotations of the \I tromino):

\begin{enumerate}[start=4]
  \item Tiling 2D by translation with three polyomino tiles,
    one of which is disconnected
    and the other two of which are connected and have constant size,
    is co-RE-complete (Corollary~\ref{cor:threepolyomino-translation}).
  \item Tiling 2.5D or 3D by translation with three connected polycube tiles,
    two of which have constant size, is co-RE-complete (Corollary~\ref{cor:threepolycube-translation}).
    The 2.5D result needs just two layers.
  \item Tiling 1.5D by translation with three polyomino tiles,
    one of which is disconnected
    and the other two of which are connected and have constant size,
    is PSPACE-complete (Corollary~\ref{cor:threepolyomino-translation}).
\end{enumerate}

For comparison, when tiling by translation,
Ollinger's construction \cite{Ollinger09} needs 11 connected polyominoes.
This bound was reduced to 10 \cite{Yang2023-10polyominoes}, then 9 \cite{Yang2025-9polyominoes},
then 8 \cite{Yang-Zhang-2024-8polyominoes}, then 7 \cite{Yang-Zhang-2024-7polyominoes},
and then 4 disconnected polyominoes \cite{Yang-Zhang-2025-4polyominoes}.
Very recently (and independent of our work),
Kim \cite{kim2025undecidabilitytilingplaneset}
proved undecidability of translation-only tiling with
just 5 connected polyominoes.
Our results improve these results to 3 polyominoes,
either allowing one polyomino to be disconnected or lifting to 3D.
But very recently (and independently),
Kim \cite{kim2025undecidabilitytranslationaltiling2}
improved the 3D result to just 2 connected polycubes.
(In 2.5D, our 3 connected polycubes still hold the record.)

Table~\ref{tab:tiling-translation} summarizes the current state-of-the-art
for tiling space by translation only.

\begin{table}
  \centering
  \begin{tabular}{c c c c c c}%
    \toprule
    pieces & 1D & 2D & 3D & $d$D \\
    \midrule
    \rowcolor{bgyellow}
    \cellcolor{white}1 & \cellcolor{bggreen}R ($\downarrow$) & \cellcolor{bggreen}R \cite{B20-R2-periodicity,Beauquier-1991} & OPEN & OPEN \\
    \rowcolor{bgred}
    \cellcolor{white}2 & \cellcolor{bggreen}R ($\downarrow$)& \cellcolor{bgyellow}OPEN & co-RE-c \cite{kim2025undecidabilitytranslationaltiling2} & co-RE-c ($\leftarrow$) \\
    \rowcolor{bgred}
    \cellcolor{white}3 & \cellcolor{bggreen}R ($\downarrow$)& co-RE-c (Cor.~{\ref{cor:threepolyomino-translation}}) & co-RE-c ($\leftarrow$, Cor.~\ref{cor:threepolycube-translation}) & co-RE-c ($\leftarrow$) \\
    \rowcolor{bgred}
    \cellcolor{white}$n$ & \cellcolor{bggreen}R \cite{greenfeld2025undecidability} & co-RE-c ($\uparrow$) & co-RE-c ($\uparrow$) & co-RE-c ($\uparrow$) \\
    \bottomrule
  \end{tabular}
  \caption{Current records for complexity of tiling space by translation only with possibly disconnected polycubes.
    "R" denotes decidability, and "co-RE-c" denotes co-RE-completeness.}%
  \label{tab:tiling-translation}
\end{table}

\subsection{Our Results: Periodic Graphs}

\begin{figure}
  \centering
  \includegraphics[width=0.5\textwidth]{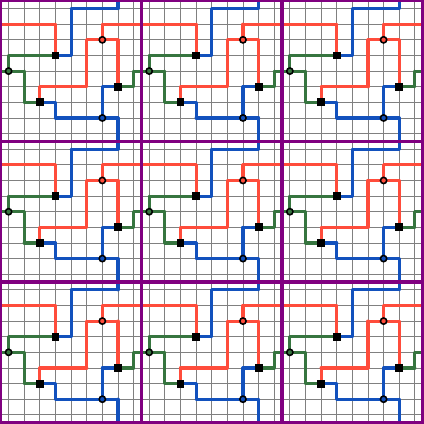}
  \caption{A planar periodic orthogonal drawing of a local 2D periodic graph,
    with 6 protovertices and 9 protoedges.}%
  \label{fig:periodicgraph-example}
\end{figure}

All of our tiling results are based on new results about periodic
(infinite) graphs, developed in Sections~\ref{sec:periodicgraphs},
\ref{sec:periodic-reductions}, and \ref{sec:periodic-algs}.
Conceptually, a $d$D periodic graph is defined by a fundamental domain,
which gets repeated by translation in $d$ directions;
see Figure~\ref{fig:periodicgraph-example}.
More precisely (and abstractly), a \defn{$d$D periodic graph} is defined
by a set of \defn{protovertices}, where each protovertex gets repeated
for every point in the $d$D integer lattice,
and a set of \defn{protoedges}, which specify not only which two protovertices
to connect but also the distance vector in the integer lattice
of the actual vertices to connect (again periodically).
In most cases, we assume the graph is \defn{local},
meaning that the distance vector of each protoedge has $L_1$ norm at most~$1$,
i.e., it connects protovertices in the same or neighboring lattice points.

Such graphs were originally studied by Orlin in the 1D case
\cite{Orlin1981_STOC, orlin1983matchingI, orlin1983matchingII, Orlin1984}
and then by others in 2D and higher.
(In some cases, periodic graphs are called "dynamic graphs"
\cite{Iwano1987,IwanoSteiglitz1987,IwanoSteiglitz1988planarity}.)
Table~\ref{tab:graphs} summarizes the history of these results.
Notably useful for our work is that Periodic SAT is
PSPACE-complete in 1D periodic graphs \cite{Orlin1981_STOC} and
undecidable in 2D periodic graphs \cite{Freedman1998ksat,MaratheHuntRosenkrantzStearns1998}.
We give careful proofs of these prior results
so that this paper can stand alone as
an introduction to periodic graphs.

\begin{table}
  \def\complete{c\xspace}
  \def\PseudoPolynomial{pseudo-P\xspace}
  \def\Polynomial{P\xspace}
  \def\Linear{linear $\Rightarrow$ P\xspace}
  \centering
  \begin{tabular}{>{\hangindent=1.5em} L{2.2in} L{1.25in} L{1.2in} L{1.2in}}%
    \toprule
    Problem & 1D & 2D & $d$D\\
    \midrule
    2SAT & \Polynomial \cite{MaratheHuntStearnsRosenkrantz1995} & decidable \cite{Freedman1998ksat} & \cellcolor{bgblue}\Polynomial (Thm.~\ref{thm:periodic-2sat}) \\
    (Dual) Horn SAT & \Polynomial ($\rightarrow$) & \Polynomial \cite{MaratheHuntRosenkrantzStearns1998} & \cellcolor{bgblue}\Linear (Thm.~\ref{thm:periodic-horn}) \\
    3SAT & PSPACE-\complete \cite{Orlin1981_STOC} & undecidable \cite{Freedman1998ksat,MaratheHuntRosenkrantzStearns1998} & \cellcolor{bgblue}co-RE-\complete (Thm.~\ref{thm:periodic-3sat}) \\
    3SAT, wide & EXPSPACE-\complete \cite{MaratheHuntRosenkrantzStearns1998} & undecidable \cite{Freedman1998ksat} & \cellcolor{bgblue}co-RE-\complete (Thm.~\ref{thm:periodic-3sat}) \\
    Planar 3SAT & PSPACE-\complete \cite{MaratheHuntStearnsRosenkrantz1995} & \cellcolor{bgblue}co-RE-\complete (Thm.~\ref{thm:planar-3sat}) & \cellcolor{bgblue}co-RE-\complete (Thm.~\ref{thm:planar-3sat}) \\
    Planar 3SAT-3 & \cellcolor{bgblue}PSPACE-\complete (Thm.~\ref{thm:planar-3sat}) & \cellcolor{bgblue}co-RE-\complete (Thm.~\ref{thm:planar-3sat}) & n/a \\
    Planar 1-in-3SAT-3 & \cellcolor{bgblue}PSPACE-\complete (Thm.~\ref{thm:planar-1-in-3sat}) & \cellcolor{bgblue}co-RE-\complete (Thm.~\ref{thm:planar-1-in-3sat}) & n/a \\
    3DM (3D Matching) & PSPACE-\complete \cite{Orlin1981_STOC} & \cellcolor{bgblue}PSPACE-\complete (Thm.~\ref{thm:planar-3dm}) & \cellcolor{bgblue}PSPACE-\complete (Thm.~\ref{thm:planar-3dm}) \\
    Planar 3DM & \cellcolor{bgblue}PSPACE-\complete (Thm.~\ref{thm:planar-3dm}) & \cellcolor{bgblue}co-RE-\complete (Thm.~\ref{thm:planar-3dm}) & n/a \\
    \midrule
    Planar trichromatic graph orientation & \cellcolor{bgblue}PSPACE-\complete (Thm.~\ref{thm:planar-graph-orientation}) & \cellcolor{bgblue}co-RE-\complete (Thm.~\ref{thm:planar-graph-orientation}) & n/a \\
    Independent set & PSPACE-\complete \cite{Orlin1981_STOC} & PSPACE-\complete ($\leftarrow$) & PSPACE-\complete ($\leftarrow$) \\
    Hamiltonian cycle & PSPACE-\complete \cite{Orlin1981_STOC} & PSPACE-\complete ($\leftarrow$) & PSPACE-\complete ($\leftarrow$) \\
    Bipartiteness, wide & \Polynomial \cite{Orlin1984} & \Polynomial \cite{Iwano1987} & \Polynomial \cite{CohenMegiddo1991} \\
    3-coloring & PSPACE-\complete \cite{Orlin1981_STOC} & undecidable \cite{Freedman1998ksat} & undecidable ($\leftarrow$) \\
    Cycle detection, wide & NC $\Rightarrow$ \Polynomial \cite{CohenMeggido1993} & NC $\Rightarrow$ \Polynomial \cite{CohenMeggido1993} & NC $\Rightarrow$ \Polynomial \cite{CohenMeggido1993} \\
    Connected components, wide & \Polynomial \cite{Orlin1984} & \Polynomial \cite{Iwano1987} & \Polynomial \cite{CohenMegiddo1991} \\
    Strongly connected components, wide & \Polynomial \cite{Orlin1984} & \Polynomial \cite{CohenMeggido1993,KodialamOrlin1991} & \Polynomial \cite{CohenMeggido1993,KodialamOrlin1991} \\
    Reachability & \Polynomial \cite{HoftingWanke1993} & \Polynomial \cite{HoftingWanke1993} & \Polynomial \cite{HoftingWanke1993} \\
    Reachability, wide & NP-\complete \cite{Orlin1984} & NP-\complete \cite{HoftingWanke1993} & NP-\complete \cite{HoftingWanke1993} \\
    Shortest paths & \Polynomial \cite{HoftingWanke1993} & \Polynomial \cite{HoftingWanke1993} & \Polynomial \cite{HoftingWanke1993} \\
    Shortest paths, wide & NP-\complete \cite{Orlin1984} & NP-\complete \cite{HoftingWanke1993} & NP-\complete \cite{HoftingWanke1993} \\
    Eulerian path & \Polynomial \cite{Orlin1984} & \Polynomial \cite{Iwano1987} \\
    Minimum spanning tree, wide & \Polynomial \cite{Orlin1984} & \Polynomial \cite{CohenMegiddo1991} & \Polynomial \cite{CohenMegiddo1991} \\
    Planarity testing \small (avoiding accumulation points) & \Polynomial \cite{IwanoSteiglitz1988planarity} & \Polynomial \cite{IwanoSteiglitz1988planarity} \\
    Weighted matching & \Polynomial \cite{orlin1983matchingI,orlin1983matchingII} \\
    Bipartite perfect matching, wide & \Polynomial ($\uparrow$, Thm.~\ref{thm:matchingalgorithm}) & \cellcolor{bgblue}\Polynomial (Thm.~\ref{thm:matchingalgorithm}) & \cellcolor{bgblue}\Polynomial (Thm.~\ref{thm:matchingalgorithm}) \\
    \bottomrule
  \end{tabular}
  \caption{Past and new results about $d$D periodic SAT and graphs,
    for fixed $d$.
    These results generally assume locality (or polynomial locality),
    except for problems marked ``wide''. Our results are highlighted in blue.
    }
  \label{tab:graphs}
\end{table}

More interestingly, we prove the following new results about periodic graphs
and related periodic satisfiability problems:

\paragraph{Complexity Results}
\Needspace{5\baselineskip}%
\begin{enumerate}
\item We characterize the complexity of 2D (and higher dimensional) periodic SAT
   as \defn{co-RE-complete}, not just undecidable (Theorem~\ref{thm:periodic-cnf-sat}).
   Our proof of membership in co-RE uses a weak form of compactness that
   holds under standard ZF axioms (without the Axiom of Choice).
\item Many \emph{planar} problems are hard on periodic graphs
   (PSPACE-complete in 1D and co-RE-complete in 2D):
   Planar 3SAT-3, Planar 1-in-3SAT, Planar 3DM, and a new problem we introduce
   called Planar Trichromatic Graph Orientation;
   see Section~\ref{sec:periodic-reductions}.
\end{enumerate}

\paragraph{Algorithmic Results}
\Needspace{5\baselineskip}%
\begin{enumerate}[start=3]
 \item These planar hardness results are built on generalizations of classic results
   in graph drawing to periodic graphs:
   \begin{enumerate}
   \item Every periodic graph has a periodic "nice" drawing in the plane,
      where edges do not pass through vertices except their endpoints,
      and edges cross orthogonally and only in pairs (Theorem~\ref{thm:orthocrossing-drawing}).
      This result enables planarizing any problem
      given an (orthogonal) crossover gadget.
   \item Every periodic crossing-free drawing of a maximum-degree-4 periodic graph
      has an orthogonal crossing-free drawing (Lemma~\ref{lem:orthodraw}).
      This result allows us to draw intermediate reductions as general graphs,
      while still guaranteeing an orthogonal result
      (as needed by polyomino tiling).
   \end{enumerate}
 \item Contrasting with the 3SAT hardness results,
   we show that both 2SAT and Horn-SAT can be solved in polynomial time on $d$D periodic graphs (Sections~\ref{sec:2sat} and~{\ref{sec:horn}}). 
 \item Deciding whether a given $d$D periodic bipartite graph has a \emph{perfect matching} can be
   solved in polynomial time (Theorem~\ref{thm:matchingalgorithm}), and in positive instances,
   there is always a periodic perfect matching with the same period (Theorem~\ref{thm:periodicmatching}).
   This problem generalizes tiling with dominoes \D
   in a periodic polycube subspace of $d$D.
   The core approach is the following:
   \begin{enumerate}
   \item We show the existence of \emph{finite} augmenting paths,
      at least when the graph has a perfect matching.
      This follows from a packing argument.
   \item We show further that the \emph{shortest} augmenting path
      does not repeat a vertex, which allows us to augment the path
      while preserving periodicity of the matching.
      Effectively, we show how to simplify crossings between
      translates of the augmenting path.
   \end{enumerate}
\end{enumerate}

\hide{
  ## References on periodic graphs:
  \begin{itemize}
  \item "The complexity of dynamic languages and dynamic optimization problems" by James Orlin STOC 1981 \cite{Orlin1981_STOC}. https://dl.acm.org/doi/10.1145/800076.802475 looks at 1D periodic graphs. He proves that 1D satisfiability, coloring, 3DM are all PSPACE complete, essentially the same proof we give here in more details. Also defines "narrow" for both graphs and SAT.
  \item Dynamic matchings and quasidynamic fractional matchings. I, 1983 \cite{orlin1983matchingI}, James B. Orlin https://doi.org/10.1002/net.3230130407 and  Dynamic matchings and quasidynamic fractional matchings. II 1983 \cite{orlin1983matchingII} James B. Orlin https://doi.org/10.1002/net.3230130408 prove that (fractional) matching on a 1D periodic graph can be solved in polynomial time.
  \item "Some Problems on Dynamic/Periodic Graphs" by James B. Orlin 1984 \cite{Orlin1984} https://doi.org/10.1016/B978-0-12-566780-7.50022-2 looks at 1D periodic graphs. He shows number of connected components is in P but reachability is NP-complete. He also studies other properties of graphs like Eulerian paths and bipartiteness. 
  \item "Some Problems on Doubly Periodic Infinite Graphs" by Kazuo Iwano, 1987, https://www.cs.princeton.edu/techreports/1987/078.pdf \cite{Iwano1987} presents polynomial time algorithms for connectivity, Eulerian paths and bipartiteness for 2D periodic graphs.
  \item "Planarity testing of doubly periodic infinite graphs" by Kazuo Iwano, Kenneth Steiglitz, 1988 \cite{IwanoSteiglitz1988planarity}, https://doi.org/10.1002/net.3230180307 . Includes some nice figures of example periodic graphs. Main result [Theorem 6.1] is that (after dealing with disconnected or nonplanar cells) the finite graph given by 3x3 of cells is planar if and only if the infinite periodic graph is VAP-free (no accumulation points of vertices) planar, so testable in linear time. Does this imply periodic drawings with some period?? (probably not obviously)
  \item "Recognizing Properties of Periodic Graphs" by Edith Cohen and Nimrod Megiddo, 1991 \cite{CohenMegiddo1991} https://theory.stanford.edu/~megiddo/pdf/RecognizingX.pdf considers $d$D periodic graphs. Has a good survey of previous results. They show a polynomial time algorithm for connectivity (and connected components) in $d$D, as well as for Bipartiteness, and minimum average weight spanning tree.
  \item "Strongly Polynomial-Time and NC Algorithms for Detecting Cycles in Periodic Graphs" by Edith Cohen and Nimrod Megiddo, 1993 https://dl.acm.org/doi/pdf/10.1145/153724.153727
  \item "Paths and cycles in finite periodic graphs" by Egon Wanke \cite{Wanke-2005}  MFCS 1993 shows that reachability is in PSPACE in $d$D, PSPACE-complete in 2D, and NP-complete in 1D.
  \item "Theory of Periodically Specified Problems: Complexity and Approximability" CCC 1998 \cite{MaratheHuntRosenkrantzStearns1998} shows a bunch of 1D and 2D complexity results for variants of 3SAT, including undecidability for local 2D. They note that local substitutions of gadgets for reducions work.
  \item "$k$-{SAT} on Groups and Undecidability" \cite{Freedman1998ksat} by Freedman, STOC 1998, proves undecidability of 2D periodic SAT and 2D periodic 3-coloring, using Wang tiles like us.
  \item "Periodic colorings and orientations in infinite graphs" by Tara Abrishami, Louis Esperet, Ugo Giocanti, Matthias Hamann, Paul Knappe, Rögnvaldur G. Möller https://arxiv.org/abs/2411.01951 looks at infinite graphs with symmetries, not necessarily periodic graphs in our sense. We would solve some of their open problems if we add the coloring results. 
  \end{itemize}}%

\section{Periodic Graphs \label{sec:periodicgraphs}}

In this section, we formally define periodic graphs
(similar to past work such as \cite{Orlin1984,CohenMegiddo1991,HoftingWanke1993,MaratheHuntRosenkrantzStearns1998}),
while introducing our own notation which we view as slightly cleaner.
We also define periodic drawings, and prove basic results about graph drawing.

A \defn{$d$-dimensional periodic graph} is an infinite graph
$\PER{G} = (\PER{V}, \PER{E})$
induced by a finite set $V$ of \defn{protovertices}
and a finite set $E$ of \defn{protoedges}.%
\footnote{We pronounce $\PER{X}$ as ``periodic $X$''. Note that we assume $V$ is finite, which we might call ``finite period''. We do not allow, for example, $V$ to be the unit square $[0,1]^2$ for a periodic graph over the Euclidean plane.}
\ 
Each protovertex $v \in V$ induces a countably infinite grid of vertices,
$v^{\vec x} \in \PER{V}$ for each lattice point $\vec x \in \mathbb Z^d$:
$$\PER{V}= \{v^{\vec x} \mid v\in V, \vec x \in \mathbb Z^d\}.$$
Each protoedge $(u^{\vec x},v^{\vec y}) \in E$ is between two vertices
(not protovertices), and induces a countably infinite set of edges
$(u^{\vec x + \vec \Delta}, v^{\vec y + \vec \Delta})$
for all offset vectors $\vec \Delta \in \mathbb Z^d$:
$$
  \PER{E} = \left\{
    (u^{\vec x + \vec \Delta}, v^{\vec y + \vec \Delta})
    \,\middle|\,
    (u^{\vec x},v^{\vec y}) \in E,
    \vec \Delta \in \mathbb Z^d
  \right\}.
$$


A \defn{labeling} of the vertices $\PER{V}$ of the periodic graph $\PER{G}$ is a mapping $\lambda : \PER{V} \rightarrow L$ from the vertices to some label set $L$, such as integers, colors, or strings.
Such a labeling is \defn{$k$-periodic} if $\lambda(v^{\vec x}) = \lambda(v^{\vec x + k\vec\Delta})$ for all $v^{\vec x}\in\PER{V}$ and $\vec\Delta\in\mathbb Z^d$.
Likewise, a \defn{labeling} $\lambda : \PER{E} \rightarrow L$ of the edges $\PER{E}$ is \defn{$k$-periodic} if $\lambda((u^{\vec x},v^{\vec y})) = \lambda((u^{\vec x + k\vec\Delta},v^{\vec y + k\vec\Delta}))$ for all $(u^{\vec x},v^{\vec y})\in\PER{E}$ and $\vec\Delta\in\mathbb Z^d$.
A 1-periodic labeling of vertices or edges of $\PER{G}$ is equivalent to a labeling of the protovertices or protoedges (with identical labels on all copies).
Labelings provide the formalism to talk about structures on top of graphs,
such as colorings, weights, and subsets (where the labeling indicates membership
in the subset) of vertices or edges (e.g., matchings).

The periodic graph is \defn{local} if every protoedge $(u^{\vec x},v^{\vec y}) \in E$ spans an $L_1$ distance $\|\vec x - \vec y\|_1$ of at most $1$.
More generally, a graph is \defn{$k$-local} if every protoedge $(u^{\vec x},v^{\vec y}) \in E$ has $\|\vec x - \vec y\|_1 \leq k$.

If a periodic graph is $k$-local, then there is an equivalent ($1$-)local graph by expanding the prototile to $k \times k$ original prototiles.
So algorithmically, the only difference between local and $k$-local is when $k$ is larger than a polynomial.
Then, for example, reachability becomes weakly NP-hard \cite{HoftingWanke1993} and SAT becomes EXPSPACE-hard \cite{MaratheHuntRosenkrantzStearns1998}.
Prior works (e.g., \cite{Orlin1981_STOC,MaratheHuntRosenkrantzStearns1998}) use the term \defn{narrow} for local or polynomially local periodic graphs, and \defn{wide} when the locality is not bounded by a polynomial.


\subsection{Periodic Drawings and Planarity}

A \defn{periodic drawing} of a 1D or 2D periodic graph is defined by a mapping from each protovertex $v \in V$ to a distinct point $\vec p_v$ in the open unit square $(0,1)^2$ (excluding the boundary), and a mapping from each protoedge $(u^{\vec x},v^{\vec y}) \in E$ to a simple curve that connects point $\vec p_u + \vec x$ to point $\vec p_v + \vec y$ without visiting any other points $\vec p_w + \vec z$ corresponding to a vertex $w^{\vec z} \in \PER{V}$.
Then, for $\vec \Delta \in\mathbb Z^2$ (padding with a zero $y$ coordinate for a 1D graph), the vertex $v^{\vec \Delta} \in \PER{V}$ has coordinates $\vec p_v + \vec \Delta$, and the edge $(u^{\vec x + \vec\Delta},v^{\vec y + \vec\Delta}) \in \PER{E}$ is mapped to the curve representing protoedge $(u^{\vec x},v^{\vec y})$ translated by $\vec\Delta$.
A periodic drawing is \defn{local} if the curve representing each edge $(u^{\vec x},v^{\vec y})$ is contained in the union of the unit grid squares containing $\vec p_u$ and $\vec p_v$ (and their shared boundary if the squares are distinct);
this property implies that the periodic graph is local.

In this paper, all edges will be drawn as polygonal chains (or straight line segments).
A periodic drawing is \defn{orthogonal} if every edge is drawn as a polygonal chain composed of horizontal and vertical segments.
A periodic drawing is \defn{orthocrossing} if,
whenever two drawn edges meet at a common point other than a shared endpoint,
it is a (proper) crossing between a horizontal and vertical segment
of the respective edges.
In particular, this notion forbids edges from touching without crossing,
and forbids three edges meeting at a point other than a shared endpoint.

When all points defining the drawing (graph vertices and corners of the polygonal chains representing edges, and for orthocrossing drawings, all orthogonal crossing points) have rational coordinates, the \defn{grid size} of the periodic drawing is the size of the grid on which those points lie. That is, assuming all coordinates are of the form $z/M$ where $z$ and $M$ are integers, and using the same denominator $M$ for all points, the grid size is $M$.

\begin{theorem}\label{thm:orthocrossing-drawing}
  Every 1D or 2D local periodic graph has an orthocrossing local periodic drawing of grid size $O(|V|+|E|)$.
  Furthermore, if the graph has maximum degree $4$, then the drawing can be orthogonal.
\end{theorem}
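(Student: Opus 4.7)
The plan is to place each protovertex at a distinct point of a fine rational grid inside the unit square $(0,1)^2$, reserve unique horizontal-row and vertical-column "channels" for each protoedge, and draw each protoedge as an orthogonal polyline that passes through its channels. Setting the grid resolution $M=\Theta(|V|+|E|)$ leaves enough room for $|V|$ distinct protovertex coordinates on each axis and a constant number of channel coordinates per protoedge, all pairwise distinct, which yields the claimed grid-size bound.

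For a within-cell protoedge $e=(u^{\vec 0},v^{\vec 0})$, I would use a standard three-bend Z-route from $\vec p_u$ vertically to its assigned row $r_e$, horizontally to its assigned column $c_e$, vertically to the row of $\vec p_v$, then horizontally to $\vec p_v$; the whole curve stays inside cell $\vec 0$. For a cross-cell protoedge, say $e=(u^{\vec 0},v^{\vec e_1})$, I would reserve one vertical channel in each of the two cells $\vec 0$ and $\vec e_1$, ordered so that the horizontal bridge linking them at row $r_e$ has length strictly less than one unit. This ordering is crucial: any horizontal segment of length $\ge 1$ would overlap its own integer translate, violating the drawing definition. With the bridge shorter than one unit, the curve stays inside the two cells containing the edge's endpoints, and all integer translates of each segment are disjoint from the original. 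The orthocrossing property then follows almost for free: all chosen row and column channels are pairwise distinct modulo $1$, so any intersection of segments from two distinct protoedges (or their translates) must occur where one edge's horizontal channel meets another edge's vertical channel, which is a proper orthogonal crossing.

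The main obstacle, and the only place the degree-$4$ hypothesis appears, is the routing immediately adjacent to each protovertex. In the orthogonal (degree-$\le 4$) case, I would partition the incident protoedges at $\vec p_u$ among the four cardinal directions N, S, E, W, with at most one edge per direction, and modify each edge so that its first leg leaves $\vec p_u$ in its assigned direction before joining its $(r_e,c_e)$ channels; this is possible precisely because the vertex degree is at most $4$. For the general orthocrossing case with no degree bound, I would replace this first leg by a small \emph{diagonal fan}: inside a disk of radius $\Theta(1/M)$ around $\vec p_u$, each incident protoedge leaves along its own dedicated diagonal ray (pairwise distinct), and transitions to its orthogonal channel routing outside the disk. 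These diagonal stubs meet only at the shared endpoint $\vec p_u$, which is permitted by the orthocrossing definition, and taking the constant in $M=\Theta(|V|+|E|)$ large enough keeps the disks disjoint from every other segment of the drawing.

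Applying every $\mathbb Z^d$-translation ($d=1$ or $2$) to the protoedge curves then yields the full periodic drawing, which is local by construction (each protoedge's curve lies in the one or two cells of its endpoints) and orthocrossing by the channel argument, and is fully orthogonal under the degree bound. The 1D case is an immediate specialization, since only horizontal neighbors exist, so only row channels and an H-only version of the cross-cell routing are needed.
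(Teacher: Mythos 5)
Your proposal is correct and follows essentially the same strategy as the paper's proof: assign each protovertex/protoedge its own dedicated grid coordinates (your ``channels'' correspond to the paper's per-port $x$-coordinates and per-edge $y$-coordinate pairs), route cross-cell edges so the boundary-crossing segment has length strictly less than one to avoid overlapping its own translate (the paper's ``rightmost port goes rightward'' and ``gate'' rules), and confine the non-orthogonal parts at high-degree vertices to a small neighborhood (the paper's straight vertex-to-port fan). The only cosmetic difference is that you spell out the horizontal cross-cell case and leave the vertical one to symmetry, whereas the paper treats both explicitly.
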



\begin{proof}
  Let $d(v)$ be the degree of protovertex $v$ in the periodic graph $\PER{G}$, which is the total number of occurrences of protovertex $v$ as either endpoint of an edge in $E$ (counting twice when the edge involves $v$ in both endpoints).
  We will draw each vertex $v$ as well as $d(v)$ \defn{ports} for connecting
  to the incident edges; see Figure~\ref{fig:ports}.

  \begin{figure}[bt]
    \centering
    \includegraphics[width=0.4\textwidth]{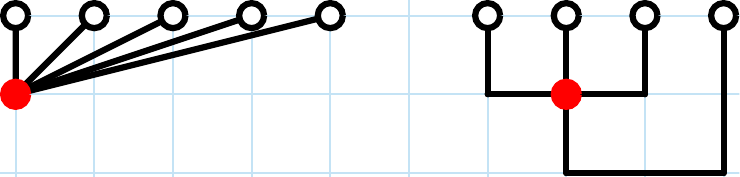}
    \caption{A vertex with degree $5$ connected to its $5$ ports, and a vertex of degree $4$ connected to its $4$ ports orthogonally.}%
    \label{fig:ports} 
  \end{figure}

  \begin{figure}
    \centering
    \includegraphics[width=0.5\textwidth]{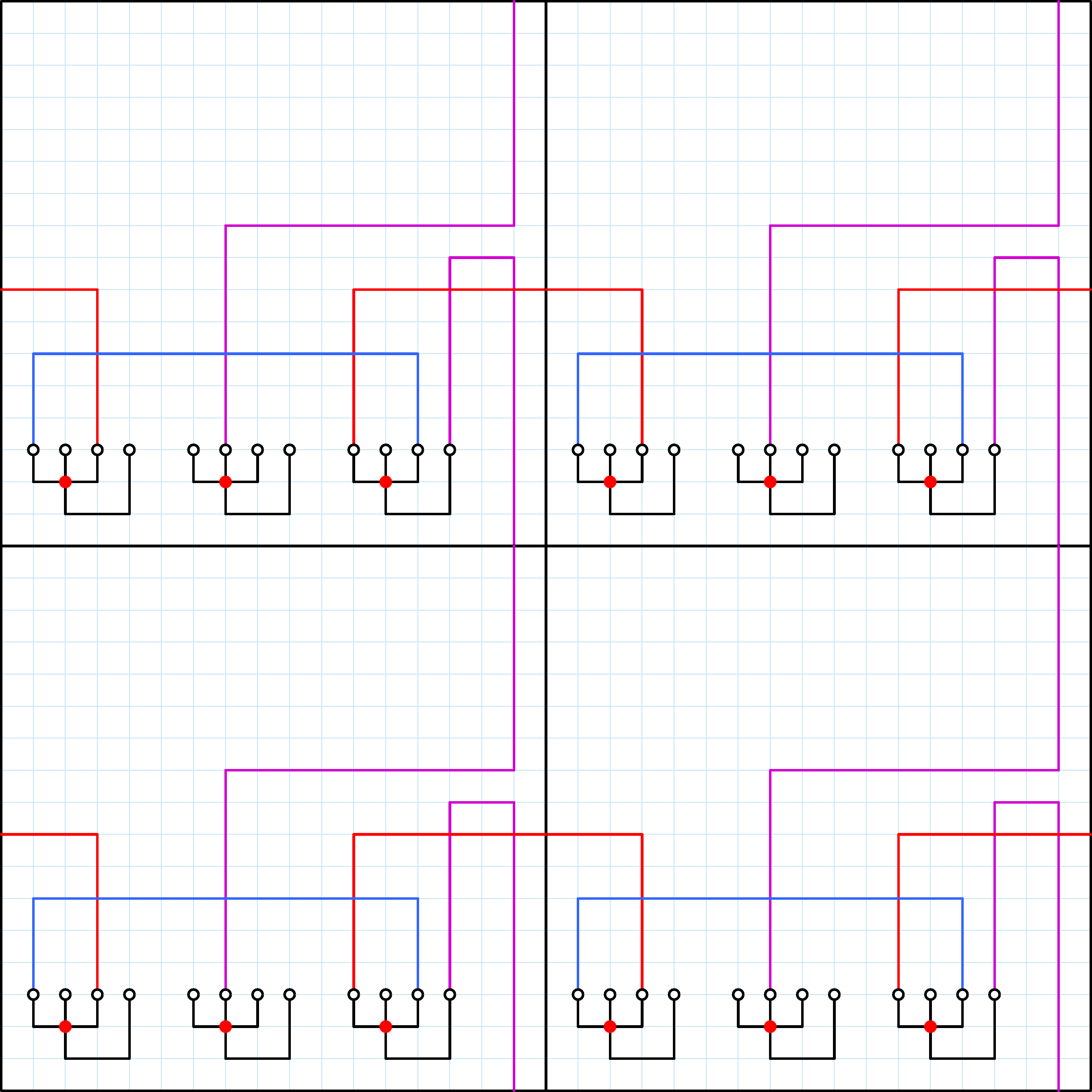}
    \caption{An orthocrossing local periodic drawing of a local periodic graph.}%
    \label{fig:orthocrossing}
  \end{figure}

  \begin{itemize}
  \item Let $M=5\,|V|+2\,|E|$.
  \item Initialize $a$ to $1$.
  \item For each protovertex $v \in V$:
    \begin{itemize}
    \item Increment $a$ by $1$.
    \item Place protovertex $v$ at coordinates $((a+1)/M, 2/M)$.
    \item For each occurrence of $v$ in an edge $e$:
      \begin{itemize}
      \item Place port $p$ for that occurrence of $v$ in $e$ at coordinates $(a/M, 3/M)$.
      \item Draw an edge from $v$ to $p$. If the graph is of maximum degree 4, route orthogonally, as in Figure~\ref{fig:ports}; otherwise, use a straight line segment.
      \item Increment $a$ by $1$.
      \end{itemize}
    \end{itemize}
  \item Initialize $b$ to $4$.
  \item For each protoedge $e = (u^{\vec x},v^{\vec y}) \in E$:
    \begin{itemize}
    \item Increment $b$ by $2$.
    \item Let $p$ and $q$ be the ports for these occurrences of $u$ and $v$ in $e$, respectively.
    \item If $\vec x = \vec y$, then draw $e$ by drawing vertical segments upward from $p$ and from $q$ to height $b/M$, and connecting these points horizontally.
    \item If $\vec y = \vec x \pm (1,0)$, then draw vertical segments upward from $p$ and from $q$ to height $b/M$, and connect these points horizontally, to draw $e$. But there are two possible horizontal connections, and we choose to go from the rightmost of $p$ and $q$ rightward (into the next unit square) to the leftmost of $p$ and $q$. This choice prevents this polygonal path from crossing translated copies of itself.
    \item If $\vec y = \vec x \pm (0,1)$, then
      \begin{itemize}
      \item Assume $\vec y = \vec x + (0,1)$ (or swap the vertices).
      \item Increment $a$ by $1$.
      \item Place a \defn{gate} $g$ at coordinates $(a/M, 1)$ (at the top of the unit square, which has a translated copy $(a/M,0)$ at the bottom of the unit square).
      \item Draw $e$ by drawing vertical segments upward from $p$ and from $q$ to heights $b$ and $b+1$ respectively, connecting these points horizontally to $x$ coordinate $a/M$, and then connecting these points vertically through gate $g$. But there are two possible vertical connections, and we choose to go from height $b+1$ up to height $b$. This choice prevents this polygonal path from crossing translated copies of itself.
      \end{itemize}
    \end{itemize}
  \end{itemize}
  
  Figure~\ref{fig:orthocrossing} shows an example including each of the three cases. By construction, the result is an orthocrossing local periodic drawing of grid size $M$.
  In particular, assigning each protovertex occurrence to its own $x$ coordinate and each protoedge to its own pair of $y$ coordinates guarantees that edges only cross orthogonally and hence no three edges cross at a common point.
\end{proof}

A periodic drawing is \defn{planar} if edges do not intersect except at shared endpoints.
For local periodic drawings with edges embedded as polygonal chains or segments, this noncrossing property can be checked
in polynomial time by checking for intersections
within the origin square $[0,1]^2$,
which must come from edges whose endpoints $v^{\vec x}$
have $\|\vec x\|_1 \leq 1$.
(More generally, for $k$-local graphs,
it suffices to consider vertices $v^{\vec x}$ where $\|\vec x\|_1 \leq k$.)
If there are no crossings within $[0,1]^2$, then by periodicity,
there are no crossings anywhere.

A periodic graph is \defn{planar} if it has a periodic planar drawing.
This property can be checked in polynomial time \cite{IwanoSteiglitz1988planarity}.

\subsection{Orthogonal Planar Drawings of Periodic Graphs}

Although Theorem~\ref{thm:orthocrossing-drawing} gives an orthogonal drawing for
max-degree-4 graphs, it may have crossings.
Next we show that any \emph{planar} drawing of a max-degree-4 graph
can be made \emph{orthogonal and planar}, with quadratic blowup on the grid size.
This lets us use planar drawings (e.g., for gadgets in reductions),
while guaranteeing that they can be made orthogonal.

\begin{lemma}\label{lem:orthodraw}
  For a periodic graph of maximum degree $4$, every planar periodic graph drawing of grid size $M$ can be converted into a planar orthogonal periodic drawing, preserving the position of the graph vertices, of grid size $O(M^2)$.
\end{lemma}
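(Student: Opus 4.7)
The plan is to refine the grid by a factor of $\Theta(M)$, yielding a drawing grid of size $O(M^2)$, and then reroute every edge along an orthogonal staircase that stays inside a narrow tubular neighborhood of the original polygonal chain. On the refined grid, each protovertex and each bend point of the original drawing still lies at a grid point, and around each protovertex there are $\Theta(M)$ refined grid units of breathing room in every direction. Periodicity is preserved automatically, since the refinement and rerouting are defined from the protovertices and protoedges and translated copies of the construction agree.

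The first step handles each protovertex $v$. Because the graph has maximum degree $4$, we can assign each of the (at most four) edges incident to $v$ to one of the four compass directions $\{N, E, S, W\}$ in a way that respects the cyclic order of edges around $v$ induced by the original planar drawing (choosing directions that most closely match the tangent directions at which the incident polygonal chains leave $v$). We then place short orthogonal \emph{stubs} at $v$ in the assigned directions, with equal lengths and with distinct refined-grid rows and columns, entirely within the breathing-room box around $v$. These stubs are pairwise non-crossing by construction.

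The second step reroutes each protoedge $e$. We trace its original polygonal chain between the two stub endpoints and replace it with an orthogonal staircase on the refined grid that $\epsilon$-approximates the chain in the Hausdorff sense, where $\epsilon$ is chosen to be less than half the minimum distance between any two distinct edges of the original drawing. The minimum distance between two disjoint line segments with endpoints on a grid of size $M$ is at least $\Omega(1/M)$ (the classical lattice-distance argument via the cross-product formula for point-to-line distance), so the minimum inter-edge distance in the full polygonal drawing is also $\Omega(1/M)$, and such an $\epsilon$ can be realized on a grid of size $O(M^2)$. Since each new staircase remains inside the $\epsilon$-tube of a single original edge and these tubes are pairwise disjoint, the staircases are pairwise non-crossing, and each one touches only its two intended endpoint vertices.

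The main obstacle is verifying that a staircase of width $\epsilon$ can indeed be drawn on the refined grid while reaching the correct stub endpoints in the correct compass directions. The transition between the long orthogonal body of the staircase and the short vertex stub is the delicate piece: the direction of the first (respectively last) orthogonal segment of the staircase must match the compass direction of the stub at that endpoint. Since the cyclic-order assignment of stubs was chosen to match the direction the original chain leaves each endpoint, this matching can be realized locally inside the breathing-room box around each endpoint. The remaining issues are routine and reduce to picking the constants in the $\Theta(M)$ refinement factor carefully.
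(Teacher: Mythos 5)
Your approach is genuinely different from the paper's: you keep each rerouted edge metrically close to the original (inside an $\epsilon$-tube) and derive non-crossing from disjointness of tubes, whereas the paper's proof is combinatorial --- it sweeps vertical slabs between consecutive vertex $x$-coordinates, re-spaces the edges crossing each slab while preserving their vertical order, and routes each as a horizontal--vertical--horizontal path, with explicit anchor points around each vertex ordered by slope. Unfortunately, your metric argument has a genuine gap precisely where the paper's order-based argument does its work: \emph{edges sharing an endpoint}. Your $\epsilon$ is defined via the minimum distance between \emph{disjoint} edges, so the disjoint-tube argument says nothing about two edges incident to the same vertex $v$. Two such edges can leave $v$ at a mutual angle as small as $\Theta(1/M^2)$ (e.g., directions $(1,1/M)$ and $(1-1/M,1/M)$ have cross product $1/M^2$), so their $\epsilon$-tubes overlap not just ``near $v$'' but along essentially their entire length whenever $\epsilon=\Omega(1/M^2)$. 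Keeping their two staircases from crossing then requires maintaining their relative order all the way along, which is an order-preservation argument you never make; shrinking $\epsilon$ enough to separate the tubes outside a constant number of grid cells around $v$ would force $\epsilon=O(1/M^4)$ and destroy the $O(M^2)$ grid bound.

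Two further quantitative points. First, your claimed $\Omega(1/M)$ lower bound on the distance between disjoint segments with endpoints on a grid of size $M$ is wrong: the cross-product formula gives distance $|\!\det|/|b-a|$ with $\det$ a nonzero multiple of $1/M^2$ and $|b-a|=O(1)$, hence only $\Omega(1/M^2)$ (tight: the point $(1/M,0)$ is at distance $\approx 1/M^2$ from the segment from $(0,0)$ to $(1,1/M)$). This happens not to break the final grid size --- a staircase tracking a segment within $\Theta(1/M^2)$ still lives on a grid of size $O(M^2)$ --- but it does break your stated choice of $\epsilon$. Second, a ``breathing-room box'' of $\Theta(M)$ refined units, i.e.\ side $\Theta(1/M)$, around a vertex $v$ can be crossed by edges not incident to $v$ (which may pass within $\Theta(1/M^2)$ of $v$), so your stubs must be confined to an $O(1/M^2)$-neighborhood of $v$, not the box you describe. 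The vertex-transition step you flag as ``delicate'' is thus not merely routine: it is where the argument currently fails, and fixing it essentially forces you to reintroduce the ordering machinery (anchor points sorted by slope, order-preserving rerouting) that the paper's proof is built on.
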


\begin{proof}
  First, assume all edges in the drawing are straight line segments. For drawings with polygonal chains, we can temporarily add a vertex at each bend of every chain, apply the transformation to the resulting graph, and merge back the resulting orthogonal chains, removing the extra degree-2 vertices.

  We explicitly transform the periodic drawing within the $[0,1]^2$ square into an orthogonal drawing. 
  Let $\epsilon = 1/30M^2$. 
  Associate each protovertex $v$ with the point within the $[0,1]^2$ square at which it is currently drawn, which will not change in our transformation.
  Surround every vertex $v$ with a $5\epsilon \times 10\epsilon$ rectangle centered at $v$.
  These rectangles do not intersect each other and do not intersect any edges as currently drawn.
  We route the edges incident to $v$ through one of 10 anchor points on the boundary of the rectangle: 4 on the left, 4 on the right, 1 on the top, and 1 on the bottom.
  We assign anchor points to edges incident to $v$ as follows: edges connecting $v$ to other vertices with the same $x$ coordinate and smaller or larger $y$ coordinate connect through the bottom and top anchor points, respectively; and edges connecting $v$ to other vertices with smaller or larger $x$ coordinates connect through the left and right anchor points, respectively.
  For edges connecting to vertices with larger $x$ coordinates, the edges connect through the anchor points along the right side of the rectangle, by decreasing order of slope, starting at the top anchor point and continuing downward.
  For edges connecting to vertices with smaller $x$ coordinates, the edges connect through the anchor points along the left side of the rectangle, by decreasing order of slope, starting at the bottom anchor point and continuing upward.

  For any edge intersecting the boundary of the $[0,1]^2$ square, add an extra vertex at the intersection point. This added vertex connects to only one edge within the square, so it serves as its own anchor point.

  Let $v_1,\ldots,v_n$ be the vertices of the periodic drawing (including extra boundary vertices) represented as points in the $[0,1]^2$ square, sorted by increasing $x$ coordinate.
  Consider the vertical line $\ell_i$ with equation $x=x_i$ through each point $v_i$.
  For each vertical slab between $\ell_i$ and $\ell_{i+1}$, let $E_i = \{e_1,\ldots,e_k\}$ be the currently drawn edges intersecting the interior of the slab, sorted by increasing $y$ coordinate $e_1(x_i),\ldots,e_k(x_i)$ of their intersections with $\ell_i$.
  When a vertex $v$ of the graph lies on line $\ell_i$, order the edges of $E_i$ incident to $v$ by increasing slope (matching the bottom-to-top order of their anchor points).
  Likewise, for any vertex $v$ on $\ell_{i+1}$, order the edges of $E_{i+1}$ incident to $v$ by decreasing slope.

  Let $E^+_i\subseteq E_i$ be the set of edges properly crossing $\ell_i$, i.e., edges that are not incident to a vertex lying on $\ell_i$.
  In order to keep the grid size of the drawing small, we move the intersection of those edges with $\ell_i$ to $y$ coordinate $e'_j(x_i)$ in the new drawing. More precisely, let $y^-, y^+ \in[0,1]$ be the $y$ coordinates of vertices lying on $\ell_i$ directly below and above $e_j$, respectively, within the $[0,1]^2$ square if they exist; otherwise, set $y^-=0$ and/or $y^+=1$.
  Set $e'_j(x_i):= y^- + j(y^+-y^-)/M$.
  Because $y^-$ and $y^+$ are coordinates of the original drawing, they are of grid size $M$, that is, they are rationals of the form $z/M$; and because $j<M$, the order of the edge crossings on $\ell_i$ is preserved.
  Therefore the new coordinates will be valid for grid size a multiple of $M^2$.
  For edges not in $E^+_i$, we keep their position $e'_j(x_i):= e_j(x_i)$.

  Because the original edges are noncrossing, the $y$ coordinates $e_1(x_{i+1}),\ldots,e_k(x_{i+1})$ of the intersections of these edges with $\ell_{i+1}$ have the same order as along $\ell_{i}$.

  We now reroute the edges in $E_i$ between the anchor points on the left and right sides of the slab.
  For each edge $e_j \in E_i$, if $e'_j(x_i) \geq e'_j(x_{i+1})$, then we draw a horizontal edge from the anchor point to the point $(x_i + 10 j \epsilon, e'_j(x_i))$, then a vertical edge to $(x_i + 10 j \epsilon, e'_j(x_{i+1}))$, and finally a horizontal edge to the anchor point on the right side of the slab.
  If $e'_j(x_i) > e'_j(x_{i+1})$, then we draw a horizontal edge from the anchor point on the left side of the slab to the point $(x_{i+1} - 10 j \epsilon, e'_j(x_i))$, then a vertical edge to $(x_{i+1} - 10 j \epsilon, e'_j(x_{i+1}))$, and finally a horizontal edge to the anchor point on the right side of the slab.
  Note that the maximum number of edges intersecting any vertical slab is less than $M$, so none of those paths intersect.

  Finally, we connect the vertices to their used anchor points; see Figure \ref{fig:connectors}.
  We also remove the vertices we introduced along the boundary of the unit square and concatenate their two incident edges.
\end{proof}

\begin{figure}
  \centering
  \input{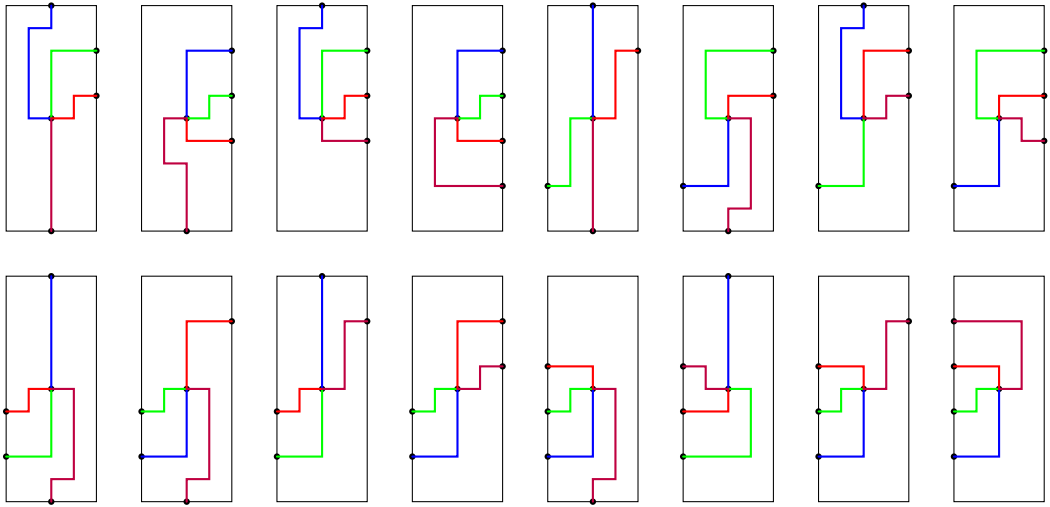}
  \caption{Connecting the vertices to their anchor points.}%
  \label{fig:connectors}
\end{figure}

To simplify our reductions, we may assume that degree-3 vertices do not have an incident edge pointing down.

\begin{lemma}\label{lem:degree3}
  A planar periodic orthogonal graph drawing of grid size $M$ and maximum degree 3 can be transformed so that all vertices of degree 3 have the first segment of their incident edges going left, up, and right.
  Furthermore, each vertex can specify which incident edge should start by going left.
  The new grid size is $O(M)$.
\end{lemma}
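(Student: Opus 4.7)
The plan is to subdivide the given grid by a large enough constant factor $c$ (say $c = 8$), which multiplies the grid size by $O(1)$ and so gives the final bound $O(M)$, and then, inside a small constant-size box around each degree-$3$ protovertex, to replace the drawing by a local gadget that bends the three incident edges into the prescribed initial directions. Outside of these boxes the drawing is left untouched, so the surgery is automatically periodic and independent between different vertices of the fundamental domain. I choose $c$ so that the box $B_v$ around each degree-$3$ protovertex $v$ is (i)~disjoint from every other such box and from every edge not incident to $v$, and (ii)~intersects only the initial segments of the three edges incident to $v$; these three edges then cross $\partial B_v$ at three fixed points, one on each of the three sides of $B_v$ corresponding to the original incidence directions from among U, D, L, R.

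Within each $B_v$ I next determine the required assignment of the three incident edges to the three target ports L, U, R. Because the drawing outside $B_v$ is unchanged and the whole drawing must remain planar, the new routing inside $B_v$ must preserve the cyclic order of the three edges around $v$. The caller's specification of which edge takes the L port, together with this cyclic-order constraint and the fixed clockwise order $(U, R, L)$ of the three target directions around $v$, then forces a unique assignment for the other two edges (to U and to R, respectively).

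Finally, I draw three disjoint orthogonal polygonal paths inside $B_v$ from $v$ to the three fixed exit points on $\partial B_v$, each with its initial segment in the prescribed direction. Up to rotations and reflections of the picture, the number of essentially distinct cases (parameterized by which of the four compass directions is missing at $v$ and by which of the three incident edges is tagged as L) collapses to a small constant, so it suffices to exhibit a routing for each representative. The main obstacle I expect is the case in which cyclic-order consistency forces an edge to depart $v$ in the direction opposite to its exit across $\partial B_v$ (for instance, an edge whose exit lies on the bottom side of $B_v$ must now initially head up from $v$), so that it must loop around $v$ inside $B_v$; this is what forces $c$ to be a moderate constant rather than, say, $2$. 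A concrete orthogonal routing in which the three edges ``spiral'' around $v$ in the order dictated by the cyclic constraint handles even this worst case in constant size. Performing this local surgery at every degree-$3$ protovertex of the fundamental domain and repeating periodically then produces the desired drawing of grid size $O(M)$.
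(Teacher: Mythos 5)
Your proposal is correct and takes essentially the same approach as the paper: refine the grid by a constant factor and perform local surgery in a constant-size box around each degree-3 vertex, rerouting the three incident edges to leave in the directions left, up, and right while preserving their exit points and cyclic order. The only cosmetic difference is that the paper does this in two passes (first redirecting to L/U/R via an explicit case analysis of which compass direction is absent, then applying a separate cyclic-rotation gadget to choose which edge goes left), whereas you fold both into a single gadget determined upfront by the cyclic-order constraint.
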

\begin{proof}
  To achieve the first property, we refine the grid size by a factor 6, and replace the $6\times 6$ square around each vertex according to the diagram in Figure~\ref{fig:degree3}.
  \begin{figure}
    \centering
    \begin{tikzpicture}[scale=0.4, line cap=round]
      \draw[thick] (0,0) rectangle (6,6);
      \fill[gray] (3,6) circle (2pt);   
      \fill (6,3) circle (3pt);         
      \fill (3,0) circle (3pt);         
      \fill (0,3) circle (3pt);         
      \fill (3,3) circle (3pt);         
      \draw[blue , line width=1pt] (3,3) -- (4,3) -- (5,3) -- (6,3);
      \draw[green, line width=1pt] (3,3) -- (2,3) -- (2,2) -- (3,2) -- (3,1) -- (3,0);
      \draw[red  , line width=1pt] (3,3) -- (3,4) -- (2,4) -- (1,4) -- (1,3) -- (0,3);
    \end{tikzpicture}
    \begin{tikzpicture}[scale=0.4, line cap=round]
      \draw[thick] (0,0) rectangle (6,6);
      \fill (3,6) circle (3pt);         
      \fill[gray] (6,3) circle (2pt);   
      \fill (0,3) circle (3pt);         
      \fill (3,0) circle (3pt);         
      \fill (3,3) circle (3pt);         
      \draw[blue , line width=1pt] (3,3) -- (3,4) -- (3,5) -- (3,6);
      \draw[green, line width=1pt] (3,3) -- (2,3) -- (1,3) -- (0,3);
      \draw[red  , line width=1pt] (3,3) -- (4,3) -- (4,2) -- (3,2) -- (3,1) -- (3,0);
    \end{tikzpicture}
    \begin{tikzpicture}[scale=0.4, line cap=round]
      \draw[thick] (0,0) rectangle (6,6);
      \fill (3,6) circle (3pt);         
      \fill (6,3) circle (3pt);         
      \fill (0,3) circle (3pt);         
      \fill[gray] (3,0) circle (2pt);   
      \fill (3,3) circle (3pt);         
      \draw[blue , line width=1pt] (3,3) -- (3,4) -- (3,5) -- (3,6);
      \draw[green, line width=1pt] (3,3) -- (4,3) -- (5,3) -- (6,3);
      \draw[red  , line width=1pt] (3,3) -- (2,3) -- (1,3) -- (0,3);
    \end{tikzpicture}
    \begin{tikzpicture}[scale=0.4, line cap=round]
      \draw[thick] (0,0) rectangle (6,6);
      \fill (3,6) circle (3pt);         
      \fill (6,3) circle (3pt);         
      \fill[gray] (0,3) circle (2pt);   
      \fill (3,0) circle (3pt);         
      \fill (3,3) circle (3pt);         
      \draw[blue , line width=1pt] (3,3) -- (3,4) -- (3,5) -- (3,6);
      \draw[green, line width=1pt] (3,3) -- (4,3) -- (5,3) -- (6,3);
      \draw[red  , line width=1pt] (3,3) -- (2,3) -- (2,2) -- (3,2) -- (3,1) -- (3,0);
    \end{tikzpicture}
    \caption{Replacing the $6\times 6$ square around a degree-3 vertex.}%
    \label{fig:degree3}
  \end{figure}
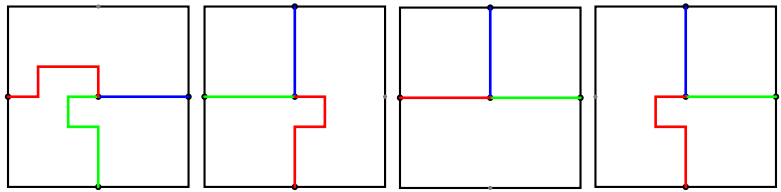
  To achieve the second property, the edges incident to a vertex can be cyclically rotated as shown in Figure~\ref{fig:rotation}.
  \begin{figure}
    \centering
    \begin{tikzpicture}[scale=0.4, line cap=round]
      \draw[thick] (0,0) rectangle (6,6);
    
      \draw[blue, line width=1pt] (6,3) -- (5,3) -- (5,2) -- (4,2) -- (3,2) -- (2,2) -- (2,3) -- (3,3);
      \draw[green, line width=1pt] (0,3) -- (1,3) -- (1,4) -- (2,4) -- (3,4) -- (3,3);
      \draw[red, line width=1pt] (3,6) -- (3,5) -- (4,5) -- (4,4) -- (4,3) -- (3,3);
      \fill (3,3) circle (3pt);  
      \fill (6,3) circle (3pt);  
      \fill (0,3) circle (3pt);  
      \fill (3,6) circle (3pt);  
    \end{tikzpicture}
    \caption{Rotating edges clockwise.}%
    \label{fig:rotation}  
  \end{figure}
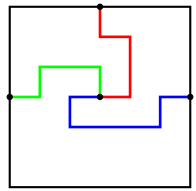
\end{proof}

\section{Periodic Problems and Reductions \label{sec:periodic-reductions}}

In this section, we give a chain of reductions to periodic problems
that establish PSPACE-completeness in 1D and
co-RE-completeness in 2D and higher.

\subsection{Wang Tiling}

For 1D periodic problems, our reduction chain starts from
the canonical PSPACE problem, polynomial-space Turing machine acceptance.
For 2D and higher periodic problems,
our reduction chain starts from a known co-RE-complete problem,
Wang's "domino problem".
An (unsigned) \defn{Wang tile} is a square with a \defn{glue} label on each edge. Given a finite set of Wang tiles, the \defn{domino problem} is to decide whether there exists a tiling of the plane with these tiles such that adjacent tiles have matching glues on their touching edges.
In 1966, Berger proved Wang tiling undecidable:

\begin{theorem}[{\cite{Berger1966}}] \label{thm:wang}
  Given a finite set of Wang tiles, it is co-RE-complete to determine
  whether they tile the plane, by translation only,
  with matching edge glues.
\end{theorem}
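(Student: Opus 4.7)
The plan is to establish both directions: membership in co-RE and co-RE-hardness.

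For membership in co-RE, I would use a compactness argument. If a finite tile set $T$ cannot tile the plane, then there exists some finite $n$ such that $T$ cannot even tile an $n\times n$ square with matching glues: form the tree whose nodes at depth $n$ are valid tilings of the $(2n+1)\times(2n+1)$ square centered at the origin, and whose edges record extension by one concentric layer; this tree is finitely branching, so by K\"onig's lemma, if it is infinite then the plane is tileable. To recognize non-tileability, enumerate $n=1,2,3,\ldots$ and for each $n$ check all $|T|^{(2n+1)^2}$ candidate arrangements in finite time, halting and outputting ``no'' as soon as some $n$ fails. (As noted in the introduction, Demaine--Langerman give a more general form of this argument.)

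For co-RE-hardness, I would reduce from non-halting of a Turing machine $M$ on empty input, in two stages. Stage~1 (seeded version): construct a tile set $T_M$ whose tilings of the upper half-plane, pinned by a designated origin tile, are exactly the space-time diagrams of $M$. Each row encodes the tape at one time step; vertical glues carry tape symbols between time steps, horizontal glues carry the head state and enforce the transition function, and ``halt'' transitions are simply absent from $T_M$ so that a halting configuration is untileable. With the seed fixed, the upper half-plane is tileable iff $M$ never halts. Stage~2 (removing the seed): take the product of $T_M$ with an aperiodic tile set -- Berger's original, or more simply Robinson's $56$-tile construction -- that forces every tiling of the plane to contain a hierarchy of nested $2^k\times 2^k$ squares at every scale $k$, each with a distinguished corner. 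Interpret the corners as seeds and run an independent simulation of $M$ inside each square on an otherwise inert ``simulation track.'' Because these squares occur at unboundedly large sizes in every tiling, the plane is tileable iff $M$'s computation fits inside squares of all scales, iff $M$ never halts.

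The main obstacle is Stage~2: designing the aperiodic skeleton and cleanly interleaving it with the simulation. I would not redevelop the aperiodic construction, instead quoting Robinson's, and focus on the product construction: the aperiodic glues and the simulation glues live on disjoint tracks of a single compound glue, and the constraints act independently, so any tiling simultaneously satisfies both. A secondary subtlety is calibrating the simulation so that halting inside a sub-square propagates a contradiction to the enclosing square (e.g., by letting the simulation spill into a margin whose glue mismatches the enclosing corner), ensuring that a single halting computation at any scale blocks the entire tiling. Once these details are handled, the reduction is a polynomial-time map from $M$ to a Wang tile set, giving co-RE-hardness; combined with the compactness bound above, this gives co-RE-completeness.
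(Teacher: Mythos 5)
The paper does not prove this theorem at all --- it imports it as a black-box citation to Berger, with co-RE membership following from the general framework of Demaine and Langerman cited in the introduction --- so there is no internal proof to compare against. Your sketch is the standard modern argument (Wang's compactness argument for membership, plus a Robinson-style hierarchical aperiodic skeleton carrying Turing-machine simulations for hardness), and it is sound at the level of detail given. Two remarks on the soft spots. First, for membership, your tree of partial tilings has uniformly bounded branching, so only Weak K\"onig's Lemma is needed (consistent with how the paper handles the analogous step in Theorem~\ref{thm:periodic-cnf-sat}). Second, in Stage~2 your ``secondary subtlety'' about propagating a contradiction from a halting sub-square to the enclosing square is not actually how the standard construction works, nor is it needed: a halting configuration simply has no tile extending it, so the square in which that simulation runs cannot be completed, and since every tiling of the plane must contain squares of unboundedly large size, a machine that halts in time $t$ already blocks all squares of side exceeding $t$. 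The genuinely delicate point you correctly identify --- interleaving the simulation rows/columns with the nested squares so that simulations at different scales do not collide (Robinson's ``free rows and columns'') --- is where the real work lies, and your sketch defers it; that is acceptable for a proof outline of a cited classical result but would need to be carried out in full for a self-contained proof.
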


\subsection{Periodic CNF SAT}

The \defn{$d$D Periodic CNF SAT} problem is a $d$-dimensional infinite periodic version of Satisfiability.
The input consists of
\begin{enumerate}
\item a finite set of \defn{protovariables} $v_1,\ldots,v_n$, which get copied into a countable infinity of \defn{variables} $v_i^{\vec x}$ for $i \in \{1,\dots,n\}$ and $\vec x \in \mathbb Z^d$; and
\item a finite collection of \defn{protoclauses}, each a finite disjunction of \defn{literals} (a variable or its negation), of the form
   $$ \sigma_1 v_{i_1}^{\vec x_1} \lor \sigma_2 v_{i_2}^{\vec x_2} \lor \cdots \lor \sigma_c v_{i_c}^{\vec x_c},$$
   where $\sigma_i$ is the sign of the $i$th literal, which gets copied into a countable infinity of \defn{clauses}
   $$ \sigma_1 v_{i_1}^{\vec x_1+\vec \Delta} \lor \sigma_2 v_{i_2}^{\vec x_2+\vec \Delta} \lor \cdots \lor \sigma_c v_{i_c}^{\vec x_c+\vec \Delta}, \forall \vec \Delta\in \mathbb Z^d.$$
\end{enumerate}
As usual, the decision question is whether (infinitely many) variables can be assigned to satisfy all of the (infinitely many) clauses.
We call the instance \defn{local} if all the $\vec x_i$ in a protoclause pairwise differ by at most $1$ in $L_1$ distance,
and define \defn{local $d$D Periodic CNF SAT} to be the restriction of $d$D Periodic CNF SAT to local instances.

Equivalently,
a (local) periodic SAT instance can be represented by a (local) bipartite periodic graph of the same dimension: represent each protovariable $v_i$ by a protovertex, and for each clause
$$ \sigma_1 v_{i_1}^{\vec x_1} \lor \sigma_2 v_{i_2}^{\vec x_2} \lor \cdots \lor \sigma_c v_{i_c}^{\vec x_c},$$
create a protovertex $c$ to represent it, and add edges 
$(v_{i_j}^{\vec x_j}, c^{\vec 0})$ labeled $\sigma_j$ for all $j=1,\ldots,c$.

\begin{theorem}\label{thm:periodic-cnf-sat}
  For local Periodic CNF SAT,
  the 1D problem is PSPACE-complete and
  the 2D problem (and all higher dimensions, local or nonlocal) is co-RE-complete.
\end{theorem}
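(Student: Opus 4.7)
The plan is to prove matching upper and lower bounds in each regime: reduce from Wang tiling for 2D (and higher) co-RE-hardness, from succinct directed-cycle detection for 1D PSPACE-hardness, and use Savitch's theorem for 1D PSPACE-membership and countable compactness for co-RE-membership everywhere. For the 2D lower bound, reduce from Wang tiling (Theorem~\ref{thm:wang}): given a tileset $\tau_1,\ldots,\tau_k$, introduce protovariables $t_1,\ldots,t_k$ with intended meaning $t_i^{\vec x}=$ ``cell $\vec x$ is tiled by $\tau_i$.'' Local protoclauses enforce (i)~at least one tile per cell, $t_1^{\vec x}\lor\cdots\lor t_k^{\vec x}$; (ii)~at most one, $\neg t_i^{\vec x}\lor\neg t_j^{\vec x}$ for $i\ne j$; and (iii)~glue compatibility between adjacent cells, $\neg t_i^{\vec x}\lor\neg t_j^{\vec x+\vec e_\ell}$ for each mismatched pair along $\vec e_\ell\in\{(1,0),(0,1)\}$. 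All protoclauses have locality $\leq 1$, and satisfying assignments biject with Wang tilings of the plane, so co-RE-hardness transfers. Higher-dimensional and nonlocal hardness follow by padding with ignored coordinates.

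For 1D PSPACE-membership, implicitly build the \emph{period-transition graph} $H$ on $2^n$ vertices, one per assignment $A$ to one period of the $n$ protovariables, with edge $(A,A')$ iff instantiating two consecutive periods with $A,A'$ satisfies every local protoclause straddling that boundary (a polynomial-time check). By locality, satisfying assignments to $\PER{G}$ biject with bi-infinite walks in $H$, and such walks exist iff some vertex lies on a reachable cycle. This is decidable in NPSPACE $=$ PSPACE: nondeterministically guess $A$ and run Savitch's reachability routine from $A$ back to $A$, building adjacency of $H$ on the fly in polynomial space.

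For 1D PSPACE-hardness, reduce from succinct directed-cycle detection: given a polynomial-size Boolean circuit $C(x,y)$ on $2n$ bits encoding the edge relation of a digraph on $\{0,1\}^n$, does it contain a directed cycle? This is PSPACE-complete, since succinct versions of NL-complete problems are PSPACE-complete. Use $n$ protovariables for the ``vertex at time $t$,'' plus Tseitin auxiliary protovariables for the gates of $C(x_t,x_{t+1})$, and add local protoclauses forcing $C(x_t,x_{t+1})=1$ across every pair of adjacent periods. Bi-infinite satisfying assignments correspond exactly to bi-infinite walks in the digraph, which exist iff it has a cycle (any bi-infinite walk in a finite digraph must revisit some vertex, yielding a cycle; conversely, any cycle unfolds to a bi-infinite walk).

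For co-RE-membership in every dimension and locality, invoke the countable compactness theorem for propositional logic, provable in ZF via K\"onig's lemma: enumerate protovariables in shell order so that the set $\Phi_k$ of clauses supported on the first $k$ variables is finite, and consider the tree of partial assignments satisfying $\Phi_k$; this tree is infinite and finitely branching whenever every finite sub-instance is satisfiable, so by K\"onig it has an infinite branch giving a full satisfying assignment. Contrapositively, if the instance is unsatisfiable then some finite sub-instance is, and the algorithm enumerating finite sub-instances and running a SAT decider on each semi-decides unsatisfiability. This places satisfiability in co-RE; combined with the Wang-tiling reduction it yields co-RE-completeness in 2D and higher, and is subsumed by PSPACE in 1D. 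The main anticipated obstacle is the 1D lower bound: periodic SAT has no distinguished ``start'' period, so we cannot directly encode a TM-acceptance question anchored at an initial configuration; the succinct-cycle framing sidesteps this by phrasing the PSPACE-hard target as a recurrence property rather than reachability from a source.
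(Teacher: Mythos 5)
Your proposal is correct, and three of its four components essentially match the paper's proof: the co-RE upper bound via shell-ordering the variables and Weak K\"onig's Lemma is the same argument; the 2D lower bound from Wang tiling (Theorem~\ref{thm:wang}) differs only in that you encode tiles directly with mismatch clauses between adjacent cells, whereas the paper follows Freedman in introducing auxiliary edge-color variables $s_i, w_i$ --- both encodings are local and correct; and your 1D PSPACE upper bound via the period-transition graph on $2^n$ assignments is the same guess-and-check NPSPACE algorithm the paper gives (the paper phrases it as guessing successive period assignments until the first one recurs, which is exactly cycle detection in your graph $H$; note the condition is simply that $H$ contains a cycle --- ``reachable'' is vacuous here since there is no source vertex). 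The genuine divergence is the 1D PSPACE-hardness. The paper reduces directly from polynomial-space Turing machine acceptance, following the Fortnow/Aaronson ``time-travel'' trick: a step counter that provably cannot overflow forces every satisfying assignment to contain an accepting configuration at some position, which then resets the simulation to the initial configuration, so the absence of a distinguished start period is handled by making acceptance a recurrence. You instead reduce from succinct directed-cycle detection, which matches the bi-infinite-walk semantics of periodic 1D SAT very cleanly and avoids the counter machinery entirely, but outsources the hardness to the meta-theorem that succinct versions of NL-complete problems are PSPACE-complete; that theorem requires the underlying problem to be NL-complete under sufficiently weak (projection-style) reductions, which does hold for cycle detection via succinct GAP, but you should cite or verify this rather than assert it. The paper's route is self-contained and reuses the Cook--Levin machinery it needs anyway; yours is shorter and arguably more conceptual, at the cost of an external completeness result.
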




\begin{proof}
  \textbf{2D co-RE hardness} \cite{Freedman1998ksat} follows by reduction from the Wang tile problem of Theorem~\ref{thm:wang}, as in \cite{Freedman1998ksat}.
  For each color $i=1,\ldots,C$, create protovariables $s_i$ and $w_i$ representing whether the west and south edges of a tile are colored $i$.
  For each tile $j=1, \ldots, T$, create protovariables $t_i$ representing whether tile $i$ is used at a given position. Note that all these variables will be copied at each point $\vec x \in \mathbb Z^d$. As the colors of adjacent Wang tiles must match, the colors along the north and east edges of the tile at $\vec x$ are encoded in $s_{i}^{\vec x+(0,1)}$ and $w_{i}^{\vec x+(1,0)}$, respectively.
  The following protoclauses ensure each tile uses only one color along each border:
  \begin{align*}
    \neg s_{i}^{(0,0)} \lor \neg s_{j}^{(0,0)} \quad \forall i,j \in \{1,\ldots,C\},
    \\
    \neg w_{i}^{(0,0)} \lor \neg w_{j}^{(0,0)} \quad \forall i,j \in \{1,\ldots,C\}.
  \end{align*}
  The following clauses ensure each position chooses at least one tile:
  $$ \bigvee_{i=1}^T t_{i}^{(0,0)} \quad \forall i \in \{1,\ldots,C\}.$$
  Finally, each tile $i$ with colors $a_i,b_i,c_i,d_i$ along the south, west, north, and east edges, respectively, defines implications that ensure the colors match:
  \begin{align*}
    & \neg t_{i}^{(0,0)} \lor s_{a_i}^{(0,0)}; &
    & \neg t_{i}^{(0,0)} \lor w_{b_i}^{(0,0)}; &
    & \neg t_{i}^{(0,0)} \lor s_{c_i}^{(0,1)}; &
    & \neg t_{i}^{(0,0)} \lor w_{d_i}^{(1,0)}.
  \end{align*}
  Thus, the Periodic CNF SAT problem is satisfiable if and only if there exists a tiling of the plane with the given Wang tiles, and so the local (and thus nonlocal) 2D Periodic CNF SAT is co-RE-hard.

  \textbf{$d$D co-RE membership}: 
  We use a standard compactness argument.
  Define an order on the SAT variables by increasing distance from the origin:
  order vertex $v^{\vec x}$ by increasing $L_1$ norm $\|\vec x\|_1$,
  then lexicographically by $\vec x$, then by~$v$
  (according to a fixed order of the finite protovertex set~$V$).
  Define the potentially infinite binary tree of valid partial assignments,
  where a left branch at depth $i$ corresponds to assigning the $i$th variable to false, and a right branch corresponds to assigning it to true,
  and we omit nodes whenever the partial assignment violates a clause.
  We can construct this tree level by level, and detect whether it is infinite in co-RE.
  If the algorithm discovers that the tree is finite, then the SAT instance is unsatisfiable, and we return false.
  If the algorithm runs forever, then the tree is infinite.
  By the Weak K\"onig's Lemma (equivalent to the compactness of Cantor space $2^\omega$), an infinite binary tree contains an infinite path, which corresponds to an infinite satisfying assignment to all variables satisfying the entire formula.
  (Unlike K\"onig's Lemma for arbitrary trees, the Weak K\"onig's Lemma is provable in ZF: for $i=1,2,\dots$, assign the $i$th variable so that the remaining rooted subtree is infinite. See, e.g., \cite{Bauer-2006}.)

  \textbf{1D PSPACE membership} \cite{Orlin1981_STOC}:
  We give an NPSPACE algorithm for local 1D Periodic CNF SAT, which implies PSPACE membership by Savitch's Theorem.
  Suppose the 1D periodic CNF SAT problem is satisfiable, with variable values $v_i^z = \alpha_i^z$.
  By the Pigeonhole Principle, there are two values $x,y \in\mathbb Z$ where $|y-x| \leq 2^{n}$ for which $\alpha_i^x=\alpha_i^y$ for all $i \in \{1,\ldots,n\}$. 
  By repeating the sequence of variable assignments between $x$ and $y$, we can obtain another satisfying assignment $v_i^z=\beta_i^z$ which is periodic and of period $|y-x| \leq 2^n$.

  To build an NPSPACE algorithm for satisfiability, guess the assignment of variables at position $0$: $v_i^0 = \beta_i^0$ for all $i$. Then, for each step $z=1,2,\ldots,2^{|V|}$, guess the next assignment $v_i^z = \beta_i^z$ for all $i$, check that all clauses involving variables at positions $z$ and $z-1$ are satisfied, and then forget the assignment at position $z-1$. Because the 3SAT instance is local, all clauses will be checked by this process. If the first position's assignment $\beta_i^0$ ever gets repeated at a later $\beta_i^z$, then the instance is satisfiable.

  \textbf{1D PSPACE-hardness} \cite{Orlin1981_STOC}:
  Our reduction is based on a proof by Lance Fortnow described by Scott Aaronson about time-travel computing \cite[Section~8]{Aaronson-2005}.
  Consider any language $L$ in PSPACE.
  Let $M$ be a Turing machine that accepts $L$ and runs in space $s(n)$ where $n$ is the input size and $s(n)$ is a polynomial in $n$, and assume $s(n)\geq n$ by rounding up.
  Given an input $x$ of size $n$, we define a 1D periodic CNF SAT instance where the variables at position $i\in\mathbb Z$ represent:
  \begin{itemize}
  \item A step counter $c$ which represents the step of the Turing machine $M$ being simulated at position $i$. Because $\text{PSPACE} \subseteq \text{EXPTIME}$, $c$ can be represented in $\gamma \, s(n)$ bits for a constant $\gamma = \gamma(M)$, or $\gamma \, s(n)$ boolean protovariables $c_0,\ldots,c_{\gamma s(n)-1}$.
  \item The tape of $M$ at step $c$ using protovariables $t_{j,k}$ expressing that the symbol at position $j$ on the tape is $k$, for $j\in[1,s(n)$].
  \item The head position of the Turing machine $M$ at step $c$ using protovariables $h_j$ for the position $j\in[1,s(n)]$ on the tape.
  \item The state of the Turing machine $M$ at step $c$ using protovariables $q_k$ for the state $k\in[1,|Q|]$ where $Q$ is the set of states of $M$.
  \end{itemize}
  As in Garey and Johnson's proof of Cook's Theorem \cite[Theorem~2.1, p.~39]{Garey-Johnson-1979},
  we can construct a set of clauses using variables $t_j^i$, $h_j^i$, $q_k^i$, $t_j^{i+1}$, $h_j^{i+1}$, $q_k^{i+1}$ to ensure that the Turing machine $M$ simulates correctly.
  To this we add clauses expressing that the counter $c$ increments at each non-accepting step, without overflow (i.e., failing to increment $c$ when it is all $1$s).
  For any accepting state at position $i$,
  \begin{itemize}
  \item $c_j^{i+1}$ is false (i.e., $c$ goes back to 0);
  \item $t_j^{i+1}$ is set to the input instance $x$;
  \item the head position $h_1^{i+1}$ is true, and $h_j^{i+1}$ is false for all $j> 1$; and
  \item $q_0^{i+1}$ is true and $q_j^{i+1}$ is false for all $j> 1$.
  \end{itemize}
  That is, the Turing machine at position $i+1$ is set to its initial state.
  This ensures that, if $M$ accepts input $x$, then there exists a periodic CNF SAT assignment that satisfies the clauses at all positions $i\in\mathbb Z$.
  Other the other hand, any satisfying assignment of the periodic CNF SAT instance must contain a position $i$ where the Turing machine $M$ is in an accepting state because the counter $c$ cannot overflow, and the clauses ensure that the Turing machine at position $i+1$ is set to its initial state, and then the clauses simulate the Turing machine and reach another accepting state (because again the counter $c$ cannot overflow), which in turn implies that $x$ is in the language~$L$.
\end{proof}

\subsection{Periodic 3SAT}

The \defn{Periodic 3SAT} problem is a special case of Periodic CNF SAT, where each clause has at most three literals.

\begin{theorem}\label{thm:periodic-3sat}
  Local Periodic 3SAT is co-RE-complete in 2D and PSPACE-complete in 1D.
\end{theorem}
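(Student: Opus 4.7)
Both directions follow from Theorem~\ref{thm:periodic-cnf-sat}, which establishes the corresponding bounds for local Periodic CNF SAT. Membership is immediate, since local Periodic 3SAT is simply the restriction of local Periodic CNF SAT to protoclauses of width at most $3$. For hardness, the plan is a locality-preserving reduction from local Periodic CNF SAT to local Periodic 3SAT.

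The reduction will apply the standard clause-splitting construction on a per-protoclause basis: each protoclause $(l_1 \lor l_2 \lor \cdots \lor l_k)$ with $k > 3$ is replaced by the chain
\[
  (l_1 \lor l_2 \lor y_1) \land (\neg y_1 \lor l_3 \lor y_2) \land \cdots \land (\neg y_{k-3} \lor l_{k-1} \lor l_k),
\]
where $y_1, \ldots, y_{k-3}$ are fresh auxiliary protovariables introduced separately for each protoclause. Because the auxiliary protovariables are periodically copied together with the chain, each lattice copy of the original clause is replaced by an equisatisfiable chain involving its own copies of $y_1, \ldots, y_{k-3}$, and no interaction arises between different protoclauses or different periodic copies; hence the new instance is satisfiable iff the original is. The transformation is polynomial in the input, since $k$ is bounded by the finite number of literals per protoclause.

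The main obstacle is preserving locality. The plan exploits the following observation: in the integer lattice $\mathbb Z^d$, any finite set of points pairwise within $L_1$ distance $1$ occupies at most two distinct lattice sites (if two of them differ by $\pm \vec e_i$, any third point within $L_1$ distance $1$ of both must coincide with one of them). Thus every local protoclause uses variables at at most two adjacent positions $\vec p, \vec q$. I will place all the auxiliary protovariables $y_1, \ldots, y_{k-3}$ associated to that protoclause at one fixed site, say $\vec p$. Then every 3-protoclause in the chain only references positions from $\{\vec p, \vec q\}$, whose pairwise $L_1$ distance is at most $1$, so locality is preserved. Once this bookkeeping is verified, the hardness bounds transfer directly from Theorem~\ref{thm:periodic-cnf-sat}, yielding co-RE-completeness in 2D (and higher) and PSPACE-completeness in 1D.
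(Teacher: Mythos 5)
Your proposal is correct and takes essentially the same approach as the paper: both apply the standard clause-splitting reduction to local Periodic CNF SAT from Theorem~\ref{thm:periodic-cnf-sat}, placing the fresh auxiliary protovariables at a lattice position already used by the original protoclause so that locality is preserved. Your explicit observation that a local protoclause references at most two adjacent lattice sites is a correct elaboration of the locality claim the paper merely asserts.
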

\begin{proof}
  We use the usual reduction from CNF SAT to 3SAT \cite{Cook-1971},
  but applied to local Periodic CNF SAT from Theorem~\ref{thm:periodic-cnf-sat}.
  For each clause of the SAT instance with more than three literals,
  $$ \sigma_1 v_{i_1}^{\vec x_1} \lor \sigma_2 v_{i_2}^{\vec x_2} \lor \sigma_3 v_{i_3}^{\vec x_3} \lor \cdots \lor \sigma_c v_{i_c}^{\vec x_c},$$
  create a new protovariable $z$ and replace the clause with
  $$ z^{\vec x_1} \lor \sigma_1 v_{i_1}^{\vec x_1} \lor \sigma_2 v_{i_2}^{\vec x_2},$$
  and
  $$\neg z^{\vec x_1} \lor \sigma_3 v_{i_3}^{\vec x_3} \lor \cdots \lor \sigma_c v_{i_c}^{\vec x_c}.$$
  Continue iteratively until no clause with more than three literals remains.
  The resulting instance is satisfiable if and only if the original instance is.
  If the original instance is local, then so is the modified instance.
\end{proof}

\subsection{Periodic 3SAT-3}
The \defn{Periodic 3SAT-3} problem is a special case of Periodic 3SAT, where each variable occurs in at most three clauses.

\begin{theorem}\label{thm:periodic-3sat-3}
  Local Periodic 3SAT-3 is co-RE-complete in 2D and PSPACE-complete in 1D.
\end{theorem}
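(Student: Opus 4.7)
The plan is to reduce Local Periodic 3SAT, which is co-RE-complete in 2D and PSPACE-complete in 1D by Theorem~\ref{thm:periodic-3sat}, to Local Periodic 3SAT-3, using a periodic adaptation of the classical reduction from 3SAT to 3SAT-3. Together with the observation that Local Periodic 3SAT-3 is a syntactic special case of Local Periodic CNF SAT, this yields the co-RE upper bound in 2D and the PSPACE upper bound in 1D directly from Theorem~\ref{thm:periodic-cnf-sat}, so only hardness requires work.

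Given a Local Periodic 3SAT instance, for each protovariable $v$ whose literals occur a total of $k > 3$ times across the finitely many protoclauses (counting each literal position separately, and counting multiplicities within a single protoclause), enumerate these occurrences as $O_1,\dots,O_k$, where each $O_j$ picks out a particular protoclause together with the offset $\vec y_j$ at which that literal sits. Introduce $k$ fresh protovariables $v_1,\dots,v_k$; replace each $O_j$ by a literal on $v_j$ at offset $\vec y_j$, keeping the same sign; and add $k$ new \emph{equivalence-chain} protoclauses
$$\neg v_j^{\vec 0} \lor v_{(j \bmod k)+1}^{\vec 0}, \qquad j=1,\dots,k.$$

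Each aux protovariable $v_j$ now occurs exactly $3$ times among the protoclauses: once in the replaced original protoclause and twice in the cyclic chain (once negated, once positive). The modified instance remains local, because the chain protoclauses have both literals at a common offset $\vec 0$ (trivially local) and the original protoclauses are only renamed, preserving their offset pattern. The cyclic implications force $v_1^{\vec x}=v_2^{\vec x}=\cdots=v_k^{\vec x}$ for every $\vec x\in\mathbb Z^d$, so the modified instance is satisfiable iff the original one is, via the common value $v^{\vec x}$.

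There is no real obstacle in this argument: the whole construction is a per-protovariable local surgery that both preserves periodicity (by acting only at the protovariable/protoclause level) and preserves locality (since every newly added protoclause lives entirely at offset $\vec 0$). The only bookkeeping subtlety to state carefully is that occurrences must be enumerated per literal slot, so that a protovariable appearing at two different offsets inside one protoclause, or several times in different protoclauses, still gets one fresh $v_j$ per occurrence.
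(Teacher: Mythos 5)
Your proposal is correct and is essentially the paper's own proof: both apply Tovey's cyclic-chain reduction from 3SAT to 3SAT-3 at the protovariable level, reducing from Local Periodic 3SAT (Theorem~\ref{thm:periodic-3sat}), with the chain clauses placed at a single offset so locality and periodicity are preserved. Your additional bookkeeping about per-occurrence enumeration and the upper bounds via Theorem~\ref{thm:periodic-cnf-sat} is consistent with (and slightly more explicit than) the paper's argument.
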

\begin{proof}
  We apply the standard cycle reduction from 3SAT to 3SAT-3 \cite{tovey1984simplified},
  but reducing from local Periodic 3SAT of Theorem~\ref{thm:periodic-3sat}.
  For each variable $x$ that occurs $d$ times,
  we replace $x$ by $d$ new variables $x_1, x_2, \dots, x_d$,
  and add $d$ clauses $\neg x_i \lor x_{i+1}$
  where indices are modulo~$d$.
  These clauses are equivalent to the implication chain
  $x_1 \to x_2 \to \cdots \to x_d \to x_1$,
  which forces all $x_i$ to have the same truth value.
  Figure~\ref{fig:3sat-3} shows the case $d=8$.
  \begin{figure}
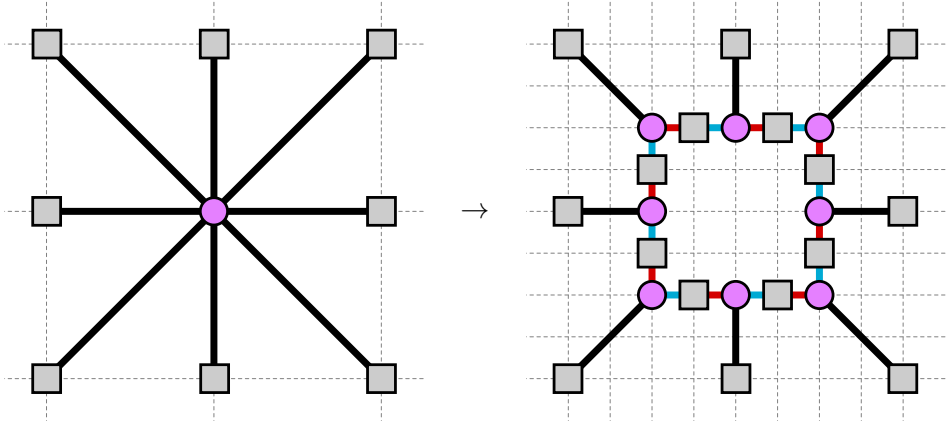

    \centering
    $
    \vcenter{\hbox{%
      \includegraphics
         [scale=0.35,page=1]%
         {figs/3sat-3}}}%
    \quad \to \quad
    \vcenter{\hbox{%
      \includegraphics
         [scale=0.35,page=2]%
         {figs/3sat-3}}}%
    $
    \caption{Reduction from 3SAT to 3SAT-3: splitting a $d$-occurrence variable into $d$ 3-occurrence variables. (Here, $d=8$.) Circles represent variables and squares represent clauses; edge color denotes sign.}%
    \label{fig:3sat-3}
  \end{figure}
\end{proof}

\subsection{Periodic Planar 3SAT}

The \defn{Periodic Planar 3SAT} problem is a special case of Periodic 3SAT,
where the periodic graph is guaranteed to be planar.

\begin{theorem}\label{thm:planar-3sat}
  Local Periodic Planar 3SAT is co-RE-complete in 2D and PSPACE-complete in 1D, even when restricting the grid size of the drawing to be polynomial.
  The same is true of local Periodic Planar 3SAT-3.
\end{theorem}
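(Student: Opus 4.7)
The plan is to reduce from local Periodic 3SAT (Theorem~\ref{thm:periodic-3sat}) for the 3SAT version, and from local Periodic 3SAT-3 (Theorem~\ref{thm:periodic-3sat-3}) for the 3SAT-3 version, by drawing the associated variable--clause bipartite periodic graph and planarizing each crossing with a standard crossover gadget. Both source problems are co-RE-complete in 2D and PSPACE-complete in 1D, so establishing the reductions will give the desired hardness for both target problems simultaneously.

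First, I would represent the given instance $\phi$ as its bipartite periodic graph $\PER{G}$, with a protovertex for each protovariable and each protoclause, and signed protoedges between them. Applying Theorem~\ref{thm:orthocrossing-drawing} produces an orthocrossing local periodic drawing of $\PER{G}$ of grid size $M=O(|V|+|E|)$, which is polynomial in the input size. By the orthocrossing property, each crossing involves exactly two edges meeting orthogonally, and no three edges share a non-endpoint point, so every crossing looks locally like the standard $+$-pattern that crossover gadgets are designed for.

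Next, I would replace a constant-size neighborhood of each orthogonal crossing with a standard planar crossover gadget from the planar-SAT literature (e.g., Lichtenstein-style): subdivide the two crossing protoedges at the crossing site by introducing one fresh ``copy'' protovariable on each of the four outgoing wire stubs, and add a constant number of auxiliary protovariables and protoclauses that jointly enforce the two equalities ``left copy $\Leftrightarrow$ right copy'' and ``top copy $\Leftrightarrow$ bottom copy'' using only planar wiring inside the constant-size cell. Refining the grid by the constant size of the gadget creates enough room for this drawing, and because the orthocrossings are arranged periodically, applying the same gadget at every translate yields a periodic replacement. The result is a local periodic planar bipartite graph whose associated SAT instance is satisfiable iff $\phi$ is, since each equality chain forces all copies of a given literal to share its truth value. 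Locality and polynomial grid size are preserved because each gadget is finite and the blowup is a constant factor.

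For the 3SAT-3 case I would start from Theorem~\ref{thm:periodic-3sat-3}, so the original variable/clause vertices already have degree at most~$3$. The auxiliary protovariables inside the crossover gadget have constant occurrence by construction, and each original protovariable is now replaced by a chain of copies each of bounded occurrence; a final local application of the cycle trick used in Theorem~\ref{thm:periodic-3sat-3} (applied inside each variable's region, which preserves planarity) restores the $\le 3$ occurrences bound if needed. The main obstacle is the crossover gadget itself: it must be planar, constant-size, compatible with being replicated periodically at every orthocrossing, and, for the 3SAT-3 variant, keep all new variables of occurrence at most~$3$. Such a gadget is standard, and the only new observation is that it transfers verbatim to the periodic setting because each copy lives in its own constant-size grid cell and interacts with its neighbors only through the four protoedges it subdivides.
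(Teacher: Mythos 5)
Your proposal is correct and matches the paper's proof in all essentials: obtain an orthocrossing local periodic drawing of the variable--clause graph via Theorem~\ref{thm:orthocrossing-drawing}, splice a Lichtenstein-style crossover gadget into a constant-size cell at each (periodic translate of each) orthogonal crossing, and restore the occurrence bound with the cycle trick of Theorem~\ref{thm:periodic-3sat-3}. The only organizational difference is that the paper reduces \emph{both} versions from Periodic 3SAT-3, so the max-degree-3 graph admits a fully orthogonal drawing and the substitution can be done uniformly cell by cell (wires, bends, variables, clauses, and crossings each getting a gadget), whereas you reduce the plain 3SAT version from Periodic 3SAT directly; this still works since the orthocrossing property already guarantees every crossing is an isolated orthogonal one.
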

\begin{proof}
  We reduce from local Periodic 3SAT-3 from Theorem~\ref{thm:periodic-3sat-3}.
  Applying Theorem~\ref{thm:orthocrossing-drawing} to the graph representing the SAT instance, which has maximum degree 3, we obtain an orthogonal orthocrossing local periodic drawing of grid size $O(|V|+|E|)$.
  We view this drawing within a single $[0,1]^2$ square; because the drawing is local, this involves at most two translations of each edge.

  Decompose every edge into a sequence of horizontal and vertical unit segments, and add a variable vertex at the midpoint of each segment. 
  Then each original gridpoint is either
  \begin{enumerate}
  \item a clause vertex with up to three incident edges;
  \item a variable vertex with up to three incident edges;
  \item a straight or bent path connecting two variable vertices; or
  \item an orthocrossing between two edges.
  \end{enumerate}
  In each case, we replace the gridpoint and its connections to neighboring half-grid variables with a gadget.
  In the first case, we leave the connections from clause to variables as they are.
  In the second case, we use the gadget in Figure~\ref{fig:3sat-duplicator} (or a subset for lower degree), which adds up to three duplicators ($a \to b$ and $b \to a$ to force $a=b$) to connect the variables together while forcing their value to be equal.
  In the third case, we use a subset of Figure~\ref{fig:3sat-duplicator} with only two of the incident edges (and no original variable in the center).
  In the fourth case, we use the crossover gadget 
  in Figure~\ref{fig:3sat-crossover}, which is exactly the gadget of Lichtenstein \cite{Lichtenstein-1982} but drawn on a grid.
  As argued by Lichtenstein \cite{Lichtenstein-1982}, this gadget preserves satisfiability while removing crossings.

  \begin{figure}
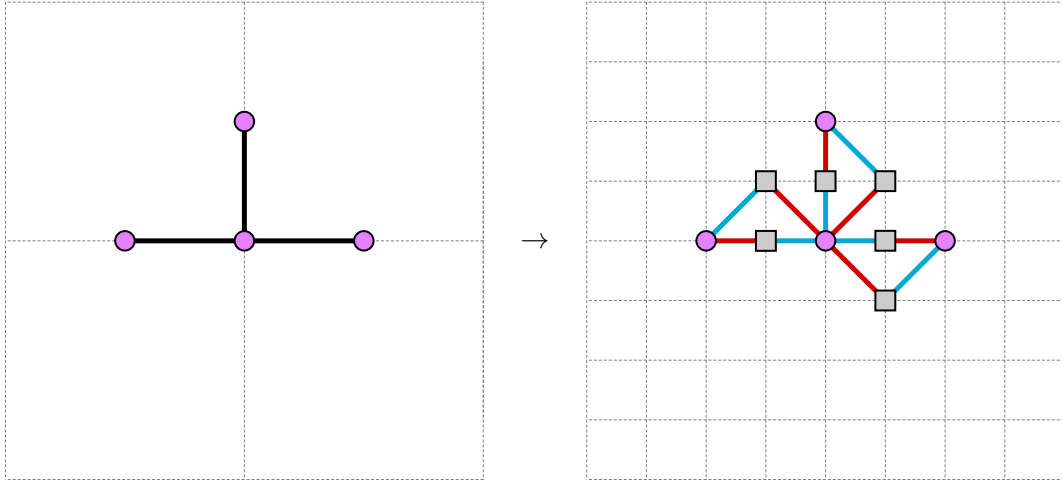
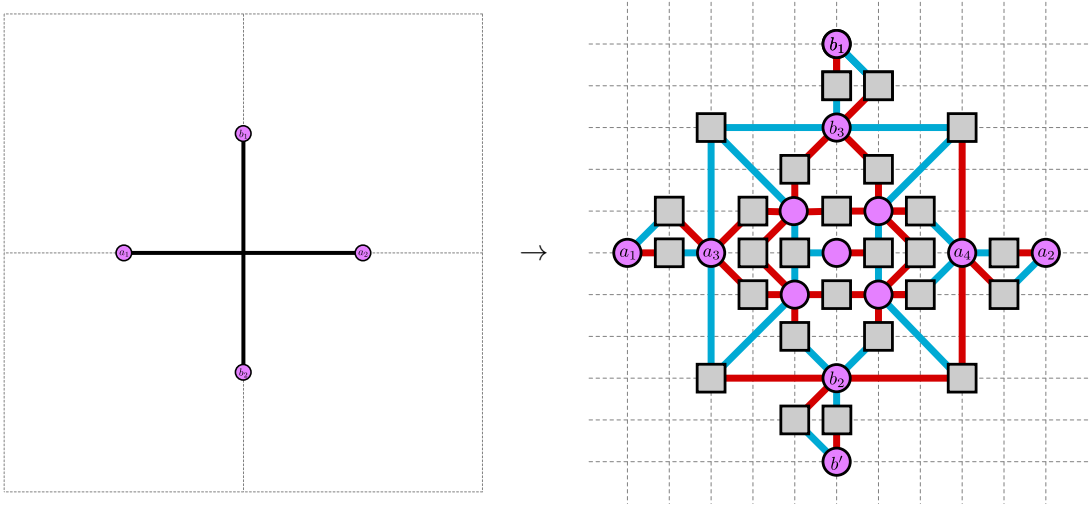

    \centering
    \begin{subfigure}{\textwidth}%
      \centering
      \def\u{_} 
      $
      \vcenter{\hbox{%
        \includegraphics
           [scale=0.25,page=1]%
           {figs/3sat_duplicator}}}%
      \quad \to \quad
      \vcenter{\hbox{%
        \includegraphics
           [scale=0.25,page=2]%
           {figs/3sat_duplicator}}}%
      $
      \caption{Variable gadget. Subsets of this gadget also implement straight and turn gridpoints.}%
      \label{fig:3sat-duplicator}
    \end{subfigure}
    
    \medskip
    
    \begin{subfigure}{\textwidth}%
      \centering
      \def\u{_} 
      $
      \vcenter{\hbox{%
        \includegraphics
           [scale=0.2,page=1]%
           {figs/3sat_crossover}}}%
      \quad \to \quad
      \vcenter{\hbox{%
        \includegraphics
           [scale=0.35,page=2]%
           {figs/3sat_crossover}}}%
      $
      \caption{Crossover gadget, based on \cite[Figs.~4 and 5]{Lichtenstein-1982}.}%
      \label{fig:3sat-crossover}
    \end{subfigure}
    \caption{Reduction from 3SAT-3 to Planar 3SAT.}%
  \end{figure}

  As shown by the figures, the gadgets can be drawn in a grid of constant size, so the grid size of the resulting drawing is $O(|V|+|E|)$.
  (Specifically, Figure~\ref{fig:3sat-duplicator} refines by a factor of $4$, and Figure~\ref{fig:3sat-crossover} refines by a factor of $10$, so we use the least common multiple of~$20$.)

  The gadgets of Figures~\ref{fig:3sat-duplicator} and \ref{fig:3sat-crossover}
  introduce variable vertices of degree larger than~$3$.
  We can reduce to Periodic Planar 3SAT-3 by using
  the cycle construction of Theorem~\ref{thm:periodic-3sat-3},
  while taking care to use a constant factor of additional grid size.
  Specifically, all variable vertices have incident edges at angles that are integer multiples of $45^\circ$, and Figure~\ref{fig:3sat-3} shows how to solve this case while refining the grid by a factor of~$4$.
\end{proof}

\subsection{Periodic Planar 1-in-3SAT}

The \defn{Periodic Planar 1-in-3SAT} problem is a modification of Periodic planar 3SAT, where for each clause, exactly one literal must be true.

\begin{theorem}\label{thm:planar-1-in-3sat}
  Local Periodic Planar 1-in-3SAT is co-RE-complete in 2D and PSPACE-complete in 1D, even when restricting the grid size of the drawing to be polynomial.
\end{theorem}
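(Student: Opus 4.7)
The plan is to reduce from Local Periodic Planar 3SAT (Theorem~\ref{thm:planar-3sat}), using Schaefer's classical local gadget that replaces each 3SAT clause with a constant-size collection of 1-in-3 clauses. Membership in co-RE (in 2D) and PSPACE (in 1D) will follow from the same compactness and Savitch arguments given in Theorem~\ref{thm:periodic-cnf-sat}, since 1-in-3SAT is a restriction of CNF SAT and the construction below preserves locality.

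For hardness, I would replace each 3SAT protoclause $(\ell_1\lor\ell_2\lor\ell_3)$ with three 1-in-3 protoclauses asserting that exactly one literal is true among $\{\bar\ell_1,a,b\}$, among $\{b,\ell_2,c\}$, and among $\{c,d,\bar\ell_3\}$, where $a,b,c,d$ are four fresh protovariables local to the clause gadget. A brief truth-table verification confirms that these three 1-in-3 clauses are jointly satisfiable (over the choice of $a,b,c,d$) exactly when at least one of $\ell_1,\ell_2,\ell_3$ is true, so the original and transformed instances are equisatisfiable. Because every new variable stays inside its own clause gadget, locality and periodicity are preserved automatically.

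The dependency graph of this gadget is a tree of constant size---the path $\ell_1$--$C_1$--$b$--$C_2$--$c$--$C_3$--$\ell_3$ with three pendant attachments ($a$, $\ell_2$, $d$)---so it admits a constant-size orthogonal planar drawing. Starting from the orthogonal periodic planar drawing produced by Theorem~\ref{thm:planar-3sat}, I would substitute this block in place of each clause vertex, refining the grid by a constant factor (the LCM of the internal refinement factor and of the factor needed to connect the block to its three external wires, exactly in the spirit of the substitution scheme of Theorem~\ref{thm:planar-3sat}). This preserves planarity, orthogonality, periodicity, locality, and polynomial grid size. Sign labels on edges carry the negations $\bar\ell_1,\bar\ell_3$ just as in Periodic Planar 3SAT, so no separate negation gadget is required.

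The main obstacle will be that the three input wires $\ell_1,\ell_2,\ell_3$ enter each clause vertex from up to three of its four orthogonal sides in a potentially arbitrary cyclic order, so the Schaefer block must be realized in several rotational or reflective variants, or else equipped with a constant-size orthogonal "shuffler" that routes its three external ports to any desired configuration while respecting the roles of $\ell_1$, $\ell_2$, $\ell_3$ inside the block. Using the connector patterns of Figure~\ref{fig:connectors} and the standardization of outgoing directions afforded by Lemma~\ref{lem:degree3}, this routing can be achieved within a constant-size bounding subgrid, completing the reduction.
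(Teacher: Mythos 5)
Your proposal is correct and follows essentially the same route as the paper: both reduce from Local Periodic Planar 3SAT by locally substituting each clause with a constant-size, planar, equisatisfiable block of 1-in-3 clauses, inheriting periodicity, locality, and polynomial grid size, with membership carried over from Theorem~\ref{thm:periodic-cnf-sat}. The only difference is the concrete gadget (you use Schaefer's three-clause chain with fresh variables $a,b,c,d$, which checks out; the paper uses the equivalent gadget from Dyer and Freeze's Figure~1), and your noted concern about wire orientations is handled the same way the paper handles it, by constant-size rerouting within the refined grid.
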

\begin{proof}
  We follow the proof of Dyer and Freeze \cite{Dyer-Freeze-1986} that 1-in-3SAT is NP-hard, and reduce from Periodic Planar 3SAT. Given a planar periodic 3SAT instance, we replace each clause with the gadget of Figure~\ref{fig:3sat-to-1in3sat}, which ensures that exactly one of the three literals is true. The gadget is drawn on a grid of constant size, so the grid size of the resulting drawing is only a constant factor larger.
\end{proof}

\begin{figure}
  \centering
  \def\u{_} 
  $
  \vcenter{\hbox{%
    \includegraphics
       [scale=0.4,page=1]%
       {figs/3sat_to_1in3sat}}}%
  \quad \to \quad
  \vcenter{\hbox{%
    \includegraphics
       [scale=0.4,page=2]%
       {figs/3sat_to_1in3sat}}}%
  $
  \caption{Reduction from 3SAT to 1-in-3SAT.
    Based on \cite[Figure 1]{Dyer-Freeze-1986}.}%
  \label{fig:3sat-to-1in3sat}
\end{figure}


\subsection{Periodic Planar 3DM}

In \defn{3DM (3-Dimensional Matching)}, we are given three disjoint sets $R,G,B$ (representing colors Red, Green, and Blue) and a set $T \subseteq R \times G \times B$ of trichromatic triples. A (perfect) \defn{3D matching} is a subset $M \subseteq T$ of trichromatic triples that cover every element $x \in R \cup G \cup B$ exactly once, i.e., exactly one $m \in M$ contains $x$.

A 3DM instance can be represented by a bipartite graph, where the elements of $R$, $G$, and $B$ are vertices colored red, green, and blue; and every trichromatic triple $t=(r,g,b)\in T$ is a blank vertex $t$ with an edge to vertices $r$, $g$, and $b$. The problem is then to determine whether a subset $M \subseteq T$ of the blank vertices can be selected such that each colored vertex has exactly one blank neighbor in the set.

The (local) \defn{Periodic 3DM} problem is the generalization of 3DM to periodic input bipartite graphs, and \defn{Planar Periodic 3DM} further restricts the periodic bipartite graph to be planar.

\begin{theorem}\label{thm:planar-3dm}
  Local Periodic Planar 3DM is co-RE-complete in 2D and PSPACE-complete in 1D.
\end{theorem}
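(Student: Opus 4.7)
The plan is to reduce from Local Periodic Planar 1-in-3SAT (Theorem \ref{thm:planar-1-in-3sat}), which matches 3DM naturally: each 1-in-3 clause corresponds directly to selecting exactly one triple in a matching. Starting from the polynomial-grid planar periodic drawing of a 1-in-3SAT instance, I would replace each protovariable, protoclause, and unit segment of each protoedge with a constant-size planar bipartite 3DM gadget, then glue them along the grid. Since every gadget is local and the source drawing is already orthogonal of polynomial grid size, the resulting periodic 3-colored bipartite graph inherits locality, periodicity, and planarity automatically, with only a constant-factor grid refinement.

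For the gadget design, a clause gadget for $\ell_1 \lor \ell_2 \lor \ell_3$ is essentially a single blank triple vertex $t_c$ adjacent to three port vertices $p_1, p_2, p_3$ of colors $R, G, B$: selecting $t_c$ corresponds to the unique true literal covering its shared port. A variable gadget for a variable with up to three occurrences is a small cyclic structure of blank triples over $2k$ colored elements (with $k \leq 3$ the number of occurrences): choosing the ``true'' triples of the cycle covers one class of elements while leaving the complementary elements to be consumed by clause triples for positive occurrences, and vice versa for ``false''. A wire gadget transmits a truth value along a unit segment using a short chain of triples whose selections are rigidly coupled. Because the source graph has maximum degree $3$, I would first invoke Lemma \ref{lem:degree3} so that no variable has an incident edge pointing down, leaving a constant-size library of gadget orientations to handle.

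Membership follows the template of Theorem \ref{thm:periodic-cnf-sat}. For $d$D co-RE, I would enumerate triple protovertices in order of increasing $L_1$ distance from the origin and build the binary decision tree (select/discard each triple) pruned by the cover-exactly-once constraint on every colored vertex whose entire triple-neighborhood is already decided; the tree is infinite iff a valid periodic 3DM exists, and Weak K\"onig's Lemma detects this in co-RE. For 1D PSPACE, a satisfiable instance admits a periodic 3D matching of period at most $2^{O(|V|)}$ by pigeonhole on the finite ``frontier state'' (which colored vertices in the current and adjacent strips are already matched); an NPSPACE algorithm guesses selected triples strip by strip, verifies local coverage, and accepts upon returning to the initial frontier, with Savitch's theorem yielding PSPACE.

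The main obstacle I anticipate is the variable gadget. It must be planar, live in $O(1)$ grid cells, be properly 3-colored bipartite, and emit the same truth signal to up to three ports whose compass directions are imposed by the ambient planar drawing. A cyclic chain of alternating ``true'' and ``false'' triples over a small arrangement of colored elements handles the ``force equal value at all occurrences'' requirement, but orienting the ports so that every combination used by the SAT drawing is realizable inside one fixed constant-size layout requires case analysis; Lemma \ref{lem:degree3} cuts the number of orientation cases from four to three, at which point a small fixed library of reflections and rotations suffices.
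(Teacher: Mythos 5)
Your proposal is correct and takes essentially the same route as the paper: the paper also reduces from Local Periodic Planar 1-in-3SAT by locally substituting Dyer--Freeze's constant-size planar variable and clause gadgets (your cyclic variable gadget and single-triple clause gadget are exactly theirs), relying on the constant grid size of the gadgets to preserve locality, planarity, and periodicity. Your explicit membership arguments (Weak K\"onig for co-RE in 2D, frontier-state pigeonhole plus NPSPACE for 1D) are left implicit in the paper but follow the same template as Theorem~\ref{thm:periodic-cnf-sat}, so there is nothing substantively different.
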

\begin{proof}
  We follow Dyer and Freeze's NP-hardness reduction
  from (finite) Planar 1-in-3SAT to (finite) Planar 3DM \cite{Dyer-Freeze-1986}.
  Figure~\ref{fig:3dm-gadgets} gives the gadgets that locally replace each variable and clause, which are drawn on a constant-size grid.
\end{proof}

\begin{figure}
  \centering
  \subcaptionbox
     {Variable gadget.}%
     {\includegraphics[scale=0.25]{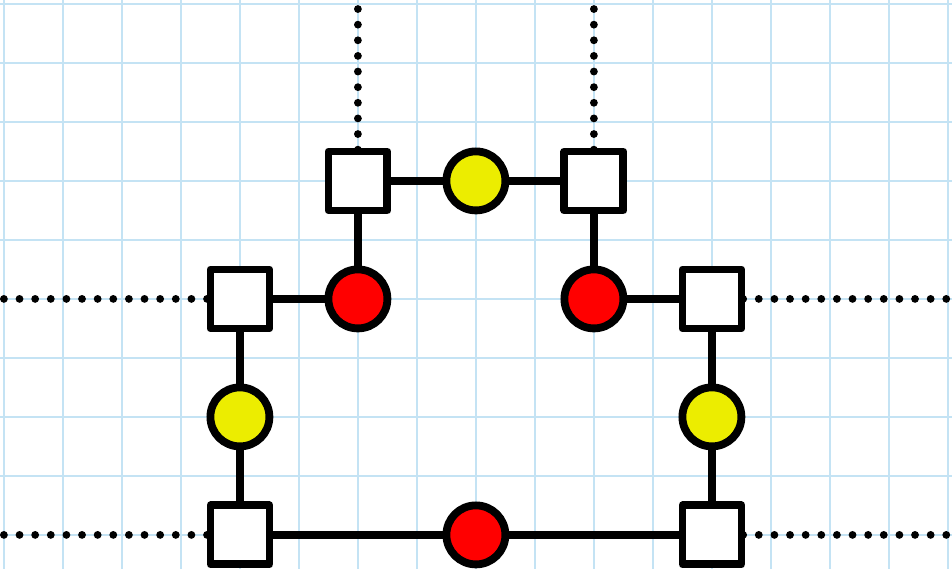}}%
  \hfil
  \subcaptionbox
     {Clause gadget.}%
     {\includegraphics[scale=0.25]{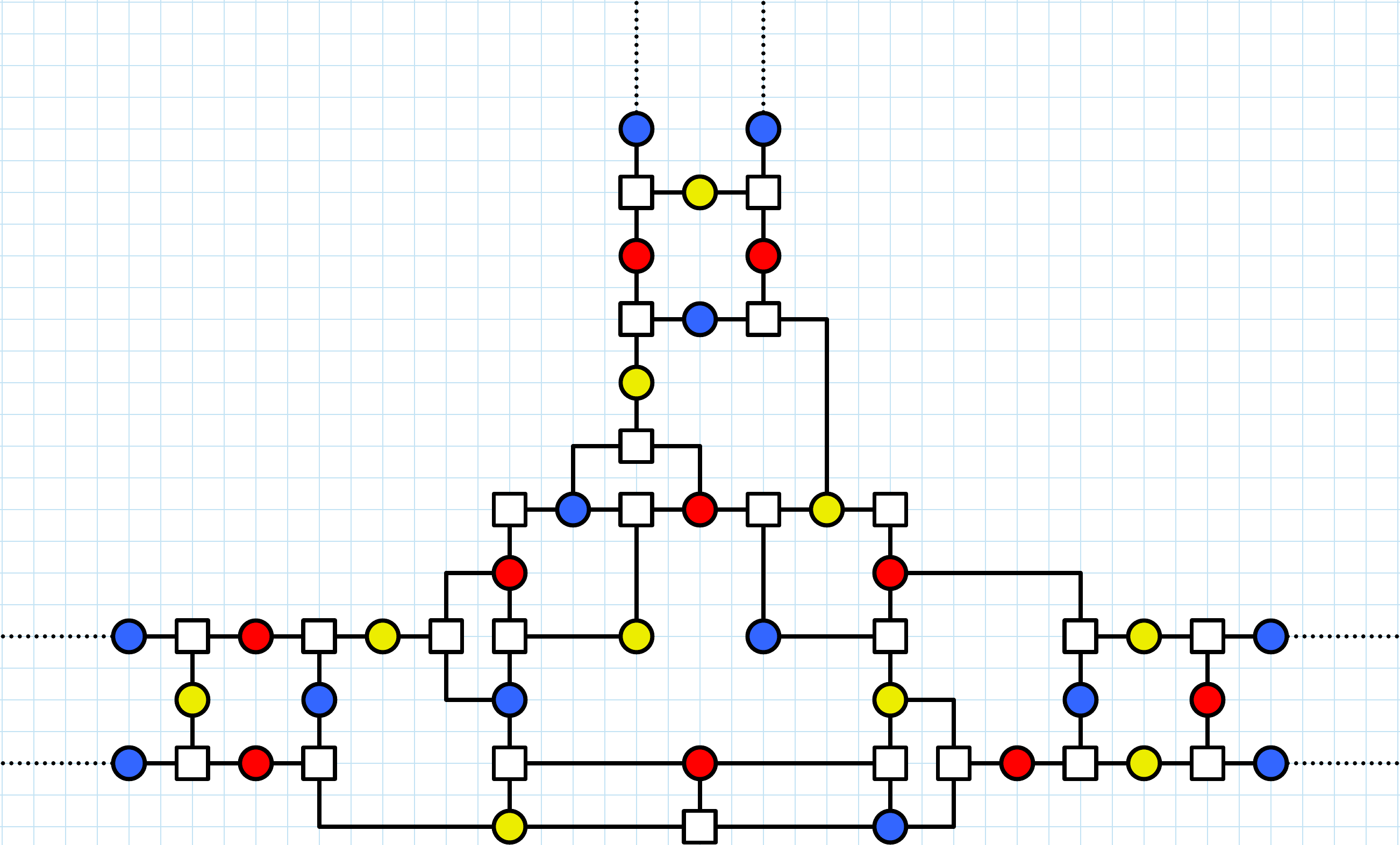}}%
  \caption{%
    Gadgets from Dyer and Freeze's reduction from Planar 1-in-3SAT to Planar 3DM.  Based on \cite[Figures 3, 4, and 5]{Dyer-Freeze-1986}.}%
  \label{fig:3dm-gadgets}
\end{figure}


\subsection{Periodic Planar Trichromatic Graph Orientation}

In \defn{Planar Trichromatic Graph Orientation}, we are given an undirected graph with vertex set $V = T \cup Z$, where edges are colored red, green, or blue.
Vertices in $T$ have exactly one incident edge of each color (trichromatic), and vertices in $Z$ are incident to edges of exactly one color (monochromatic).
The goal of the problem is to orient (direct) the edges so that the vertices in $T$ have indegree 0 or 3, and vertices in $Z$ have indegree exactly~1.
We can assume that the graph is 3-regular: vertices in $T$ are of degree 3 by definition; vertices in $Z$ of degree 2 can be contracted; and vertices in $Z$ of degree larger than 3 can be decomposed into a binary tree.
Thus we call vertices in $T$ \defn{0-or-3-in-3} and vertices in $Z$ \defn{1-in-3}.
This problem is a restricted version of the 1-in-3 Graph Orientation problem of Horiyama et al.~\cite{Horiyama-Ito-Nakatsuka-Suzuki-Uehara-2017}.
(Specifically, we add the planarity and trichromatic constraints.)

The \defn{Periodic Planar Trichromatic Graph Orientation} problem is one for which the underlying graph is infinite, periodic, and planar.

\begin{theorem}\label{thm:planar-graph-orientation}
  Periodic Planar Trichromatic Graph Orientation is co-RE-complete in 2D and PSPACE-complete in 1D.
\end{theorem}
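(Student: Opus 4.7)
The plan is to obtain membership via a direct encoding as Periodic Planar CNF SAT, and both hardness results via a single, local, planarity-preserving reduction from Periodic Planar 3DM (Theorem~\ref{thm:planar-3dm}).

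For membership, I would introduce one Boolean protovariable per protoedge, interpreted as that edge's direction. Each vertex constraint then becomes a constant-size CNF on its (at most three) incident edge variables: a $T$-vertex yields the 0-or-3-in-3 constraint, and a $Z$-vertex (assumed to have degree~$3$ as in the problem statement) yields the 1-in-3 constraint. The underlying periodic SAT graph is obtained from the planar orientation drawing by a local gadget substitution, so locality and planarity are preserved. Theorem~\ref{thm:periodic-cnf-sat} then places the problem in co-RE in 2D and PSPACE in 1D.

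For hardness, given a periodic planar 3DM instance with triple protovertices $T$ and colored element protovertices $R\cup G\cup B$, I keep the same graph, the same planar periodic drawing, and the same protoedges; only the vertex types and edge colors change. Each triple protovertex $t=(r,g,b)\in T$ becomes a $T$-vertex of the orientation instance, with its three incident edges colored red, green, and blue according to their element endpoints (so $t$ is trichromatic by construction). Each color-$c$ element protovertex becomes a $Z$-vertex, and every incident edge is colored $c$, so it is monochromatic. I interpret an edge as oriented \emph{out of} the triple vertex $t$ iff $t$ is selected in the matching. Then the 0-or-3-in-3 constraint at $t$ says exactly that the three edges of $t$ agree on whether $t$ is selected, while the 1-in-3 constraint at an element vertex $v$ says exactly that precisely one incident triple is selected, which is the 3D matching condition. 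This gives a bijection between 3D matchings and valid orientations.

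The verification reduces to checking that this assignment of types is well defined, that locality and planarity are preserved (no edges are added or removed; the periodic fundamental domain is unchanged), and that both directions of the equivalence go through. The only mild obstacle I anticipate is that element vertices in a 3DM instance can have arbitrary degree, whereas the problem statement assumes 3-regularity; this is handled exactly as noted in the definition, by contracting degree-2 monochromatic vertices and replacing a high-degree monochromatic $Z$-vertex by a small binary tree of degree-3 monochromatic $Z$-vertices (whose 1-in-3 constraints compose via an auxiliary edge in each internal node to the required 1-in-$d$ on the original incident edges). Each such replacement fits inside a constant-size grid patch around the original vertex and therefore preserves planarity, locality, and the periodic structure. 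Since Periodic Planar 3DM is co-RE-complete in 2D and PSPACE-complete in 1D by Theorem~\ref{thm:planar-3dm}, both hardness bounds transfer to Periodic Planar Trichromatic Graph Orientation, completing the theorem.
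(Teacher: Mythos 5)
Your proposal is correct and takes essentially the same approach as the paper: the hardness direction is the identical relabeling reduction from Periodic Planar 3DM (triples become trichromatic $0$-or-$3$-in-$3$ vertices, elements become monochromatic $1$-in-$3$ vertices, edges inherit the element's color, and selected triples correspond to one of the two homogeneous orientations at a $T$-vertex), preserving the drawing, locality, and periodicity. You additionally spell out the membership bounds via an edge-variable encoding into local Periodic CNF SAT and the binary-tree degree reduction for high-degree element vertices, both of which the paper handles only implicitly (the latter in the problem definition), and these details check out.
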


\begin{proof}
  We reduce from Periodic Planar 3DM. Given a planar drawing of a planar periodic 3DM instance, we transform it into a planar periodic trichromatic graph instance. The graph and its drawing are identical, except that the vertices are uncolored. Vertices in $T$ in the 3DM instance become the vertices in $T$ in the trichromatic graph orientation instance, and vertices in $Z$ are the vertices in $R \cup G \cup B$ in the 3DM instance. Edges incident to $R$ in 3DM are colored red, edges incident to $G$ are colored green, and edges incident to $B$ are colored blue.
  Because all edges connect a colored vertex to a blank vertex in 3DM, the edge colors are well-defined in the trichromatic graph orientation instance.

  A solution to the 3DM instance can be transformed into a solution of the trichromatic graph orientation instance as follows: for each vertex $t \in M\subseteq T$, orient its three incident edges towards it; and for each vertex $t \in T\setminus M$, orient its incident edges away from it. This orientation satisfies the conditions of the trichromatic graph orientation problem if and only if $M$ is a perfect matching in the 3DM instance.

  Conversely, given a solution to the trichromatic graph orientation instance, we can extract a perfect matching for the 3DM instance by collecting all vertices $t \in T$ of indegree 3. The indegree-1 condition for vertices in $Z$ ensures that every colored vertex is covered exactly once.
\end{proof}

\section{Periodic Graph Algorithms \label{sec:periodic-algs}}

In this section, we give polynomial-time algorithms
for some $d$D periodic graph algorithms:
tractable cases of SAT
(Sections~\ref{sec:2sat} and~\ref{sec:horn})
and bipartite perfect matching
(Section~\ref{sec:matching}).

\subsection{2SAT and Reachability \label{sec:2sat}}

The \defn{Periodic 2SAT} problem is the special case of Periodic CNF SAT where each clause has at most two literals.  The 1-dimensional version of this problem was shown to be solvable in polynomial time by \cite{MaratheHuntStearnsRosenkrantz1995}, while the 2-dimensional version was shown to be decidable by \cite{Freedman1998ksat}.

Here we observe that existing results \cite{Freedman1998ksat, HoftingWanke1993}
imply $d$-dimensional Periodic 2SAT can be solved in polynomial time:

\begin{theorem}\label{thm:periodic-2sat}
  Periodic 2SAT can be solved in polynomial time in $d$D.
\end{theorem}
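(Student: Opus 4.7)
The plan is to adapt the classical implication-graph algorithm for 2SAT (Aspvall--Plass--Tarjan) to the periodic setting, using the polynomial-time periodic reachability algorithm of \cite{HoftingWanke1993} as a black box.

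First, I would construct a periodic directed \emph{implication graph} $\PER{H}$ from the given Periodic 2SAT instance. Its protovertices are the $2|V|$ literals $v$ and $\neg v$ for each protovariable $v \in V$. For each protoclause $\sigma_1 v_{i_1}^{\vec x_1} \lor \sigma_2 v_{i_2}^{\vec x_2}$, I would add two protoedges encoding the implications $\neg \sigma_1 v_{i_1}^{\vec x_1} \to \sigma_2 v_{i_2}^{\vec x_2}$ and $\neg \sigma_2 v_{i_2}^{\vec x_2} \to \sigma_1 v_{i_1}^{\vec x_1}$ (treating a unit clause $\ell$ as the edge $\neg\ell \to \ell$). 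Each protoedge inherits the offset vector from its clause, so locality of the input 2SAT instance implies locality of $\PER{H}$.

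Next, for each of the $|V|$ protovariables $v$, I would invoke the polynomial-time $d$D periodic reachability algorithm of \cite{HoftingWanke1993} on $\PER{H}$ to test whether $v^{\vec 0}$ reaches $\neg v^{\vec 0}$ and whether $\neg v^{\vec 0}$ reaches $v^{\vec 0}$. The algorithm returns UNSAT if some protovariable witnesses both reachabilities, and SAT otherwise. By periodicity, the choice of the translate $\vec 0$ is without loss of generality, so the overall running time is polynomial in $|V|+|E|$.

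The UNSAT direction of correctness is immediate: a mutual reachability witnesses a finite chain of implications that forces $v^{\vec 0}$ to be simultaneously true and false. The SAT direction is the main obstacle, and the cleanest approach is to sidestep constructing an explicit periodic satisfying assignment from the (potentially infinite) SCC structure of $\PER{H}$, which would require a delicate analysis of how strongly connected components interact across lattice translates. Instead, I would combine two ingredients: (i) for any finite set $F$ of variables, the implication graph of the finite sub-formula consisting of clauses over $F$ embeds as a subgraph of $\PER{H}$, so the absence of mutually reachable literal/negation pairs in $\PER{H}$ also holds in this subgraph, and the classical Aspvall--Plass--Tarjan characterization yields a satisfying assignment for the sub-formula; and (ii) the Weak K\"onig's Lemma compactness argument from the proof of Theorem~\ref{thm:periodic-cnf-sat} then lifts these finite satisfying assignments to a satisfying assignment for the entire periodic formula.
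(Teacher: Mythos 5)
Your proposal is correct and takes essentially the same route as the paper: both construct the periodic implication graph and reduce to the polynomial-time periodic reachability algorithm of Hofting and Wanke \cite{HoftingWanke1993}, checking for each protovariable whether $v^{\vec 0}$ and $\neg v^{\vec 0}$ are mutually reachable. The only differences are cosmetic: you justify the SAT direction explicitly via Weak K\"onig's Lemma plus the finite Aspvall--Plass--Tarjan characterization (the paper instead cites Freedman's reduction), and you encode unit clauses as edges $\neg\ell\to\ell$ rather than via the paper's designated $\bot$ literal, which is equivalent.
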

\begin{proof}
  Freedman \cite{Freedman1998ksat} reduces this problem to the \defn{Periodic Reachability} problem: given a directed periodic graph $\PER{G} = (\PER{V}, \PER{E})$ and two vertices $u, v \in \PER{V}$,
  determine whether there exists a path from $u$ to $v$.
  Their reduction is as follows.
  Given a $d$-dimensional Periodic 2SAT instance, construct a directed periodic graph which has a vertex for each literal, and edges $\overline{x} \to y$ and $\overline{y} \to x$ for each width-2 clause $x \lor y$, where $\overline{x}$ denotes the negation of the literal $x$.
  Then a solution to the $d$-dimensional Periodic 2SAT instance exists if and only if, for every variable $x$, at most one of the following is true:
  \begin{itemize}
  \item there is a path from $x$ to $\overline{x}$, or a path from $x$ to $\bot$;
  \item there is a path from $\overline{x}$ to $x$, or a path from $\overline{x}$ to $\bot$;
  \end{itemize}
  where $\bot$ denotes any literal $y$ such that there exists a width-1 clause $\overline{y}$.
  Since the instance is periodic, this condition needs only to be checked once for each protovariable.

  Hofting and Wanke \cite{HoftingWanke1993} give a polynomial-time algorithm for Periodic Reachability in $d$-dimensional directed local periodic graphs.
  It follows that $d$-dimensional local Periodic 2SAT can be solved in polynomial time as well.
\end{proof}

\subsection{(Dual) Horn SAT \label{sec:horn}}

The \defn{Periodic Horn SAT} problem is the special case of Periodic CNF SAT where every clause has at most one positive literal.
The \defn{Periodic Dual Horn SAT} problem is the opposite case where every clause has at most one negative literal.
Marathe et al.~\cite{MaratheHuntRosenkrantzStearns1998} gave a polynomial-time algorithm for the 1D and 2D versions of these problems.
Here we observe that this algorithm works independent of the number of dimensions and can be made to run in linear time.

\begin{theorem}\label{thm:periodic-horn}
  Periodic Horn SAT and Periodic Dual Horn SAT
  can be solved in linear time in $d$D.
  Furthermore, all satisfiable instances have periodic solutions with period 1.
\end{theorem}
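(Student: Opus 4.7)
The plan is to show that the periodic Horn SAT problem is essentially no harder than its finite counterpart, once we restrict attention to period-1 assignments. Specifically, I will form a finite Horn SAT instance $H$ by taking each protoclause and stripping the position offsets from its literals (so $\sigma_1 v_{i_1}^{\vec x_1} \lor \cdots \lor \sigma_c v_{i_c}^{\vec x_c}$ becomes $\sigma_1 v_{i_1} \lor \cdots \lor \sigma_c v_{i_c}$ on the protovariables). By construction, every period-1 assignment of the periodic instance satisfies the periodic instance if and only if the corresponding assignment of the protovariables satisfies $H$. Thus, solving $H$ with the standard linear-time unit-propagation algorithm for (finite) Horn SAT yields, when $H$ is satisfiable, a period-1 assignment which lifts to a satisfying periodic assignment.

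The heart of the argument is the claim that every satisfiable periodic Horn SAT instance admits a satisfying \emph{period-1} assignment. Given any satisfying assignment $\alpha$ of the periodic instance, define the period-1 assignment $\beta$ by taking a coordinatewise infimum: $\beta(v_i) = \bigwedge_{\vec x \in \mathbb Z^d} \alpha(v_i^{\vec x})$, i.e., $\beta(v_i) = \text{T}$ iff $\alpha(v_i^{\vec x}) = \text{T}$ for every shift $\vec x$. I claim $\beta$ satisfies every periodic clause. Consider any protoclause $C$; because it is Horn, at most one of its literals is positive. If $\beta$ failed to satisfy some shift $C^{\vec \Delta}$, then every negative literal of $C^{\vec \Delta}$ would evaluate to false under $\beta$ (i.e., the underlying variables are true), forcing $\alpha(v_{i_j}^{\vec y}) = \text{T}$ for the corresponding protovariables $v_{i_j}$ at \emph{all} shifts $\vec y$. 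If $C$ has a positive literal $v_{i_1}$, then $\beta(v_{i_1}) = \text{F}$ picks out a shift $\vec y_0$ where $\alpha(v_{i_1}^{\vec y_0}) = \text{F}$; translating $C$ appropriately then exhibits a shift of $C$ violated by $\alpha$, a contradiction. If $C$ has no positive literal, the same conclusion follows immediately from the universal truth of the negated variables. Hence $\beta$ is a satisfying period-1 assignment. This is the standard ``Horn formulas are closed under pointwise min'' argument adapted to the infinite setting; periodicity is preserved because the min is taken over an entire orbit.

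Combining the two pieces gives the algorithm: strip positions to build $H$, run linear-time Horn SAT on $H$, and either report unsatisfiability or return the resulting period-1 assignment. Correctness is immediate from the lemma above, and the running time is linear since both the stripping step and finite Horn SAT solving are linear. The Dual Horn case is symmetric, using the pointwise \emph{supremum} $\beta(v_i) = \bigvee_{\vec x} \alpha(v_i^{\vec x})$ instead of the infimum; equivalently, negate all variables to reduce to Horn SAT. The ``periodic solution with period 1'' statement in the theorem is exactly the lemma.

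The main obstacle I anticipate is the period-1 lemma itself, in particular handling the case where $\beta(v_{i_1}) = \text{F}$: this requires selecting a witness shift $\vec y_0$ and aligning the translate $C^{\vec y_0 - \vec x_1}$ with the negative-literal information, which must be carefully coordinated across indices. Everything else is bookkeeping, and the dimension $d$ plays no role beyond indexing the orbit over which we take the infimum, which is why the argument generalizes uniformly to all $d$.
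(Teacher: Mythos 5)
Your proposal is correct and follows the same overall structure as the paper's proof: both reduce to the finite Horn SAT instance obtained by identifying all copies of each protovariable, and both hinge on the lemma that a satisfiable periodic Horn instance has a period-1 model. The only difference is how that lemma is established. The paper uses the least-model/derivability characterization of Horn satisfiability: the set of derivable variables is invariant under translation (derivations translate), so the minimal model is automatically 1-periodic. You instead take an arbitrary model $\alpha$ and form the meet $\beta(v_i) = \bigwedge_{\vec x}\alpha(v_i^{\vec x})$ over its orbit of translates, then verify directly that Horn clauses are preserved under this (infinite) conjunction; your case analysis on the positive literal and the witness shift $\vec y_0$ is carried out correctly, and the dual-Horn case via the join (or global negation) is likewise fine. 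Both arguments exploit the same lattice-theoretic fact about Horn models---closure under arbitrary meets---the paper via the bottom element of the model lattice, you via symmetrization of a given element; yours has the mild advantage of not needing to set up the inference-rule formalism, while the paper's least-model view makes the 1-periodicity of the canonical solution immediate rather than requiring the translation-alignment bookkeeping you flag at the end.
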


\begin{proof}
  Given an instance of Periodic Horn SAT, we can construct a finite Horn SAT instance by identifying all the variables corresponding to the same protovariable.
  A solution to this Horn SAT instance is equivalent to a 1-periodic solution to the original Periodic Horn SAT problem.
  Because Horn SAT can be solved in linear time \cite{DowlingGallier1984}, it remains only to show that a solution to the Periodic Horn SAT instance exists if and only if a 1-periodic solution exists.

  A Horn clause with a positive literal, such as $x \lor \neg y \lor \neg z$, can be viewed as an inference rule
  "Given $y$ and $z$, derive $x$".
  Similarly, a Horn clause with only negative literals, such as $\neg y \lor \neg z$, can be viewed as the rule "Given $y$ and $z$, derive $\bot$".
  Call a variable \defn{derivable} if it can be obtained by finitely many applications of these rules.
  A solution to a (possibly infinite) Horn SAT instance exists if and only if $\bot$ is not derivable,
  and one such solution is obtained by setting the derivable variables to true and all other variables to false.
  For Periodic Horn SAT, if a variable is derivable, then so are all variables corresponding to the same protovariable;
  in other words, the set of derivable variables is 1-periodic.
  Thus a solution to a Periodic Horn SAT instance exists if and only if a 1-periodic solution exists.

  The same applies to Periodic Dual Horn SAT by negation of all variables.
\end{proof}

\subsection{Periodic Perfect Matching \label{sec:matching}}
A \defn{matching} in an undirected graph $G=(V,E)$ is a set $M\subseteq E$ of edges that are disjoint. A matching is \defn{perfect} if every vertex of $V$ belongs to an edge in $M$.
For a $d$-dimensional periodic graph $\PER{G}=(\PER{V},\PER{E})$, a matching $M\subseteq\PER{E}$ is \defn{periodic} if there are $d$ independent vectors $\vec \Delta_1,\ldots, \vec \Delta_d \in\mathbb Z^d$ such that an edge $(u^{\vec x},v^{\vec y})\in M$ if and only if $(u^{\vec x+\vec\Delta_i},v^{\vec y+\vec\Delta_i})\in M$ for all $i \in 1,\ldots,d$, or equivalently, if the subgraph $\PER{G}_M = (\PER{V},M)$ is a periodic graph (possibly with a different period than $\PER{G}$). If the period of $\PER{G}$ and $\PER{G}_M$ is the same, i.e., $\Delta_i$ is the unit vector along the $i$th axis, then we say the matching is \defn{1-periodic}.

For a (non-perfect) matching $M$, a vertex of $\PER{V}$ is \defn{free} if it is not in any edge of $M$. An \defn{alternating walk} is a path in $\PER{G}$ alternating edges not in $M$ and edges in $M$. In the discussion below, we allow alternating walks to be non-simple, i.e., to visit edges and vertices several times. When it is simple, we call it an \defn{alternating path}. 
If an alternating path $P$ is simple and joins two free vertices, then the matching $M$ can be augmented by taking the symmetric difference $M\oplus P$ between $M$ and $P$. In that case, $P$ is called an \defn{augmenting path}. 
Note that two alternating walks $P$ and $Q$ can never properly intersect: If they meet at some internal vertex $v$, then because exactly one of the edges incident to $v$ is in $M$, that edge must be in both $P$ and $Q$ and so $P$ and $Q$ share some subpath containing $v$.
The \defn{diameter} of a path or walk is the smallest $D$ such that for all $u^{\vec x}$ and $v^{\vec y}$ in $P$, $\|\vec x - \vec y\|_\infty \leq D$.

\begin{lemma}\label{lem:shortaugmenting}
  Let $\PER{G}=(\PER{V},\PER{E})$ be a $d$-dimensional local periodic graph, and let $M$ be a 1-periodic matching. If $\PER{G}$ admits a perfect matching, then either $M$ is perfect, or there exists an augmenting path of diameter $2d|E|$ starting from every free vertex.
\end{lemma}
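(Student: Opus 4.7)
The plan is to exhibit some finite augmenting path from $f$ by comparing $M$ against a perfect matching $M^*$ of $\PER G$, and then shrink any such path via pigeonhole on protoedges until its diameter is bounded by $2d|E|$.

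Fix a perfect matching $M^*$ of $\PER G$ (which exists by hypothesis) and consider the symmetric difference $M \oplus M^*$. Each of its components is an alternating path or cycle of maximum degree $2$. Because every vertex is matched by $M^*$ but only some by $M$, each free vertex of $M$ has degree exactly~$1$ in $M \oplus M^*$, so it is the endpoint of a path component. If this component is finite, its other endpoint is also free in $M$ and the component is an augmenting path for $M$. To guarantee finiteness from every free vertex, I apply a packing argument: for each free protovertex $v$, consider the cube $C_N = [-N, N]^d$, which contains $(2N+1)^d$ free translates of $v$. If the $M \oplus M^*$-component of every such translate were half-infinite, each would exit $C_N$ through one of only $O(N^{d-1}|E|)$ boundary edges; since the path components of $M \oplus M^*$ are vertex-disjoint, each boundary edge serves at most one component, contradicting $(2N+1)^d > O(N^{d-1}|E|)$ for $N$ large. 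Hence some translate $v^{\vec x}$ has a finite augmenting path, and by $1$-periodicity of $M$ (and hence of the free-vertex set and of $\PER E \setminus M$), translating this path yields a finite augmenting path from every translate of $v$, even though $M^*$ itself is not assumed periodic.

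Now let $P = u_0, u_1, \ldots, u_n$ be any finite augmenting path from $u_0 = f$. If $n > 2|E|$, pigeonhole produces two edges $e_j = (u_{j-1}, u_j)$ and $e_k = (u_{k-1}, u_k)$ (with $j < k$) that are translates of the same protoedge in the same orientation, so $u_k = u_j + \vec\delta$ for some nonzero $\vec\delta \in \mathbb Z^d$. Then
\[
u_0, u_1, \ldots, u_j,\; u_{k+1} - \vec\delta,\; u_{k+2} - \vec\delta,\; \ldots,\; u_n - \vec\delta
\]
is an $M$-alternating walk of strictly smaller length, ending at the free translate $u_n - \vec\delta$; the $M$-status of every edge is preserved because $M$ is $1$-periodic, and the glue edge $(u_j, u_{k+1} - \vec\delta)$ is simply $e_{k+1}$ translated by $-\vec\delta$. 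Iterating reduces the length to at most $2|E|$, after which the standard extraction (straightforward in the bipartite setting intended for Theorem~\ref{thm:matchingalgorithm}) gives a simple augmenting path of length, and hence diameter, at most $2|E| \leq 2d|E|$.

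The main obstacle is the packing step: ensuring finite augmenting paths exist from \emph{every} free vertex, not just at some point in the lattice. The density argument works because the number of free vertices inside $C_N$ grows as $N^d$ while the boundary contributes only $O(N^{d-1})$ edges; the subtle ingredient is to exploit the $1$-periodicity of $M$ (not of $M^*$) to transport the existence of a finite augmenting path at one translate of a free protovertex to every translate.
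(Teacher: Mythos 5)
Your first step --- forming $M \oplus M^*$, observing that every free vertex is the endpoint of a path component, and using a volume-versus-boundary packing argument in a large cube to force one of the $(2N+1)^d$ translates of a free protovertex to have a component that stays inside --- is exactly the paper's argument. The paper simply carries out the arithmetic you defer (``for $N$ large''): with $K=2d|E|+1$ the pigeonhole forces some boundary lattice point to absorb more than $|E|$ vertex-disjoint exit edges, a contradiction with locality, so one component of $M\oplus M^*$ lies entirely in $[1,K]^d$ and hence has diameter at most $2d|E|$. Since components of $M\oplus M^*$ are automatically \emph{simple}, this already yields the stated augmenting path, and $1$-periodicity of $M$ transports it to every translate of the free protovertex. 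Had you stopped there and made the constant explicit, you would be done.

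The gap is in your second step. The splice $u_0,\dots,u_j,\,u_{k+1}-\vec\delta,\dots,u_n-\vec\delta$ does preserve alternation and freeness of the endpoints, but it produces an alternating \emph{walk}, not a simple path, and your appeal to a ``standard extraction'' of a simple augmenting path from a short alternating walk is not valid. In a general (non-bipartite) graph, an alternating walk between two free vertices need not contain any augmenting path --- this is the blossom phenomenon, and the lemma as stated makes no bipartiteness assumption. Even in the bipartite case, the extraction is precisely the content of the paper's Lemma~\ref{lem:bipartiteaugmenting}, which is a separate, delicate argument (tracking repeated protovertices and edge orientations) and which is proved \emph{using} the present lemma; invoking it here would be circular. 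A further minor issue: after repeated splicing the terminal vertex drifts by the accumulated translation, so nothing prevents the resulting walk from closing up on itself or revisiting $u_0$, in which case no augmenting path can be recovered at all. The fix is to discard the second step entirely and extract the diameter bound from the packing argument itself, as above.
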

\begin{proof}
  Let $M^{*}$ be a perfect matching for $\PER{G}$, not necessarily periodic. Consider the symmetric difference $M \oplus M^{*}$.
  This is a collection of vertex-disjoint cycles and paths (possibly infinite), where each path starts and ends at free vertices with respect to $M$, every free vertex is covered by a path (because $M^{*}$ is perfect), and each path is an alternating path with respect to $M$. Let $\cal P$ be this collection of alternating paths. 
  Consider the hypercube $H$ containing lattice points in $[1,K]^d \cap \mathbb Z^d$, and define $\PER{V}_H$ to be the set of vertices $v^{\vec x}$ with $\vec x \in H$. 
  By symmetry, if $M$ is not perfect, there is one free vertex in $\PER{G}$, and thus there are $K^d$ copies of that free vertex in $\PER{V}_H$, and each is the endpoint of an alternating path in $\cal P$. Let $\cal P'\subseteq \cal P$ be the paths of $\cal P$ with an endpoint at a copy of that free vertex in $\PER{V}_H$.  
  Suppose that none of the paths in $\cal P'$ are fully contained in $\PER{V}_H$.
  Because the paths are vertex-disjoint, all $K^d$ paths exit $\PER{V}_H$.
  The boundary of $H$ is connected to $2dK^{d-1}$ lattice points outside of $H$. Each path in $\cal P'$ must exit $\PER{V}_H$ into one of them. By the Pigeonhole Principle, one of themm is entered by at least $\frac{K^d}{2dK^{d-1}} = K/2d$ edges. As $\PER{G}$ is local, we have 
  $K/2d \leq |E|$, or $K \leq 2d|E|$.
  Therefore, if $K = 2d|E|+1$, then one of the paths in $\cal P'$ must be contained in $\PER{V}_H$, that is, its diameter is at most $2d|E|$.
\end{proof}

We define a \defn{bipartite periodic graph} $\PER{G}=(\PER{V},\PER{E})$ to satisfy $V=R\cup B$ and every edge $(u^x,v^y) \in \PER{E}$ has one protovertex (say, $u$) in $R$ and the other protovertex (say, $v$) in $B$.
That is, the 2-coloring of the bipartite graph is preserved by the periodicity.
If we had a connected periodic graph that is a bipartite graph, we can modify it into a bipartite periodic graph by doubling the period in each dimension. 


\begin{lemma}
  If a periodic graph $\PER{G}$ is bipartite and connected, then its bipartition is 2-periodic.
\end{lemma}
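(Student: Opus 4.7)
The plan is to exploit the rigidity of the bipartition in a connected bipartite graph combined with the fact that translations act as graph automorphisms on $\PER{G}$.

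First I would recall that since $\PER{G}$ is connected and bipartite, the 2-coloring $\chi : \PER{V} \to \{0,1\}$ is unique up to the global swap $\chi \leftrightarrow 1-\chi$. Concretely, fix a basepoint $v_0^{\vec 0}$ with $\chi(v_0^{\vec 0}) = 0$; then for any other vertex $w \in \PER{V}$, the color $\chi(w)$ equals the parity of the length of any path from $v_0^{\vec 0}$ to $w$, and connectedness plus bipartiteness guarantees this parity is well-defined.

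Next I would observe that for every $\vec\Delta \in \mathbb Z^d$, translation $T_{\vec\Delta}: v^{\vec x} \mapsto v^{\vec x + \vec\Delta}$ is an automorphism of $\PER{G}$ (this is immediate from the definition of $\PER{E}$). An automorphism of a connected bipartite graph must either preserve both color classes or swap them. Hence we obtain a well-defined map $\phi: \mathbb Z^d \to \mathbb Z/2$ where $\phi(\vec\Delta) = 0$ if $T_{\vec\Delta}$ preserves $\chi$ and $\phi(\vec\Delta) = 1$ if $T_{\vec\Delta}$ swaps it. Since $T_{\vec\Delta_1 + \vec\Delta_2} = T_{\vec\Delta_1} \circ T_{\vec\Delta_2}$, the composition of two swaps is a preservation, so $\phi$ is a group homomorphism.

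The conclusion is now one line: for any $\vec\Delta \in \mathbb Z^d$ we have $\phi(2\vec\Delta) = 2\,\phi(\vec\Delta) = 0$ in $\mathbb Z/2$, so translation by $2\vec\Delta$ preserves $\chi$. Equivalently, $\chi(v^{\vec x}) = \chi(v^{\vec x + 2\vec\Delta})$ for every protovertex $v$, every $\vec x$, and every $\vec\Delta$, which is exactly the statement that the bipartition (viewed as a labeling $\chi$ of $\PER{V}$) is $2$-periodic. There is essentially no obstacle here; the only subtle point, and the one I would be careful to spell out, is why a graph automorphism of a \emph{connected} bipartite graph must either preserve or globally swap the bipartition, since this is exactly where connectedness is used and also where the uniqueness of the 2-coloring enters.
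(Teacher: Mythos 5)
Your proof is correct, and it is essentially the paper's argument in a more algebraic packaging. The paper proves $\chi(v^{\vec x}) = \chi(v^{\vec x + 2\vec\Delta})$ directly: take a path $P$ from $v^{\vec x}$ to $v^{\vec x+\vec\Delta}$, concatenate it with its translate $P+\vec\Delta$ to get a walk of even length from $v^{\vec x}$ to $v^{\vec x+2\vec\Delta}$, and conclude the endpoints lie in the same color class. You instead record, for each $\vec\Delta$, whether the translation automorphism $T_{\vec\Delta}$ preserves or swaps the two color classes, note that this gives a homomorphism $\phi:\mathbb Z^d\to\mathbb Z/2$, and conclude $\phi(2\vec\Delta)=0$; the concatenation step in the paper is exactly the computation $\phi(2\vec\Delta)=2\phi(\vec\Delta)$ made concrete. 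Both versions rest on the same two facts --- translations are automorphisms of $\PER{G}$, and a connected bipartite graph has a unique $2$-coloring up to swapping --- and you correctly identify the ``preserve or swap'' dichotomy as the one place connectedness enters, so there is no gap.
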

\begin{proof}
  Consider the 2-coloring induced by the bipartition of $\PER{G}$, which is unique up to renaming of color classes, as the graph $\PER{G}$ is connected.
  For any vertex $v^{\vec x}$ and $\vec{\Delta}\in\mathbb Z^d$, consider a path $P$ between $v^{\vec x}$ and $v^{\vec x + \vec\Delta}$, and construct a path $P'$ from $v^{\vec x}$ to $v^{\vec x + 2\vec\Delta}$ by concatenating $P$ and $P+\vec\Delta$, its translation by $\vec\Delta$. By construction, $P'$ is of even length, and thus $v^{x+2\Delta}$ and $v^{\vec x}$ are in the same color class.
\end{proof}

Note that this lemma does not hold if the graph is not connected. For example, take a 1D periodic graph with one protovertex $v$ and one protoedge $(v^{0}, v^{p})$ with $p$ prime, the periodicity of any 2-coloring of this graph is at least $2p$. Locality does not help either as each edge can be split into $p$ local edges. For instance, let $V=\{v_0,\ldots,v_{p-1}\}$ and $E=\{(v_i^{0},v_{i+1}^{1})\mid i=0,\ldots,p-1\}\cup\{(v_{p-1}^0,v_0^1)\}$, with $p$ prime. The local 1D periodic graph $\PER{G}=(\PER{V},\PER{E})$ is bipartite, but the periodicity of any 2-coloring of this graph is again at least $2p$.

\begin{lemma}\label{lem:bipartiteaugmenting}
  Let $\PER{G}=(\PER{V},\PER{E})$ be a $d$-dimensional bipartite periodic graph, and let $M$ be a 1-periodic matching.
  If $\PER{G}$ admits a perfect matching, then either $M$ is perfect, or there exists an augmenting path of length less than $|V|$, where each protovertex appears at most once.
\end{lemma}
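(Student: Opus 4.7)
The plan is to start with an augmenting path $P = w_0, w_1, \ldots, w_k$ guaranteed by Lemma~\ref{lem:shortaugmenting}, and iteratively shortcut it via a translation-splice argument until no protovertex appears more than once. The shortcut will exploit both bipartiteness and the 1-periodicity of $M$: whenever two positions of the walk carry the same protovertex, one can glue a prefix of $P$ to a translate of a suffix of $P$ and obtain a strictly shorter alternating walk between free vertices.

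First I will establish three structural facts about any alternating walk $w_0, \ldots, w_k$ from a free vertex to a free vertex in a bipartite periodic graph with a 1-periodic matching $M$: (i) every intermediate vertex is matched, so $k$ is odd, and the even and odd positions lie on opposite sides of the bipartition; (ii) any two walk vertices $w_i, w_j$ sharing a protovertex have the same bipartition class (hence $i \equiv j \pmod 2$) and the same matched/free status (by 1-periodicity of $M$); (iii) the protovertex of each free endpoint occurs nowhere else in the walk, because any other occurrence would — by (ii) — force an intermediate vertex to be free, contradicting (i).

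Next I will define the shortcut. Suppose $w_i = p^{\vec x_i}$ and $w_j = p^{\vec x_j}$ share a common protovertex $p$ with $i < j$. By (iii) we have $0 < i < j < k$, and by (ii) we have $i \equiv j \pmod 2$. Let $\vec\Delta = \vec x_i - \vec x_j$, so $w_j + \vec\Delta = w_i$, and form
$$P' = w_0, w_1, \ldots, w_i,\; w_{j+1} + \vec\Delta,\; w_{j+2} + \vec\Delta,\; \ldots,\; w_k + \vec\Delta.$$
The endpoints of $P'$ are the free vertices $w_0$ and $w_k + \vec\Delta$ (the latter free because the free/matched status is 1-periodic). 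The prefix edges come from $P$, the suffix edges come from the translate $P + \vec\Delta$, and the bridge edge $(w_i, w_{j+1} + \vec\Delta)$ is precisely the translate of $(w_j, w_{j+1})$ by $\vec\Delta$, so it lies in $\PER{E}$. The same-parity condition on $i, j$ makes its $M$-membership agree with the alternation pattern required at position $i$. Hence $P'$ is an alternating free-to-free walk of length $k - (j - i) < k$.

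Iterating this shortcut reduces length by a positive integer each time, so the procedure terminates at a walk $P^*$ with pairwise distinct protovertices. Distinct protovertices force distinct vertices, so $P^*$ is simple — a genuine augmenting path — with at most $|V|$ vertices and therefore length strictly less than $|V|$. The main subtlety I expect is maintaining alternation and edge validity across the splice: specifically, leaning on the same-parity condition on $i, j$ to keep the bridge edge's $M$-status compatible with both prefix and suffix, and leaning on the endpoint uniqueness in (iii) so the shortcut indices always remain strictly interior, avoiding degenerate splices that would break alternation or collapse the walk entirely.
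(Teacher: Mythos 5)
Your proof is correct, and its skeleton matches the paper's: both arguments take a finite free-to-free alternating walk (supplied by Lemma~\ref{lem:shortaugmenting}), locate a repeated protovertex, translate the suffix by $\vec\Delta = \vec x_i - \vec x_j$ so that the two occurrences coincide, and splice to obtain a strictly shorter alternating walk between free vertices (extremal argument in the paper, iterative shortening in yours --- a cosmetic difference). Where you genuinely differ is in certifying that the splice preserves alternation. The paper works with a \emph{shortest} walk and the \emph{first} repeated protovertex, and argues via the observation that two alternating walks meeting at an internal vertex must share that vertex's unique matched edge: minimality of the repeated index rules out the shared edge being the predecessor edge, so it is the successor $M$-edge, and bipartiteness fixes its orientation, making the splice seamless. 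You instead note that the periodic bipartition forces $i \equiv j \pmod 2$, so the translated edge $(w_j, w_{j+1}) + \vec\Delta$ automatically has the $M$-status demanded at position $i$; this works for \emph{any} repeated pair, not just the first, and dispenses with the shared-matched-edge observation entirely --- arguably a cleaner route through the key step. The only loose stitch is your fact (iii): you justify it by saying a second occurrence of an endpoint's protovertex would make an intermediate vertex free, but a priori that second occurrence could be the \emph{other endpoint}; this case is already excluded by your facts (i) and (ii), since the two endpoints have opposite index parities and hence opposite bipartition classes, so they cannot share a protovertex. With that one-line patch the argument is complete.
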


\begin{proof}
  By Lemma \ref{lem:shortaugmenting}, if $\PER{G}$ admits a perfect matching and $M$ is imperfect, then there is a finite augmenting path, that is, a finite simple alternating walk joining two free vertices.
  Take a \emph{shortest} alternating walk $P=(v=v_{i_1}^{\vec x_1},v_{i_2}^{\vec x_2},\ldots,v_{i_k}^{\vec x_k})$
  that joins two free vertices in $\PER{G}$
  but where we allow vertices and edges to repeat (so it may not be augmenting).
  Because finite alternating walks have odd length and $\PER{G}$ is bipartite periodic,
  one endpoint must be in $\PER{R}$ and other endpoint must be in $\PER{B}$.
  Assume without loss of generality that $v_{i_1}\in B$ and $v_{i_k}\in R$.
  Suppose for contradiction that some protovertex appears more than once in $P$.
  Then we claim we can construct a shorter alternating walk, contradicting that $P$ was shortest.
  Let $j$ be the smallest integer for which protovertex $v_{i_j}$ appears more than once in $P$, and let $j' > j$ be the index of the second appearance $v_{i_{j'}}^{\vec x_{j'}}$ of the same protovertex $v_{i_j} = v_{i_{j'}}$. Then the walk $P$ and its translation $P+{\vec \Delta}$ by $\vec \Delta := \vec x_{j}-\vec x_{j'}$ intersect at $v_{i_j}^{\vec x_j} = v_{i_{j'}}^{\vec x_{j'}+\vec\Delta}$.
  Because $P$ starts and ends at free vertices, and $M$ is periodic so vertex freedom is preserved under translations by $\vec \Delta$, we must have $1 < j < k$.
  Thus vertex $v_{i_j}^{\vec x_j}$ must be incident to an edge in~$M$ that is both in $P$ and in $P+\Delta$; it cannot be $(v_{i_{j-1}}^{\vec x_{j-1}}, v_{i_j}^{\vec x_j})$
  or else that edge would be in both $P$ and $P+\vec \Delta$, so $v_{i_{j-1}} = v_{i_{j'-1}}$, contradicting that $j$ is smallest.
  Hence, the edge $(v_{i_j}^{\vec x_j}, v_{i_{j+1}}^{\vec x_{j+1}})$ is in $M$, and is common to both $P$ and $P+\vec \Delta$.
  Because $\PER{G}$ is local bipartite, both $P$ and $P+\vec \Delta$ start at a vertex in $\PER{B}$, so this edge is oriented the same in both walks, and thus
  $(v_{i_j}^{\vec x_j}, v_{i_{j+1}}^{\vec x_{j+1}}) =
  (v_{i_{j'}}^{\vec x_{j'}+\vec \Delta}, v_{i_{j'+1}}^{\vec x_{j'+1}+\vec \Delta})$.
  We can now construct a shorter alternating walk $P'=(v_{i_1}^{\vec x_1},v_{i_2}^{\vec x_2},\ldots,v_{i_j}^{\vec x_j},v_{i_{j'+1}}^{\vec x_{j'+1}+\vec \Delta},\ldots,v_{i_k}^{\vec x_k+\vec \Delta})$, skipping $j'-j > 0$ vertices strictly between $v_{i_j}^{\vec x_j}$ and $v_{i_{j'+1}}^{\vec x_{j'+1}}$.
  As argued above, this walk connects two free vertices (the same protovertices).
  This walk may repeat vertices and edges, which is why we needed to allow repetitions when defining $P$. But in the end, we show that the walk cannot repeat a protovertex, so it cannot actually repeat a vertex, and thus it is a valid augmenting path.
\end{proof}

\begin{theorem}\label{thm:periodicmatching}
  If a $d$-dimensional bipartite periodic graph $\PER{G}$ admits a perfect matching, then it admits a 1-periodic perfect matching.
\end{theorem}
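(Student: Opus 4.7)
The plan is to construct a 1-periodic perfect matching by iteratively augmenting, starting from the empty matching $M_0 = \emptyset$ (which is trivially 1-periodic), using Lemma~\ref{lem:bipartiteaugmenting} at each step to obtain a short augmenting path whose translates behave well globally.

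Given a 1-periodic matching $M_i$ that is not perfect, invoke Lemma~\ref{lem:bipartiteaugmenting} (valid since $\PER{G}$ admits a perfect matching by hypothesis) to get an augmenting path $P = (v_{i_1}^{\vec x_1}, \ldots, v_{i_k}^{\vec x_k})$ in which every protovertex $v_{i_j}$ is distinct. The key observation is that this distinctness forces the translates $\{P + \vec\Delta : \vec\Delta \in \mathbb Z^d\}$ to be pairwise vertex-disjoint: if $v_{i_j}^{\vec x_j + \vec\Delta} = v_{i_{j'}}^{\vec x_{j'} + \vec\Delta'}$, then $v_{i_j} = v_{i_{j'}}$ forces $j = j'$, and then $\vec\Delta = \vec\Delta'$.

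Because $M_i$ is 1-periodic, the set of free vertices is 1-periodic, so every translate $P + \vec\Delta$ is itself an augmenting path for $M_i$. I then augment all translates simultaneously by setting
\[
M_{i+1} := M_i \oplus \bigcup_{\vec\Delta \in \mathbb Z^d} (P + \vec\Delta).
\]
Vertex-disjointness of the translates ensures that this global symmetric difference is locally identical to augmenting with a single path at each vertex, so $M_{i+1}$ is a valid matching. Since $M_i$ and the periodic union $\bigcup_{\vec\Delta}(P + \vec\Delta)$ are both 1-periodic, so is $M_{i+1}$.

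Each augmentation saturates the two endpoint protovertices of $P$ (which are distinct protovertices: $P$ has odd length and its endpoints lie in opposite color classes $R$ and $B$), and by 1-periodicity all copies of these protovertices become matched. Thus the number of free protovertices strictly decreases by two at each step, so after at most $|V|/2$ iterations the process terminates with a 1-periodic perfect matching. The only delicate step, and the reason Lemma~\ref{lem:bipartiteaugmenting} was proved in the strong form ``each protovertex appears at most once,'' is the vertex-disjointness argument that legitimizes the simultaneous periodic augmentation; the rest is bookkeeping.
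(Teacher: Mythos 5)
Your proposal is correct and follows essentially the same route as the paper's proof: start from the empty (1-periodic) matching, use Lemma~\ref{lem:bipartiteaugmenting} to obtain an augmenting path with no repeated protovertex, observe that its $\mathbb Z^d$-translates are pairwise vertex-disjoint, augment by all of them simultaneously to preserve 1-periodicity, and repeat until no free protovertex remains. Your write-up is in fact slightly more explicit than the paper's on the disjointness and termination bookkeeping, but the argument is the same.
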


\begin{proof}
  Suppose $\PER{G}=(\PER{V},\PER{E})$ admits a perfect matching $M^{*}$.
  We build a periodic perfect matching $\PER{M}$ incrementally by finding augmenting paths using Lemma~{\ref{lem:bipartiteaugmenting}}.
  As we maintain that the matching $\PER{M}$ is 1-periodic, it suffices to keep track of the set $M$ of protoedges in the matching, and initially $M=\emptyset$. A protovertex is \defn{free} if none of its vertices is incident to an edge in $\PER{M}$.
  We can reduce the number of free protovertices to zero as follows:
  \begin{enumerate}
  \item Take an augmenting path $P$ without any repeating protovertices, which is known to exist by Lemma~{\ref{lem:bipartiteaugmenting}}.
  \item Take all translations of $P$ by vectors in $\mathbb Z^d$.
  \item As each protovertex appears only once in $P$, all translations of $P$ are disjoint. Then augment $\PER{M}$ by the union of all translations of $P$, thereby reducing the number of free vertices. This can be done in linear time by just updating $M$.
  \item Repeat until no free protovertices remain. \qedhere
  \end{enumerate}
\end{proof}

Here again, the validity of this theorem relies on the fact that the periodic graph is bipartite periodic. Just being bipartite would not suffice which can be seen by taking the same example graph as earlier with one protovertex $v$ and one protoedge $(v^{0}, v^{p})$ with $p$ prime. This graph admits a perfect matching but its periodicity is at least $2p$. The same holds for the local 1D periodic graph shown earlier.

\begin{theorem}\label{thm:matchingalgorithm}
  Given a $d$-dimensional bipartite periodic graph $\PER{G}=(\PER{V},\PER{E})$, a perfect matching for $\PER{G}$, or whether it exists, can be computed in 
  $O(|E| \sqrt{|V|})$. The perfect matching being returned is 1-periodic.
\end{theorem}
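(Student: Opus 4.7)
The plan is to reduce the problem to ordinary bipartite perfect matching on a finite quotient graph and apply Hopcroft--Karp. Define the finite quotient $Q = (V, F)$ where the vertex set is the set of protovertices (inheriting the bipartition $V = R \cup B$) and $F = \{\{u,v\} : \exists (u^{\vec x}, v^{\vec y}) \in E\}$ is the set of protovertex pairs joined by some protoedge. Then $Q$ is a finite bipartite graph with $|V|$ vertices and at most $|E|$ edges, and is simple because the bipartite hypothesis forbids protoedges between identical protovertices.

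The key claim I would prove is that $\PER{G}$ admits a 1-periodic perfect matching if and only if $Q$ admits a perfect matching. For the forward direction, a 1-periodic matching of $\PER{G}$ is determined by its fundamental domain $M \subseteq E$; since 1-periodicity places every translate of each protoedge of $M$ into $\PER{M}$, the matching condition on $\PER{M}$ is equivalent to requiring that every protovertex of $V$ appears in at most one protoedge of $M$, which says exactly that the projection of $M$ to $Q$ is a matching (and is perfect iff $\PER{M}$ is). For the reverse direction, given a perfect matching $M_Q$ in $Q$, for each edge $\{u,v\} \in M_Q$ pick an arbitrary preimage protoedge $(u^{\vec x_{uv}}, v^{\vec y_{uv}}) \in E$ and assemble them into $M$; by 1-periodicity, every vertex $u^{\vec z}$ is incident to exactly one edge of $\PER{M}$, namely the appropriate translate of the protoedge chosen for $u$'s unique $M_Q$-neighbor, so $\PER{M}$ is a 1-periodic perfect matching.

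Combining this correspondence with Theorem~\ref{thm:periodicmatching}, the graph $\PER{G}$ admits any perfect matching if and only if $Q$ does. The algorithm is then: construct $Q$ in $O(|V|+|E|)$ time; run Hopcroft--Karp on the bipartite graph $Q$ in $O(|F|\sqrt{|V|}) = O(|E|\sqrt{|V|})$ time; and, if a perfect matching $M_Q$ is returned, lift it via the construction above to the fundamental domain $M$ of a 1-periodic perfect matching of $\PER{G}$. There is no real obstacle beyond verifying the correspondence, since the substantive existence content (that a 1-periodic perfect matching exists whenever any perfect matching does) was already established in Theorem~\ref{thm:periodicmatching}; the present theorem is essentially Hopcroft--Karp applied to the quotient.
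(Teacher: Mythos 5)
Your proposal is correct and takes essentially the same approach as the paper: the quotient graph $Q$ you define is exactly the paper's projection of $\PER{G}$ onto the $d$-dimensional torus, and both arguments run Hopcroft--Karp there and lift the result back, with correctness of the existence question resting on Theorem~\ref{thm:periodicmatching}. Your write-up is if anything slightly more explicit than the paper's about why that theorem is needed to equate ``some perfect matching exists'' with ``a 1-periodic one exists.''
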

  
\begin{proof}
  Given $V$ and $E$, we construct a graph $G'=(V,E')$ mapping the periodic graph onto a $d$-dimensional torus. For this, project each protoedge $(u^{\vec x},v^{\vec y})\in E$ onto an edge $(u,v)\in E'$. Duplicates can be removed. 
  If $\PER{G}$ admits a 1-periodic perfect matching $M$, then its projection $M'$ is a perfect matching in $G'$. Conversely, any perfect matching $M'$ in $G'$ can be lifted to a 1-periodic perfect matching $M$ in $\PER{G}$, by  picking for each edge in $M'$, any edge that projects to it (and all its translations).
  Now can use any finite perfect matching algorithms on $G'$, such as the $O(|E| \sqrt{|V|})$ Hopcroft–Karp algorithm \cite{hopcroft1973n}.
\end{proof}

\section{Tiling Hardness}
\subsection{Tromino + Periodic \label{sec:subspace}}

We now turn to the problem of tiling the plane with polyominoes. We first consider the discrete version of tiling the integer lattice $\mathbb Z^d$. In this setting, a (finite) tile $P\subset \mathbb Z^d$ \defn{tiles} a subset $E\subseteq \mathbb Z^d$ by translations $A\subseteq \mathbb Z^d$ if 
\begin{itemize}
\item the set $E$ is covered without overlap, that is, for every $e\in E$ there is exactly one translation $a\in A$ and one $p\in P$ such that $e=a+p$; and
\item for every translation $a\in A$, the translated tile is in $E$, that is, $a+P\subseteq E$.
\end{itemize}
A set $E\subseteq\mathbb Z^d$ is \defn{periodic} if it is invariant under $d$ independent translations, that is, there are $d$ vectors $\vec v_1,\ldots,\vec v_d\in\mathbb Z^d$ such that for every $E + \vec v_i = E$  for all $i=1,\ldots,d$.
Although periodic subsets are infinite, they can be described in finite space by a finite set of points and the $d$ translation vectors. 

A \defn{tromino} is a (connected) polyomino of size 3. There are two trominoes up to rotations: the \L (L) tromino and the \I (I) tromino.

\begin{theorem}\label{thm:periodicsubset}
  Tiling a given periodic subset $E\subseteq \mathbb Z^2$ with copies of a single tromino (\L or \I) is co-RE-complete and thus undecidable. In 1.5D, the problem is PSPACE-complete.
\end{theorem}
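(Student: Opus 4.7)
The plan is to handle the two tromino shapes (\L and \I) separately but by a common recipe: reduce a suitable periodic planar problem to tileability of a periodic $E \subseteq \mathbb Z^2$ by one tromino, while establishing co-RE membership in 2D and PSPACE membership in 1.5D directly.

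\textbf{Hardness by reduction.} I would start from Periodic Planar Trichromatic Graph Orientation (Theorem~\ref{thm:planar-graph-orientation}), which is co-RE-complete in 2D and PSPACE-complete in 1D. Using Theorem~\ref{thm:orthocrossing-drawing} together with Lemma~\ref{lem:orthodraw} and Lemma~\ref{lem:degree3}, I may assume the instance is given as a planar orthogonal periodic drawing on a polynomial-size grid in which every degree-$3$ vertex has incident edges going only left, up, and right. After scaling the drawing up by a large constant factor, I substitute each vertex, straight wire segment, bend, and degree-$3$ junction with a constant-size polyomino piece of $E$. The three critical gadgets are: a \emph{wire} that admits exactly two tromino tilings, encoding the boolean orientation of the underlying edge; a trichromatic \emph{0-or-3-in-3 gadget} that tiles iff the three incident wire bits agree; and a monochromatic \emph{1-in-3 gadget} that tiles iff exactly one of the three incident wire bits is set. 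For the \L case I expect wires to be based on thin rectangles of length divisible by $3$ (which are rigid under L-tilings) together with small rectilinear pockets enforcing the parity at junctions. For the more constrained \I case, I expect to use strip widths coprime to $3$ so that the alternation between horizontal and vertical I-placements is forced, again with custom pockets at the junctions. Since the source drawing is already planar, no crossover gadget is needed. Laid out along a strip of bounded height, the same gadgets give 1.5D PSPACE-hardness from the 1D version of the source problem.

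\textbf{Membership.} In 2D, co-RE membership follows from the same compactness argument used for Periodic CNF SAT in Theorem~\ref{thm:periodic-cnf-sat}: order tile positions by distance from the origin and build the finitely-branching tree of partial tilings; by Weak K\"onig's Lemma, untileability is equivalent to this tree being finite, which is certifiable in finite time (this is also an instance of the general result of Demaine and Langerman~\cite{Demaine-Langerman-2025}). In 1.5D, membership in PSPACE follows by a sweep-line argument. Sweep a vertical line from left to right across a strip of bounded height $h$; the state of the sweep is the "frontier" of already-placed trominoes crossing the current column, which since $h$ and tile size are constant has polynomially many bits and hence exponentially many possible values. Tileability of the periodic subset reduces to the existence of a periodic path in the resulting finite transition graph, which can be decided in NPSPACE by guess-and-forget, hence in PSPACE by Savitch's theorem. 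Equivalently, one can encode the sweep automaton as a 1D local Periodic CNF SAT instance, which is PSPACE by Theorem~\ref{thm:periodic-cnf-sat}.

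The main obstacle will be gadget design under the severe constraint of a single three-cell prototile. The \I-tromino case in particular offers very little local choice, so wire widths, gadget dimensions, and the pockets that enforce the 0-or-3 versus 1-in-3 constraints will need to be chosen and verified by a careful local case analysis. A secondary difficulty is ensuring that the gadgets snap together correctly on the polynomial-size orthogonal grid supplied by Theorem~\ref{thm:orthocrossing-drawing} and Lemma~\ref{lem:orthodraw}, so that the overall construction remains a genuinely periodic subset of $\mathbb Z^2$ rather than only an eventually periodic one; the pre-normalization provided by Lemma~\ref{lem:degree3} should make the junction gadgetry uniform enough for this to go through.
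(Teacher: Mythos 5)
Your plan follows the paper's proof essentially exactly: the paper also reduces from Periodic Planar Trichromatic Graph Orientation, normalizes the drawing via Lemma~\ref{lem:orthodraw} and Lemma~\ref{lem:degree3}, and then substitutes constant-size ($6\times 6$) gadgets for wires, bends, monochromatic 1-in-3 vertices, and trichromatic 0-or-3-in-3 vertices, with the wire's two tilings encoding the edge orientation. The only caveat is that the explicit gadget pixel patterns (Figures~\ref{fig:Lgadgets} and~\ref{fig:Igadgets}), which constitute nearly all of the technical content and whose uniqueness of tilings must be verified by case analysis, are exactly the part your proposal defers.
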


\begin{proof}
  The proof is by reduction from local Periodic Planar Trichromatic Graph Orientation from Theorem~\ref{thm:planar-graph-orientation}.
  Given a local planar periodic drawing of the graph instance, construct a local planar orthogonal periodic drawing of polynomial grid size, using Lemma~\ref{lem:orthodraw}.
  Apply Lemma~\ref{lem:degree3} to further refine so that all degree-3 vertices connect locally left, up, and right;
  and all trichromatic vertices appear in the orientation blue-red-green or green-red-blue (rotating via Figure~\ref{fig:rotation}).

  Let $M$ be the grid size of the resulting periodic drawing.
  Overlay the $M\times M$ grid graph with our drawing and consider the dual grid graph whose faces are $1/M \times 1/M$ squares. Each square intersects the drawing in a few possible ways:
  \begin{enumerate}
  \item \defn{Wire}: a horizontal or vertical edge portion (red, green or blue).
  \item \defn{Wire bend}: an edge bend (red, green or blue) in one of 4 orientations.
  \item \defn{Monochromatic vertex}: a monochromatic 1-in-3 degree-3 vertex (red, green or blue), in one orientation (left, up, right).
  \item \defn{Trichromatic vertex}: a trichromatic 0-or-3-in-3 vertex, of degree 3, in one of two orientations (blue-red-green or green-red-blue).
  \item The square face does not intersect the drawing.
  \end{enumerate}

  We transform the graph drawing into a periodic set $E\subseteq \mathbb Z^2$ with translation vectors $(0,6M)$ and $(6M,0)$. We only need to specify the elements in $E$ within the $[1,6M]^2$ square. Subdivide that square into $M\times M$ subsquares, each $6 \times 6$ corresponding to one of the square faces of the dual graph above, and within each of these subsquares, include elements in $E$ depending on the intersection type, as specified in Figure~\ref{fig:Lgadgets} for \L trominoes and Figure~\ref{fig:Igadgets} for \I trominoes, where $E$ consists of all \defn{pixels} that are not dark gray.
  Figure~\ref{fig:reductionexample} gives complete examples of the reductions.

  Figures~\ref{fig:Lgadgets} and~\ref{fig:Igadgets} show the only possible tilings for each gadget:
  \begin{enumerate}
  \item Wire and wire bend: There are only two ways to tile the gadget. The edge orientation is encoded by the vector from the (\I or \L) tromino center to the pixel matching the color of the edge.
  \item Monochromatic vertex: There are three solutions, determined by the edge covering the center pixel.
  \item Trichromatic vertex: There are two solutions, and the presence of a tromino in the center of the gadget determines if all edges are oriented inwards or outwards.
  \end{enumerate}

  Thus, the periodic subset $E$ can be tiled by trominoes if and only if the Periodic Planar Trichromatic Graph Orientation has a solution.
\end{proof}

\begin{figure}
  \centering
  \subcaptionbox
     {4-coloring of the square grid}%
     {\includegraphics[scale=0.8]{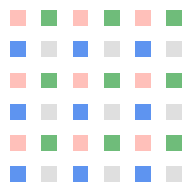}}%
  \hfil
  \subcaptionbox
     {Red wire gadget and its two solutions}%
     {\includegraphics[scale=0.8]{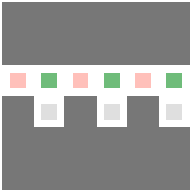}
      \ 
      \includegraphics[scale=0.8]{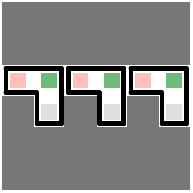}
      \ 
      \includegraphics[scale=0.8]{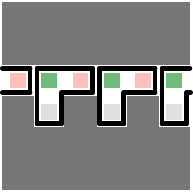}}%
  \par
  \subcaptionbox
     {Red wires and turns}%
     {\includegraphics[scale=0.8]{figs/L/wire/gadget}
      \
      \includegraphics[scale=0.8]{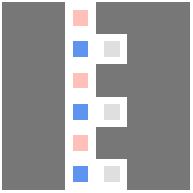}
      \
      \includegraphics[scale=0.8]{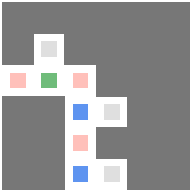}
      \
      \includegraphics[scale=0.8]{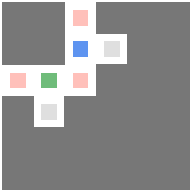}
      \
      \includegraphics[scale=0.8]{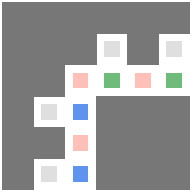}
      \
      \includegraphics[scale=0.8]{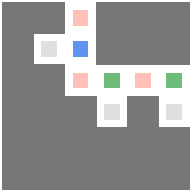}}%
  \par
  \subcaptionbox
     {Green wires and turns}%
     {\includegraphics[scale=0.8]{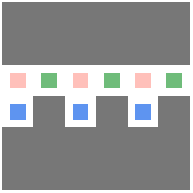}
      \
      \includegraphics[scale=0.8]{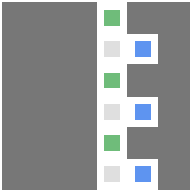}
      \
      \includegraphics[scale=0.8]{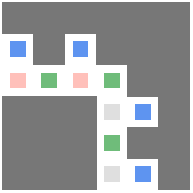}
      \
      \includegraphics[scale=0.8]{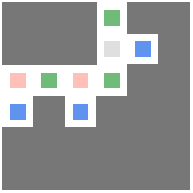}
      \
      \includegraphics[scale=0.8]{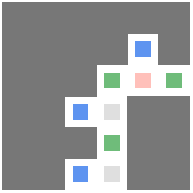}
      \
      \includegraphics[scale=0.8]{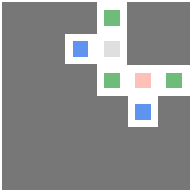}}%
  \par
  \subcaptionbox
     {Blue wires and turns}%
     {\includegraphics[scale=0.8]{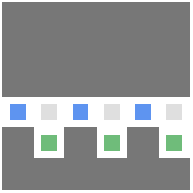}
      \
      \includegraphics[scale=0.8]{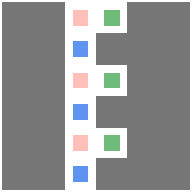}
      \
      \includegraphics[scale=0.8]{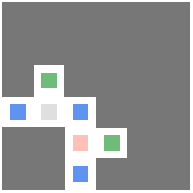}
      \
      \includegraphics[scale=0.8]{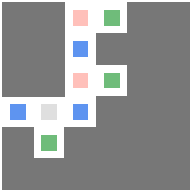}
      \
      \includegraphics[scale=0.8]{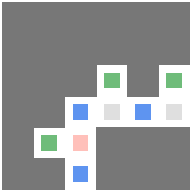}
      \
      \includegraphics[scale=0.8]{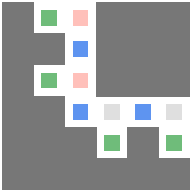}}%
  \par
  \subcaptionbox
     {Red 1-in-3 gadget and its three solutions, Green and Blue 1-in-3}%
     {\includegraphics[scale=0.8]{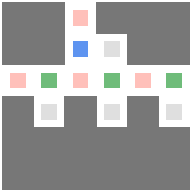}
      \ 
      \includegraphics[scale=0.8]{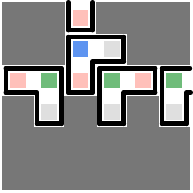}
      \ 
      \includegraphics[scale=0.8]{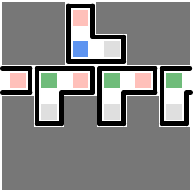}
      \ 
      \includegraphics[scale=0.8]{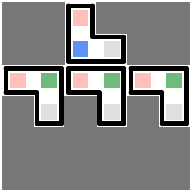}
      \ 
      \includegraphics[scale=0.8]{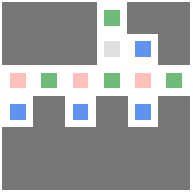}
      \ 
      \includegraphics[scale=0.8]{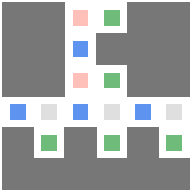}}%
  \par
  \subcaptionbox
     {0-or-3-in-3 Blue-Red-Green gadget and its two solutions, and Green-Red-Blue version}%
     {\includegraphics[scale=0.8]{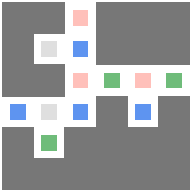}
      \ 
      \includegraphics[scale=0.8]{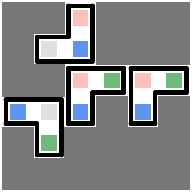}
      \ 
      \includegraphics[scale=0.8]{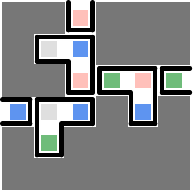}
      \ 
      \includegraphics[scale=0.8]{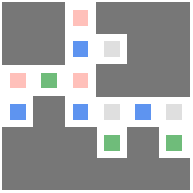}}%
  \caption{Reduction from Planar Trichromatic Graph Orientation to tiling with \L trominoes.}%
  \label{fig:Lgadgets}
\end{figure}

\begin{figure}
  \centering
  \subcaptionbox
     {4-coloring of the square grid}%
     {\includegraphics[scale=0.8]{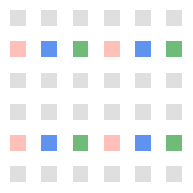}}%
  \hfil
  \subcaptionbox
     {Red wire gadget and its two solutions}%
     {\includegraphics[scale=0.8]{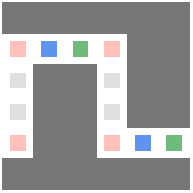}
      \ 
      \includegraphics[scale=0.8]{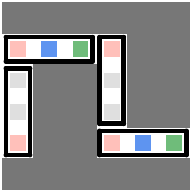}
      \ 
      \includegraphics[scale=0.8]{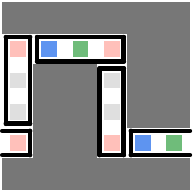}}%
  \par
  \subcaptionbox
     {Red wires and turns}%
     {\includegraphics[scale=0.8]{figs/I/wire/wire-WE-R}
      \
      \includegraphics[scale=0.8]{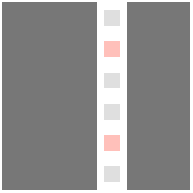}
      \
      \includegraphics[scale=0.8]{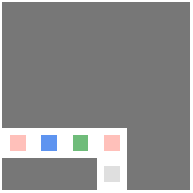}
      \
      \includegraphics[scale=0.8]{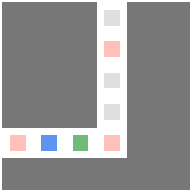}
      \
      \includegraphics[scale=0.8]{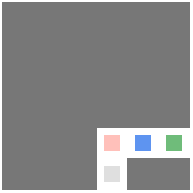}
      \
      \includegraphics[scale=0.8]{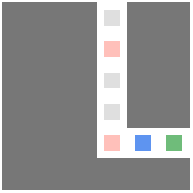}}%
  \par
  \subcaptionbox
     {Green wires and turns}%
     {\includegraphics[scale=0.8]{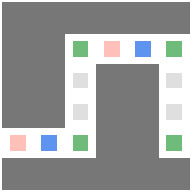}
      \
      \includegraphics[scale=0.8]{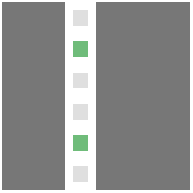}
      \
      \includegraphics[scale=0.8]{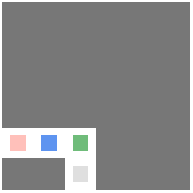}
      \
      \includegraphics[scale=0.8]{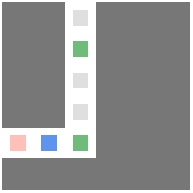}
      \
      \includegraphics[scale=0.8]{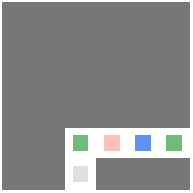}
      \
      \includegraphics[scale=0.8]{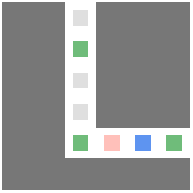}}%
  \par
  \subcaptionbox
     {Blue wires and turns}%
     {\includegraphics[scale=0.8]{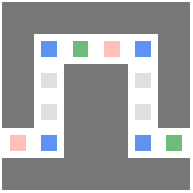}
      \
      \includegraphics[scale=0.8]{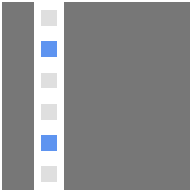}
      \
      \includegraphics[scale=0.8]{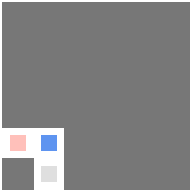}
      \
      \includegraphics[scale=0.8]{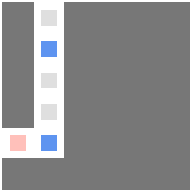}
      \
      \includegraphics[scale=0.8]{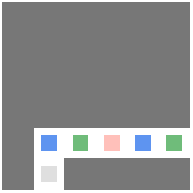}
      \
      \includegraphics[scale=0.8]{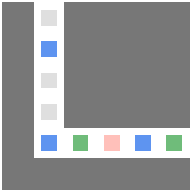}}%
  \par
  \subcaptionbox
     {Red 1-in-3 gadget and its three solutions, Green and Blue 1-in-3}%
     {\includegraphics[scale=0.8]{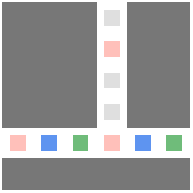}
      \ 
      \includegraphics[scale=0.8]{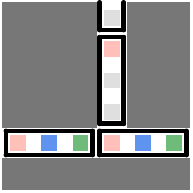}
      \ 
      \includegraphics[scale=0.8]{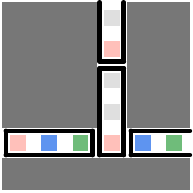}
      \ 
      \includegraphics[scale=0.8]{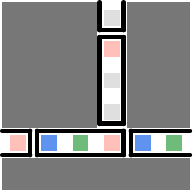}
      \ 
      \includegraphics[scale=0.8]{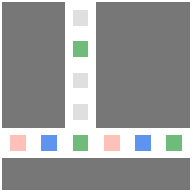}
      \ 
      \includegraphics[scale=0.8]{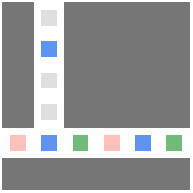}}%
  \par
  \subcaptionbox
     {0-or-3-in-3 Blue-Red-Green gadget and its two solutions, and Green-Red-Blue version}%
     {\includegraphics[scale=0.8]{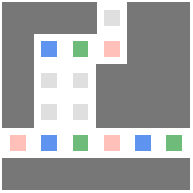}
      \ 
      \includegraphics[scale=0.8]{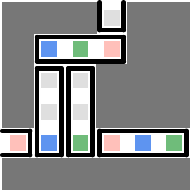}
      \ 
      \includegraphics[scale=0.8]{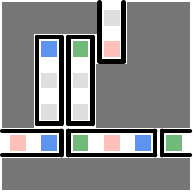}
      \ 
      \includegraphics[scale=0.8]{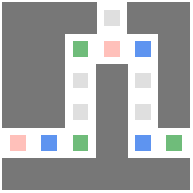}
      \ 
      \includegraphics[scale=0.8]{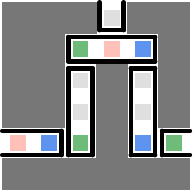}
      \ 
      \includegraphics[scale=0.8]{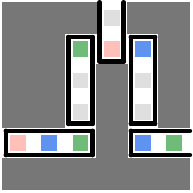}}%
  \caption{Reduction from Planar Trichromatic Graph Orientation to tiling with I trominoes.}%
  \label{fig:Igadgets}
\end{figure}

\begin{figure}
  \centering
  \subcaptionbox
     {Graph overlayed with the $M\times M$ grid}%
     [.45\linewidth]%
     {\includegraphics[width=\linewidth]{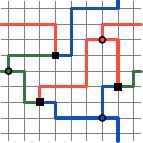}}%
  \hfill
  \subcaptionbox
     {Graph overlayed with the dual grid}%
     [.45\linewidth]%
     {\includegraphics[width=\linewidth]{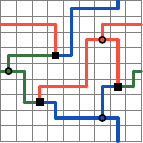}}%
  \par\vspace{\baselineskip}
  \subcaptionbox
     {Substitution of \L-tromino gadgets}%
     [.45\linewidth]%
     {\includegraphics[width=\linewidth]{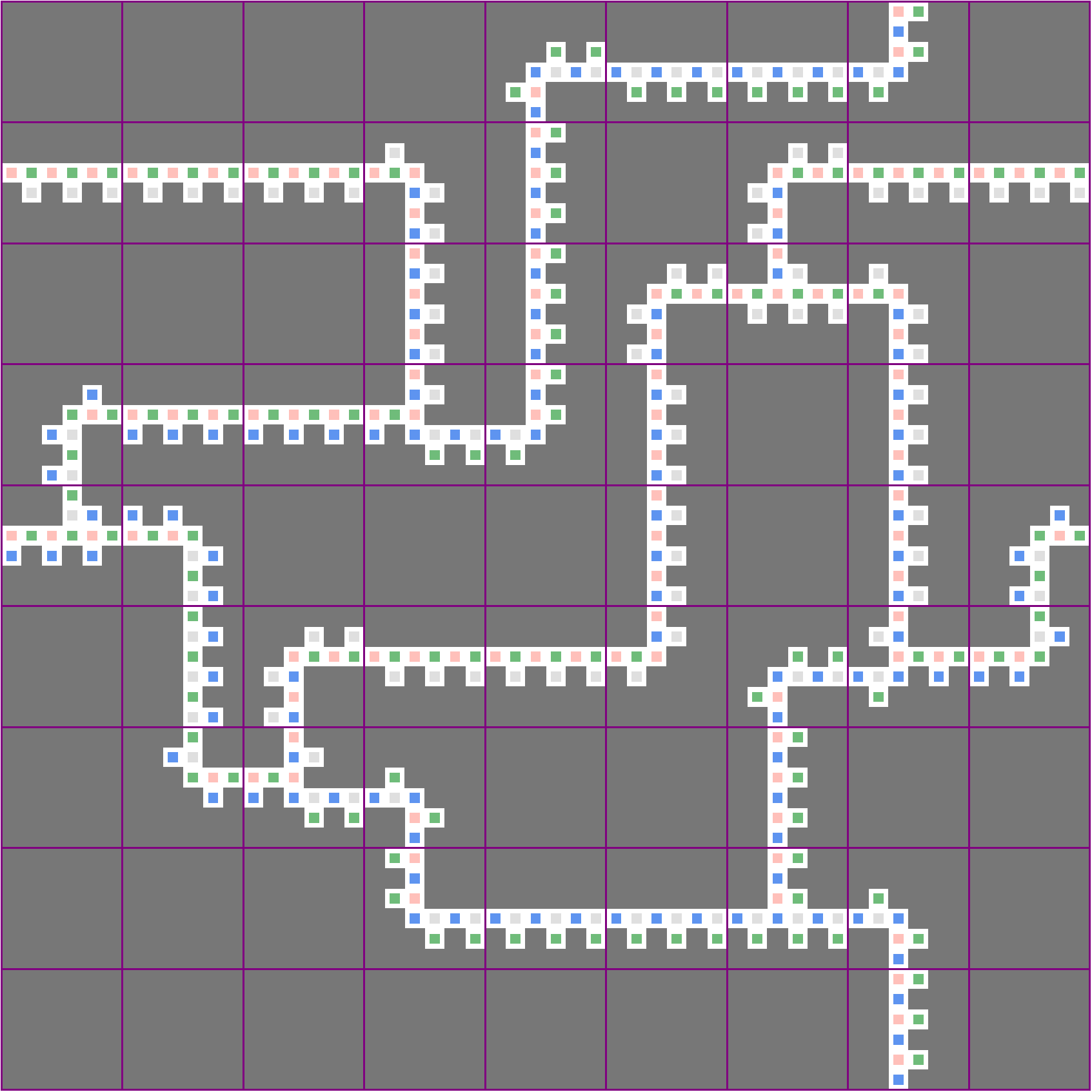}}%
  \hfill
  \subcaptionbox
     {Substitution of \I-tromino gadgets}%
     [.45\linewidth]%
     {\includegraphics[width=\linewidth]{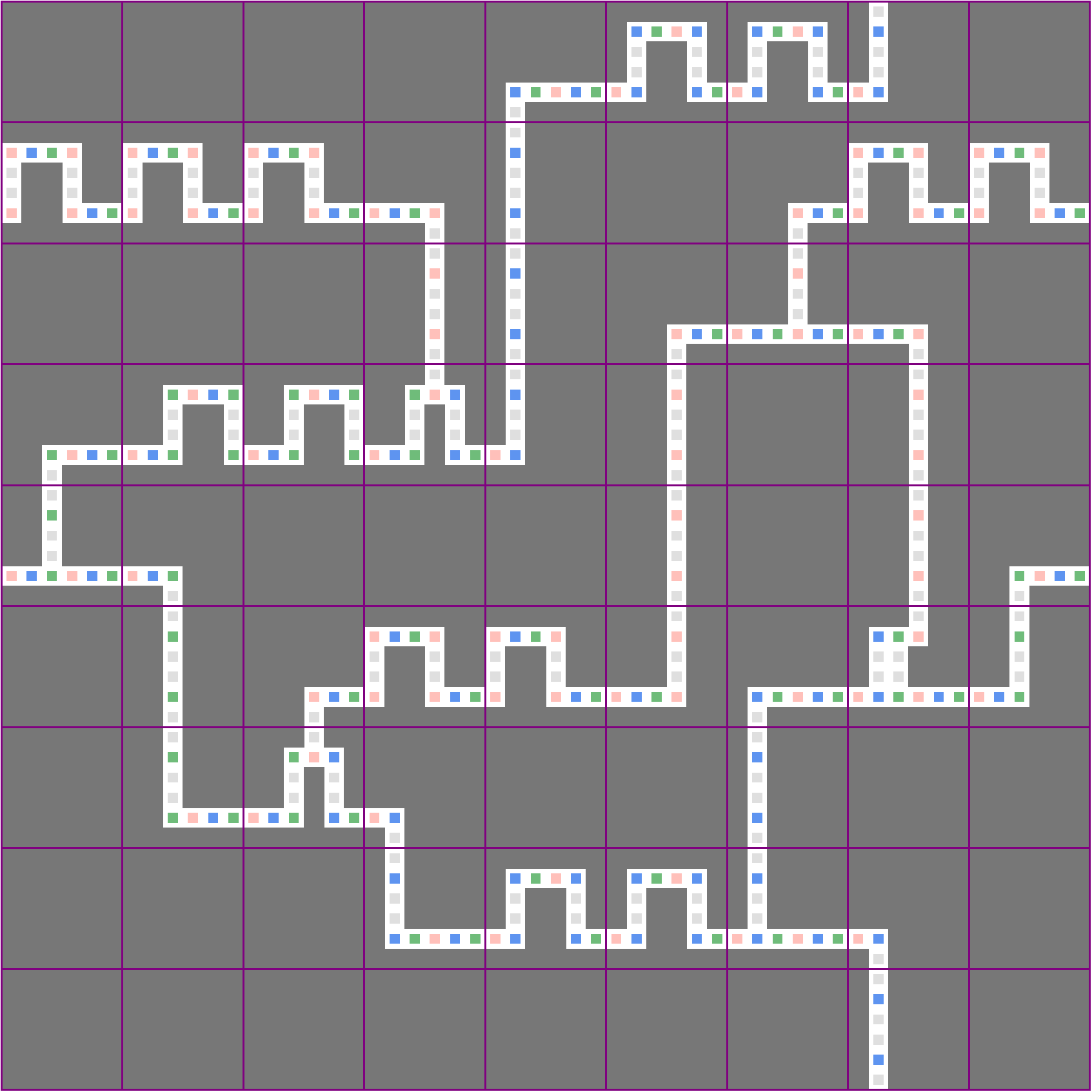}}%
  \caption{Reduction example from a periodic orthogonal graph drawing to a periodic subset to be tiled.}%
  \label{fig:reductionexample}
\end{figure}

\begin{corollary}\label{cor:periodicsubsettranslation}
  Tiling a given periodic subset $E\subseteq \mathbb Z^2$ with translations of two polyominoes (the I trominoes \I and \Ir) is co-RE-complete and thus undecidable. In 1.5D, the problem is PSPACE-complete.
\end{corollary}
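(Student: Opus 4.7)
The plan is to observe that the proof of Theorem~\ref{thm:periodicsubset} for the \I tromino already does essentially all the work: the \I tromino has exactly two distinct orientations under rotation (namely \I and \Ir), and every gadget in Figure~\ref{fig:Igadgets} is already drawn using tiles in those two orientations. So if we simply supply the two prototiles \I and \Ir and forbid rotation, the set of available tile placements is unchanged from the proof of Theorem~\ref{thm:periodicsubset} (\I~case). The same reduction from local Periodic Planar Trichromatic Graph Orientation (Theorem~\ref{thm:planar-graph-orientation}) therefore yields co-RE-hardness in 2D and PSPACE-hardness in 1.5D.

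For the upper bounds, co-RE membership in 2D follows from the general compactness argument in \cite{Demaine-Langerman-2025}: tilings of periodic subsets with finitely many polyomino prototiles, whether or not rotations are allowed, lie in co-RE. For the 1.5D case, a bounded strip of height $h$ together with a periodic subset admits a standard transfer-matrix/state-space argument: a row-by-row sweep in the unbounded direction maintains a profile (frontier) of polynomial bit-length describing which cells in the current column slice are already covered and by which partial tile placements; tilings exist iff the state graph contains an infinite walk compatible with the periodic subset, which is PSPACE-decidable. Both arguments work verbatim for the translation-only variant, since restricting to translations only shrinks the tile set.

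The only subtlety worth flagging is that Theorem~\ref{thm:periodicsubset} allows rotations and reflections, and one must check that the \I-gadgets never needed a reflection of \I beyond its two rotations; but \I and \Ir are fixed by the natural reflections (they are each their own mirror image), so nothing new is used. Thus no new gadget design is required, and the corollary reduces to a one-line bookkeeping observation about the prototile set in Figure~\ref{fig:Igadgets}.

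The main ``obstacle'' is really just to verify that Figure~\ref{fig:Igadgets} uses only \I and \Ir (no other rotations exist for the \I tromino, so this is automatic) and to state the upper bounds cleanly; there is no new gadgetry and no new combinatorial argument needed beyond what already appears in the proofs of Theorems~\ref{thm:periodicsubset} and~\ref{thm:planar-graph-orientation}.
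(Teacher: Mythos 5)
Your proposal is correct and matches the paper's (implicit) argument exactly: the corollary is stated without a separate proof precisely because the \I tromino's only congruent copies under rotation/reflection are \I and \Ir, so the \I-case of Theorem~\ref{thm:periodicsubset} is already a translation-only reduction with those two prototiles. The upper bounds likewise carry over unchanged from Theorem~\ref{thm:periodicsubset}, as you note.
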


\subsection{$O(1)$-omino + Disconnected Polyomino \label{sec:two-polyomino}}

\begin{theorem}\label{thm:twopolyomino}
  Tiling with two polyominoes, one of which is of constant size, the other  non-connected, is co-RE-complete and thus undecidable. In 1.5D, the problem is PSPACE-complete.
\end{theorem}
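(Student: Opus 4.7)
The plan is to reduce from Theorem~\ref{thm:periodicsubset}. Given a periodic subset $E \subseteq \mathbb{Z}^2$ constructed there, with period lattice $L$ generated by $(6M,0)$ and $(0,6M)$, let $C = \mathbb{Z}^2 \setminus E$ denote its complement (also $L$-periodic). We keep $P_1$ to be the tromino (\L or \I, exactly as in Theorem~\ref{thm:periodicsubset}) as the constant-size tile, and design a disconnected polyomino $P_2$ so that tiling the plane with copies of $P_1$ and $P_2$ is equivalent to tiling $E$ with copies of $P_1$.

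The first candidate is to set $P_2 := C \cap F$ for a fundamental domain $F$ of $L$; this is a finite (and in general disconnected) polyomino whose translates by $L$ tile $C$ exactly. The forward direction is immediate: given any tromino tiling of $E$, combining it with the lattice tiling of $C$ by translates of $P_2$ produces a tiling of $\mathbb{Z}^2$ by $P_1$ and $P_2$.

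The main obstacle is the converse direction. We need any tiling of $\mathbb{Z}^2$ by $P_1$ and $P_2$ to place each copy of $P_2$ at a translate of the original by a vector of $L$ (with no nontrivial rotation or reflection), so that the $P_2$ copies cover exactly $C$ and the trominoes cover exactly $E$. Without further constraints, $P_2$ might admit off-lattice placements, rotations, or reflections that would break the reduction. To force the intended placement, the plan is to augment the construction of Theorem~\ref{thm:periodicsubset} with two kinds of \emph{alignment features}, both placed in regions lying entirely in $C$ so they do not touch any tromino of Figures~\ref{fig:Lgadgets} and~\ref{fig:Igadgets}: (i) a small asymmetric signature region inside each fundamental domain that breaks all rotational and reflective symmetries of $P_2$, and (ii) jigsaw-style patterns of $C$-cells along the boundary of $F$ that interlock only when two copies of $P_2$ meet along an $L$-lattice boundary. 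The signatures rule out nontrivial orientations, and the jigsaw patterns propagate to force every $P_2$ copy to sit at a lattice translate. The verification that these features together force the $P_2$ copies into the unique lattice tiling of $C$, while remaining disjoint from the tromino gadgets and hence preserving Theorem~\ref{thm:periodicsubset}'s encoding of Periodic Planar Trichromatic Graph Orientation, is the bulk of the proof.

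For the 1.5D case, the same reduction applies verbatim: Theorem~\ref{thm:periodicsubset} already yields PSPACE-completeness for tromino tiling of a periodic subset of a strip, and the analogous strip construction of $P_2$ (now with a 1-dimensional lattice of translates and jigsaw features along the vertical boundaries of $F$) transfers that hardness to tiling 1.5D with one constant-size polyomino and one disconnected polyomino. Co-RE membership in 2D and PSPACE membership in 1.5D are inherited from the general upper bounds discussed in the introduction, giving co-RE-completeness and PSPACE-completeness respectively.
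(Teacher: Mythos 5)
Your overall strategy matches the paper's: make the second (disconnected) prototile the complement of $E$ within a fundamental domain, add keying features in cells guaranteed to lie in the complement so that copies of that tile can only assemble by lattice translation, and then the leftover region is exactly $E$. (The paper implements your ``jigsaw'' idea by modifying the four $18\times 18$ corner blocks of the fundamental domain, which the construction guarantees are entirely in the complement.)

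However, there is a fatal gap in your reduction as stated: you keep $P_1$ equal to the tromino itself, but both the \L and the \I tromino tile the plane on their own (two \L's form a $2\times 3$ rectangle, and \I's tile by rows). Hence every instance of your constructed problem is a YES instance --- a tiling by $P_1$ alone always exists, never using $P_2$ --- and the converse direction cannot even get started: you have no way to argue that any copy of $P_2$ appears, let alone that the $P_2$ copies cover exactly $C$. The paper addresses exactly this obstacle by refining the lattice by a factor of $3$ and replacing each tromino cell with a plus-shaped pentomino of subcells, turning the tromino into a $15$-omino that provably cannot tile the plane alone (and refining $E$ compatibly, so that tilings of the refined region by the $15$-omino correspond bijectively to tromino tilings of $E$). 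Only then does the argument ``the constant-size tile cannot tile alone, so some copy of $Q$ must appear, and the corner keys force all copies of $Q$ into the grid pattern'' go through. Your proposal needs this (or an equivalent) modification of the constant-size tile; without it the reduction is incorrect. The remaining parts of your plan (forward direction, orientation-breaking, the 1.5D variant, and membership in co-RE/PSPACE) are consistent with the paper once this is repaired.
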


\begin{proof}
  We proceed as in the previous section in reducing from Periodic Planar Trichromatic Graph Orientation and transforming the instance into a periodic set $E$ with translation vectors $(6M,0)$ and $(0,6M)$. Let $E_0 := E \cap [1,6M]^2$. This is the pattern which is repeated by translation to produce the periodic set to be tiled.
  Now consider the complement $\bar E_0 :=[1,6M^2] \setminus E_0$. If the graph orientation problem has a solution, then $E_0$ and the \I tromino tile the plane as shown in the previous section, however the \I tromino can tile the plane on its own so we will need to modify both tiles so that none of them tile the plane on its own, and the pattern $E_0$ must combine with itself in a unique way to produce the complement of $E$. 
  
  To prevent the first (connected) polyomino from tiling the plane on its own, we refine the integer lattice, replacing every pixel by a $3\times 3$ square of 9 pixels. The \I tromino is replaced with the shape shown in Figure~\ref{fig:bumpytromino} which is produced by replacing each of the three tromino pixels by 5 sub-pixels forming a $+$ pattern, producing a 15-omino $P$.
  Likewise, we refine $E_0$ into $E_0' \subseteq [1,18M]^2$ replacing each pixel by the same pattern, and $E'$ by translating $E'_0$ with vectors $(18M,0)$ and $(0,18M)$. See Figure~\ref{fig:0-or-3-plus} for the substitution applied to the trichromatic gadget. 
  It can easily be verified that the 15-omino does not tile on its own. Yet any tiling of the refinement is strictly equivalent to a tiling of the original set $E$ with the \I tromino.
  \begin{figure}
    \centering
    \includegraphics[scale=0.8]{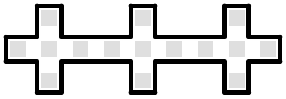}
    \caption{The 15-omino $P$ obtained by refining the \I tromino.}%
    \label{fig:bumpytromino}
  \end{figure}

  \begin{figure}
    \centering
    \includegraphics[width=0.3\linewidth]{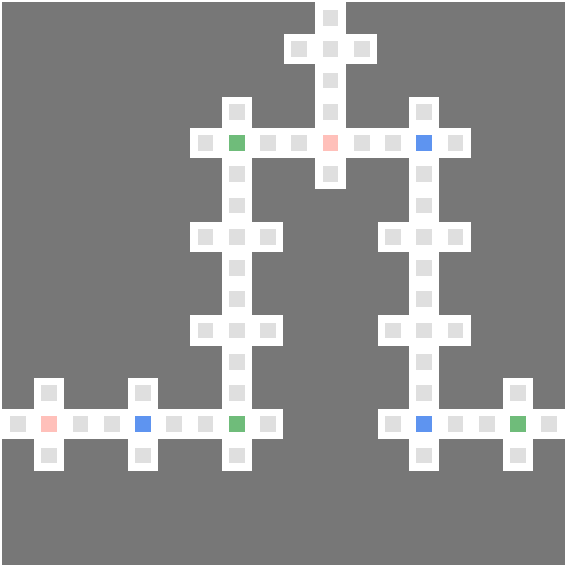}
    \hfill
    \includegraphics[width=0.3\linewidth]{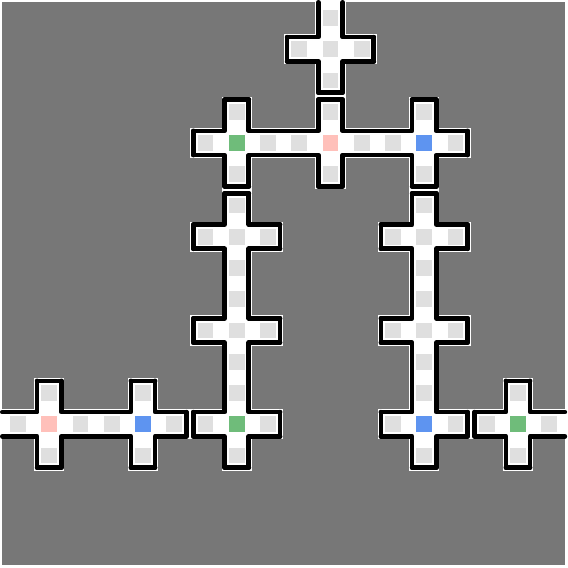}
    \hfill
    \includegraphics[width=0.3\linewidth]{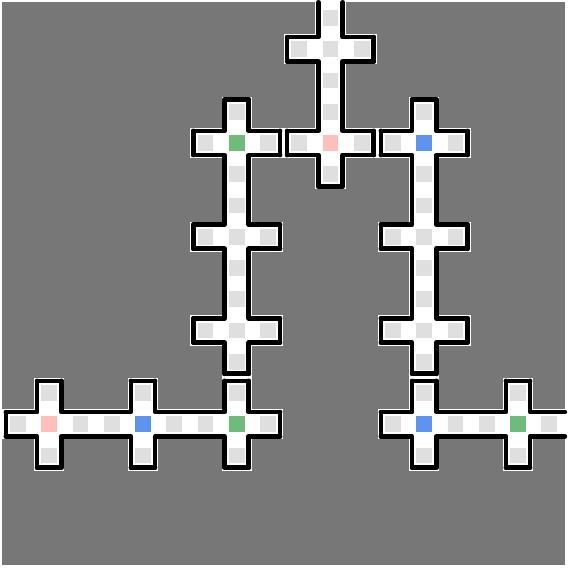}
    \caption{Refinement of the trichromatic gadget and its two solutions.}%
    \label{fig:0-or-3-plus}
  \end{figure}

  Notice that by construction of the orthogonal drawing, the $6\times 6$ squares at the 4 corners of $[1,6M]^2$ do not intersect the drawing and thus do not intersect $E$, and thus these squares are totally filled in $\bar E_0$. In the refinement $\bar E'_0$ the corresponding $18\times 18$ squares are totally filled as well. We construct the second, non-connected polyomino $Q$ by taking $\bar E'_0$ and replacing the $18\times 18$ squares at the corners by the patterns shown in Figure~\ref{fig:keycorners}. The patterns are designed with complementary parts that can only match each other so that any tiling with $Q$ must glue other $Q$'s in a grid pattern.

  \begin{figure}
    \centering
    \includegraphics[scale=0.4]{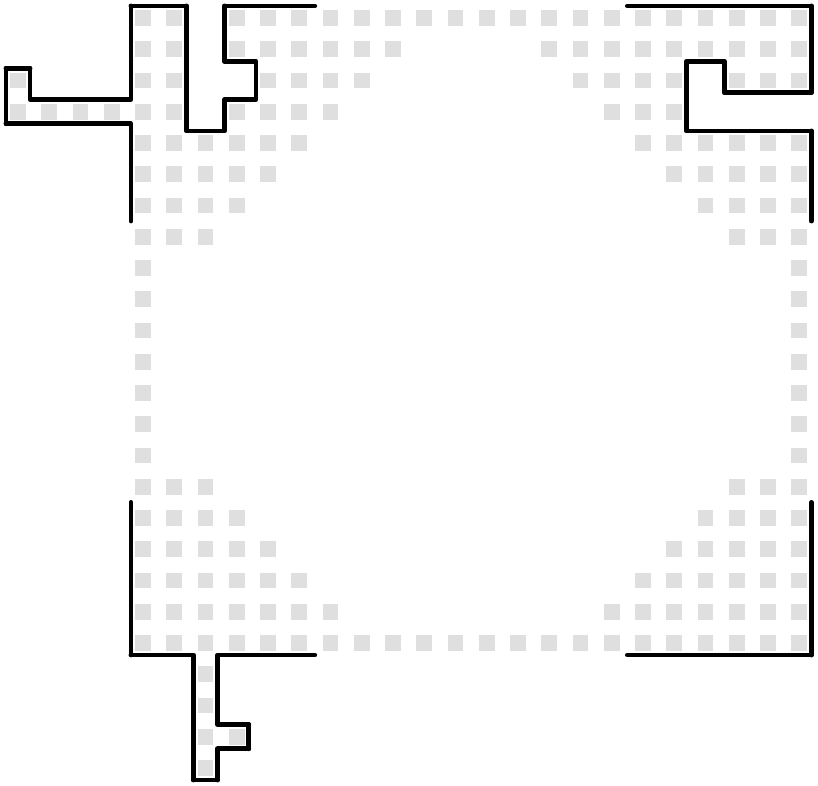}
    \caption{The four corner patterns used to construct polyomino $Q$.}%
    \label{fig:keycorners}
  \end{figure}

  Since $P$ does not tile on its own, there must be at least one $Q$ in the tiling, such $Q$ must be combined with other $Q$ to form a grid pattern. Together, they form exactly the complement of $E'$, which in turn can be tiled with $P$ if and only if the graph orientation problem has a solution.
\end{proof}

The previous proof uses only translations of $Q$. Also, $P$ has 2-fold rotational symmetry, so all its copies in the tiling are translations of two polyominoes (without rotation). Thus we obtain the following corollary:

\begin{corollary}\label{cor:threepolyomino-translation}
  Tiling with three polyominoes by translation, two of which are of constant size, the third non-connected, is co-RE-complete and thus undecidable. In 1.5D, the problem is PSPACE-complete.
\end{corollary}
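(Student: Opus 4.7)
The plan is to revisit the proof of Theorem~\ref{thm:twopolyomino} and observe that rotations of the constant-size polyomino $P$ are used in at most one way, namely the $90^\circ$ rotation inherited from the two orientations \I and \Ir of the I tromino. Specifically, the reduction in Theorem~\ref{thm:twopolyomino} can just as well start from the translation-only Corollary~\ref{cor:periodicsubsettranslation}, which already tiles the periodic subset $E$ using the two distinct translation types \I and \Ir. After refining each copy via the construction of Figure~\ref{fig:bumpytromino}, the horizontal I-tromino becomes the 15-omino $P$, and the vertical I-tromino becomes its $90^\circ$ rotation $P'$.

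First, I would confirm that in the construction of Theorem~\ref{thm:twopolyomino}, the disconnected polyomino $Q$ is used only by translation. This is forced by the complementary corner key-patterns of Figure~\ref{fig:keycorners}: any rotated $Q$ would present incompatible keys at adjacent corners, so $Q$'s can only glue to each other in the axis-aligned grid arrangement that exactly covers the complement of $E'$. Next, using the refined trichromatic, monochromatic, wire, and bend gadgets from Figure~\ref{fig:Igadgets} (which already depict both \I and \Ir as separate translation classes) and replacing each I-tromino by $P$ or $P'$ according to its orientation, the correctness analysis of Theorem~\ref{thm:twopolyomino} carries over verbatim, because every tile placement used in the soundness and completeness arguments was already a pure translation.

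The equivalence in the other direction is immediate: every translation-only tiling with $\{P,P',Q\}$ is also a rotation-allowed tiling with $\{P,Q\}$, so it must encode a valid Periodic Planar Trichromatic Graph Orientation by the analysis already done for Theorem~\ref{thm:twopolyomino}. Conversely, any solution to the graph orientation instance yields a tiling with $\{P,P',Q\}$ by applying the gadget choices of Figure~\ref{fig:Igadgets} and substituting $P$ or $P'$ according to orientation. Membership in co-RE (resp.\ PSPACE in 1.5D) is inherited from the general upper bounds for tiling with finitely many polyominoes, exactly as for Theorem~\ref{thm:twopolyomino}.

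The only real check, and hence the main point to verify carefully, is that no \emph{new} translation-only tilings appear once rotations are forbidden. But restricting the symmetry group only removes tilings; and since the rotation-allowed analysis in Theorem~\ref{thm:twopolyomino} already enumerates the valid local configurations using only axis-aligned horizontal and vertical I-trominoes, no additional case analysis is required. This establishes the co-RE-completeness in 2D and PSPACE-completeness in 1.5D of tiling by translation with three polyominoes, two of constant size and one disconnected.
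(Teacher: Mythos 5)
Your proposal is correct and takes essentially the same route as the paper, which justifies this corollary in one remark: the construction of Theorem~\ref{thm:twopolyomino} uses $Q$ only by translation, and since $P$ has 2-fold rotational symmetry its copies fall into exactly two translation classes ($P$ and its $90^\circ$ rotation $P'$), so the same instance serves as a translation-only instance with the three prototiles $P$, $P'$, $Q$. Your two-directional check (translation-only tilings with $\{P,P',Q\}$ embed into rotation-allowed tilings with $\{P,Q\}$, and the constructed tiling is already axis-aligned) is exactly the content of that remark, just spelled out.
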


\subsection{Two Connected Polycubes}

\begin{corollary}\label{cor:twopolycube}
  Tiling with two connected polycubes (in 2.5D or 3D), one of which is of constant size, is co-RE-complete and thus undecidable. 
\end{corollary}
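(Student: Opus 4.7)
The plan is to lift the 2D construction of Theorem~\ref{thm:twopolyomino} into a third dimension, using a second layer of unit cubes as a backbone that glues the components of the disconnected polyomino into a single connected polycube, while keeping the small piece a constant-size connected polycube.

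Let $(P,Q)$ be the polyominoes from the proof of Theorem~\ref{thm:twopolyomino}: $P$ is the connected $15$-omino of Figure~\ref{fig:bumpytromino}, and $Q$ is the disconnected polyomino obtained by taking $\bar E'_0 \subseteq R$ (where $R=[1,18M]^2$) and replacing its four $18\times 18$ corner blocks by the key patterns of Figure~\ref{fig:keycorners}. In $\mathbb Z^3$ I will set
\[
  P' \;=\; P \times \{0\},
  \qquad
  Q' \;=\; (Q \times \{0\}) \,\cup\, (R \times \{1\}).
\]
Then $P'$ is a connected polycube of the same constant size as $P$, and $Q'$ is connected because $R\times\{1\}$ is a solid $18M\times 18M$ rectangle in the upper layer and every cell of $Q\times\{0\}$ lies directly beneath a cell of that rectangle.

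The forward direction is immediate: any 2D tiling of the plane by $P$ and $Q$ lifts by placing $P'$ and $Q'$ at the corresponding $(x,y)$-positions, covering layer $z=0$ as in 2D and covering layer $z=1$ entirely by the translated copies of $R\times\{1\}$ contributed by the $Q'$s (because translates of $R$ tile the plane). For the reverse direction in 2.5D (two layers $z\in\{0,1\}$), I first note that any $90^\circ$ rotation of either piece about a horizontal axis would span at least three layers and is therefore forbidden, leaving only in-plane symmetries and the layer-flip $z\mapsto 1-z$ as admissible. I then argue that all $Q'$ copies in a tiling share the same layer-flip parity: if an upright $Q'$ (with $Q$ at $z=0$) were edge-adjacent to a flipped $Q'$ (with the featureless $R$-rectangle at $z=0$), the corner keys of Figure~\ref{fig:keycorners} on the upright copy would have no matching counterpart along the shared edge, contradicting the corner-key argument of Theorem~\ref{thm:twopolyomino}. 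After globally swapping layers if necessary, every $Q'$ is upright; layer $z=1$ is then tiled by a perfect $18M$-periodic grid of $R$-rectangles, pinning the $(x,y)$-translates of each $Q'$, and what remains in layer $z=0$ is precisely a 2D tiling of the plane by $P$ and $Q$.

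For 3D, the same pair $(P',Q')$ reduces to a stack of independent 2.5D instances. The solid layer $R\times\{1\}$ inside each $Q'$ forces each horizontal slice $z=k$ to be homogeneous across $18M$-blocks (either a ``$Q$-layer'' of translated $Q$-patterns plus $P'$-fillers, or an ``$R$-layer'' tiled by $R$-rectangles), since a $Q$-pattern cell and a plain $R$-rectangle cell cannot coexist within a single slice without a corner-key mismatch along the $18M$-block boundary. These layer types therefore alternate along the $z$-axis, and each adjacent ($Q$-layer, $R$-layer) pair is an independent 2.5D slab handled by the previous argument; hence the 3D instance is tileable iff the 2D instance is. The hardest part will be the corner-key case analysis that forces the rigid alternation of slab orientations and pins every $Q'$ to the intended flip; once that is done, hardness follows directly from Theorem~\ref{thm:twopolyomino} and, upstream, Theorem~\ref{thm:planar-graph-orientation}.
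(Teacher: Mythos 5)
Your construction is exactly the paper's: extrude $P$ into a one-layer polycube, and connect $Q$ by gluing a solid $18M\times 18M$ extruded square on top of it, then argue that out-of-plane rotations are impossible and that the corner keys still force the $Q'$ copies into a rigid grid so the problem reduces to the 2D instance of Theorem~\ref{thm:twopolyomino}. The paper's proof is considerably terser (it asserts the reduction in three sentences), so your additional case analysis for the layer-flip parity and the 3D slab decomposition is a faithful elaboration of the same argument rather than a different route.
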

\begin{proof}
  Given the 2D instance of the previous theorem, extrude the polyomino $Q$ by 1 unit to turn it into a polycube, and make it connected by gluing above it an $18M \times 18M$ square extruded by 1. Extrude the tromino $P$ by 1 unit to turn it into a polycube. Again $P$ cannot tile the space on its own, and $Q$ must be combined with other $Q$'s to form a grid pattern. The resulting polycubes tile space if and only if the original Planar Trichromatic Graph Orientation problem has a solution.
\end{proof}

\begin{corollary}\label{cor:threepolycube-translation}
  Tiling with three connected polycubes by translation, two of which are of constant size, is co-RE-complete and thus undecidable. 
\end{corollary}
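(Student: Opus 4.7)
The plan is to mirror the reduction used to prove Corollary~\ref{cor:twopolycube}, but starting instead from the translation-only two-dimensional statement of Corollary~\ref{cor:threepolyomino-translation}. Recall that that corollary exhibits three polyominoes witnessing co-RE-completeness (PSPACE-completeness in 1.5D) of translational tiling: the 15-omino $P$ (the ``plus-refined'' \I tromino of Figure~\ref{fig:bumpytromino}), its $90^\circ$ rotation $P'$, and the disconnected key-cornered polyomino $Q$ built from $\bar E'_0$. My goal is to lift this triple into three connected polycubes in 2.5D (hence also 3D), two of which remain of constant size.

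First, I would extrude each of $P$, $P'$, and $Q$ by one unit in the $+z$ direction, producing polycubes $\hat P$, $\hat{P}'$, and $\hat Q$ sitting in the two layers $z\in\{0,1\}$. The cubes $\hat P$ and $\hat{P}'$ are already connected and of constant size, so they satisfy the requirements for two of the three tiles. To make $\hat Q$ connected, I would glue onto its top ($z=1$) face a solid $18M\times 18M\times 1$ slab placed at $z=2$, connected down to every component of $\hat Q$ by a vertical unit bridge (one cube) from each component. Because every component of $Q$ fits inside the $18M\times 18M$ fundamental square, a single slab in the $z=2$ layer suffices, yielding a connected polycube $\tilde Q$ occupying the three layers $z\in\{0,1,2\}$ (so the overall construction lives in 2.5D with three layers, equivalently inside 3D).

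Next, I would argue correctness by a layer-by-layer reduction to the 2D instance. In any tiling of $\mathbb R^3$ (or of the slab $\mathbb R^2\times[0,3]$) by translates of $\hat P$, $\hat{P}'$, and $\tilde Q$, the slab $z\in[2,3]$ can only be covered by the top $18M\times 18M$ plates of translates of $\tilde Q$, since $\hat P$ and $\hat{P}'$ occupy only the bottom two layers. The key-corner gluing of $Q$ from Figure~\ref{fig:keycorners} (preserved inside each top plate) forces these $\tilde Q$-translates to align on the $(18M,0),(0,18M)$ lattice. This rigidifies the placement of the $\hat Q$ part of every $\tilde Q$ in the bottom two layers, and the remaining region in layers $z\in\{0,1\}$ is exactly the cylinder $E'\times\{0,1\}$. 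Since $\hat P$ and $\hat{P}'$ are just unit extrusions, a tiling of this cylinder by $\hat P,\hat{P}'$ must be layerwise, and in each layer reduces to the 2D translational tiling of $E'$ by $P$ and $P'$; conversely, any such 2D tiling extrudes into a valid 3D tiling. Thus the 3D translational tiling problem is satisfiable iff the Periodic Planar Trichromatic Graph Orientation instance is, giving co-RE-hardness. Membership in co-RE is inherited from the general result of \cite{Demaine-Langerman-2025}.

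The main obstacle I anticipate is the connectivity patch: I must ensure that the $18M\times 18M\times 1$ roof slab together with the vertical bridges does not create ambiguity when stacking translates of $\tilde Q$ or admit parasitic tilings by $\hat P,\hat{P}'$ that bleed between layers. Using three layers (so the roof sits in a layer strictly above any $\hat P$ or $\hat{P}'$) and placing bridges within the interior of each $Q$-component avoids both problems, since the trominoes cannot reach $z=2$ and cannot slip between $\hat Q$ and its roof. Everything else is routine extrusion of the planar argument.
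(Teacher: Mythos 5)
Your overall plan is the paper's: the corollary is obtained by running the construction of Corollary~\ref{cor:twopolycube} on the translation-only triple from Corollary~\ref{cor:threepolyomino-translation} (the refined \I tromino $P$, its $90^\circ$ rotation, and the disconnected key-cornered $Q$), extruding everything and attaching a solid $18M\times 18M$ roof slab to connect $Q$. You identified the three tiles correctly and the co-RE membership source correctly.

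There is, however, a genuine flaw in the justification of your key forcing step. You claim the top layer ``can only be covered by the top plates of translates of $\tilde Q$, since $\hat P$ and $\hat P'$ occupy only the bottom two layers,'' and later that ``the trominoes cannot reach $z=2$.'' This is false: the tiles are placed by \emph{translation in all of $\mathbb{Z}^3$} (or, within a three-layer slab, a two-cube-thick $\hat P$ has two admissible vertical positions, $\{0,1\}$ and $\{1,2\}$), so nothing a priori confines $\hat P$ or $\hat P'$ to the bottom layers. The forcing must instead be run from the bottom up: the cross-section of any tiling at the lowest layer is partitioned into $15$-omino footprints and $Q$-pattern footprints; since the $15$-omino does not tile the plane, some $\tilde Q$ must appear; its key corners can only be completed by other $\tilde Q$'s at the same height, forcing the full grid; the roofs of that grid then exactly fill the layer above the patterns, which \emph{afterwards} excludes any $\hat P$ reaching that layer; and only then is the residual region the cylinder over $E'$, tiled layer-spanningly by $\hat P,\hat P'$ iff $E'$ is tiled by $P,P'$. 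With that reordering your argument is sound. Two smaller points: your extra ``vertical unit bridges'' are unnecessary (the slab already sits face-to-face on top of every component of $\hat Q$), and your construction uses three layers where the paper's uses two ($P$ extruded to thickness one, $Q$'s pattern in the bottom layer and its roof in the top layer); this still proves the corollary as stated but does not achieve the two-layer 2.5D claim made in the introduction.
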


\subsection{Tromino Completion \label{sec:completion}}

\begin{theorem}\label{thm:periodic-completion}
  Given an infinite periodic partial tiling of the plane with one type of tromino (\L or \I), deciding whether it can be completed to a full tiling is co-RE-complete and thus undecidable. In 1.5D, the problem is PSPACE-complete.
\end{theorem}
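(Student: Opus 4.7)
The plan is to observe that tromino completion is essentially the tiling problem of Theorem~\ref{thm:periodicsubset} restricted so that the excluded subset $\bar{E}$ is itself a periodic tiling by the \emph{same} tromino. So the reduction reduces from the periodic subset tiling problem (Theorem~\ref{thm:periodicsubset}), and the key task is to show that, in the construction of that theorem, the complement $\bar{E}_0$ of the periodic subset inside the fundamental domain $[1,6M]^2$ can be pre-tiled by copies of the same tromino.

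First I would revisit the gadgets in Figures~\ref{fig:Lgadgets} and~\ref{fig:Igadgets}. In each $6\times 6$ cell, the ``dark gray'' pixels form the portion of $\bar{E}$ contributed by that cell, and crucially these dark pixels are determined only by the local gadget type (wire, bend, monochromatic vertex, trichromatic vertex, or empty square), not by the choice of orientation/solution. I would verify case by case that every such complement pattern can be partitioned into copies of the ambient tromino (\L or \I), taking care that the partition is consistent across gadget boundaries so that it assembles into a single $6M$-periodic preplacement. For cells that do not intersect the drawing, the dark pixels form a regular $6\times 6$ repeat pattern whose tileability by a tromino is easy to exhibit; the non-trivial cases are the wire, bend, and vertex gadgets, whose dark regions are small enough to be checked by inspection.

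The main obstacle will be ensuring that the complement is tileable \emph{uniformly} across all gadget types, since the dark pixels along a shared edge between two adjacent cells must be partitionable without creating leftover mono- or di-ominoes. If the gadgets as drawn in Theorem~\ref{thm:periodicsubset} do not satisfy this directly, the fix is routine: uniformly refine the grid by a small constant factor (e.g., multiplying $M$ by $3$) so that every dark region can be padded into a union of trominoes, and stitch the pre-tilings of adjacent cells by a fixed template along their shared boundary. This refinement affects neither hardness nor the periodicity of the preplacement.

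With the preplacement constructed, the resulting instance of tromino completion is satisfiable iff the corresponding instance of periodic subset tiling is, establishing co-RE-hardness in 2D and PSPACE-hardness in 1.5D. For the upper bounds: membership in co-RE in 2D follows from the general co-RE membership of tiling problems due to Demaine and Langerman~\cite{Demaine-Langerman-2025}; and membership in PSPACE in 1.5D follows by the same pigeonhole/NPSPACE argument used for 1D Periodic CNF SAT in the proof of Theorem~\ref{thm:periodic-cnf-sat}, guessing one column (strip cross-section) at a time and checking that a state eventually repeats within exponentially many steps.
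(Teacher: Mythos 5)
Your overall strategy---take the periodic subset $E$ from Theorem~\ref{thm:periodicsubset} and realize its complement $\bar E$ as a periodic preplacement of the same tromino---is a genuinely different route from the paper, which instead reduces directly from Periodic Planar 3SAT-3 via a $45^\circ$-rotated brick pattern, with connectors carrying boolean values between bricks and a small library of subbrick gadgets (equal, not, plug, filler, 4-way duplicator, 3SAT clause) found and verified by computer search. Your route would be more economical if it worked, since it reuses the gadgets of Theorem~\ref{thm:periodicsubset} wholesale. The upper-bound parts of your argument (co-RE membership via \cite{Demaine-Langerman-2025}, PSPACE membership in 1.5D by a pigeonhole/NPSPACE column-guessing argument) are fine and consistent with the paper.

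The gap is in the one step you defer: showing that $\bar E_0$ is partitionable into copies of the same tromino. This is not a routine verification to be waved at---it is the entire content of the hardness proof on your route, and nothing in the construction of Theorem~\ref{thm:periodicsubset} was designed to guarantee it (the dark regions are chosen to constrain the tilings of the \emph{light} regions, not to be tromino-tileable themselves; there is not even an a priori guarantee that each connected dark component has area divisible by $3$, let alone that it avoids coloring obstructions such as the mod-$3$ column argument for \I). More seriously, your proposed fallback fix is broken: refining the pixel grid by a factor of $3$ while keeping the same tromino destroys the gadget semantics of Theorem~\ref{thm:periodicsubset}, because a tromino in the refined lattice covers only a third of an original pixel and the light regions then admit completely different tilings. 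This is precisely why the paper, when it does refine the lattice in Theorem~\ref{thm:twopolyomino}, simultaneously replaces the \I tromino by a $15$-omino to keep the two tiling problems equivalent---a move unavailable to you here, since the theorem is specifically about trominoes. To close the gap you would have to exhibit, gadget by gadget and across shared cell boundaries, an explicit tromino partition of $\bar E_0$ for both \L and \I, or else redesign the gadgets of Theorem~\ref{thm:periodicsubset} so that their complements are tileable; as written, the proof does not go through.
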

\begin{proof}
  This time we reduce from Periodic Planar 3SAT-3 from Theorem~\ref{thm:planar-3sat}.
  As in the previous section, we start with the planar graph representation of the SAT instance, and transform it into a planar orthogonal periodic drawing of polynomial grid size, using Lemma~\ref{lem:orthodraw}, and further refine it so all degree 3 vertices connect locally left, up, and right, by Lemma~\ref{lem:degree3}. 
  Rotate the graph by 45 degrees couterclockwise, and overlay the rotated square grid with a \defn{brick} pattern, as shown in Figure~\ref{fig:completionexample}{(b)}. The pattern is built by bisecting horizontal rows of the rotated grid with horizontal lines, and within each row, bisecting successive vertices of the grid with vertical segments. 
  To complete the reduction, we need to show how to implement 
  \begin{enumerate}
  \item straight wires (2 directions),
  \item wire bends (all 4 rotations),
  \item 3SAT gadgets (1 orientation only),
  \item variable gadgets, and
  \item full bricks (to fill grid vertices not touched by edges of the graph),
  \end{enumerate}
  each by partially prefilling a brick with trominoes.
  Each brick has $\leq 4$ \defn{connectors}, two on its top side and two on its bottom side, each representing a boolean value. 

  In order to preserve parity in the gadgets, we represent a connector by a pair of pixels $p$ and $q$ (left-to-right) on the brick boundary, where one pixel is covered by a tromino from above, and the other from below. The value is true if $p$ is covered from below (meaning the all other pixels of the tile containing pixel $p$ are contained in the brick below $p$), and false if $q$ is covered from below. 

  We decompose each brick into 5 \defn{subbricks}: A full width \defn{major subbrick} $K$ with $\leq 4$ connectors again, and four half-width \defn{minor subbricks} $A$, $B$, $C$, and $D$, each with one connector on the top and one on the bottom. To form a brick, we place subbricks $A$ and $B$ on top of $K$, and $C$, $D$ at the bottom of $K$. 

  The $A$, $B$, $C$, and $D$ subbricks implement one of four gadgets; see Figure~\ref{fig:Lminor} for \L trominoes and Figure~\ref{fig:Iminor} for \I trominoes:
  \begin{enumerate}
  \item An \defn{equal} gadget, ensuring the value of its top connector equals the value of its bottom connector.
  \item A \defn{not} gadget ensuring the value of its top connector is the negation of the value of its bottom connector.
  \item A \defn{plug} gadget, with only one connector (top or bottom) of any value.
  \item A \defn{filler} gadget with no connector.
  \end{enumerate}
  \begin{figure}
    \centering
    \subcaptionbox
       {Equal}%
       [.18\linewidth]%
       {\includegraphics[width=\linewidth]{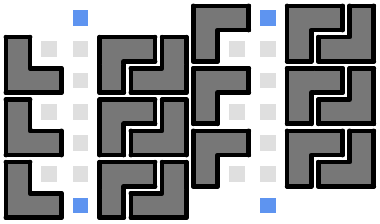}
        \par\medskip
        \includegraphics[width=\linewidth]{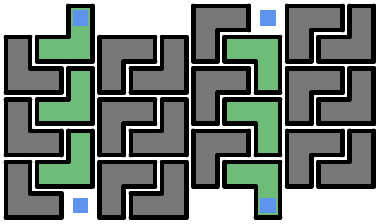}
        \par\medskip
        \includegraphics[width=\linewidth]{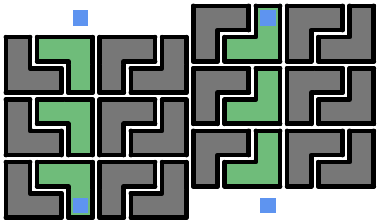}}%
    \hfil
    \subcaptionbox
       {Not}%
       [.18\linewidth]%
       {\includegraphics[width=\linewidth]{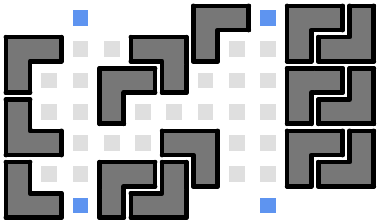}
        \par\medskip
        \includegraphics[width=\linewidth]{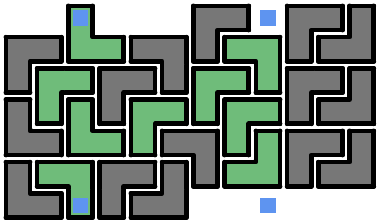}
        \par\medskip
        \includegraphics[width=\linewidth]{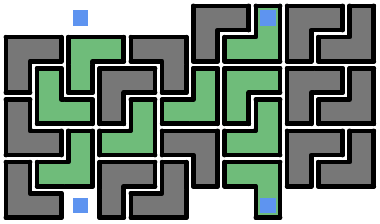}}%
    \hfil
    \subcaptionbox
       {Top plug}%
       [.18\linewidth]%
       {\includegraphics[width=\linewidth]{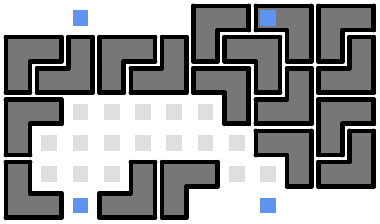}
        \par\medskip
        \includegraphics[width=\linewidth]{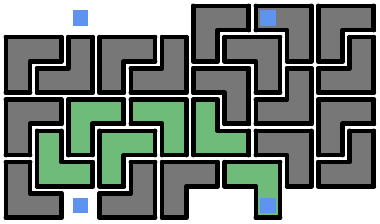}
        \par\medskip
        \includegraphics[width=\linewidth]{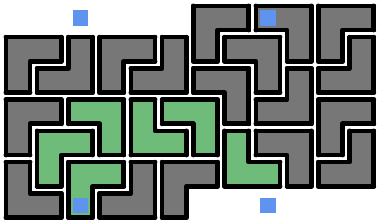}}%
    \hfil
    \subcaptionbox
       {Bottom plug}%
       [.18\linewidth]%
       {\includegraphics[width=\linewidth]{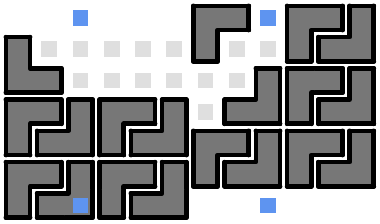}
        \par\medskip
        \includegraphics[width=\linewidth]{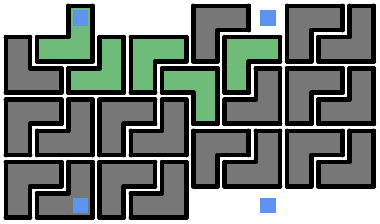}
        \par\medskip
        \includegraphics[width=\linewidth]{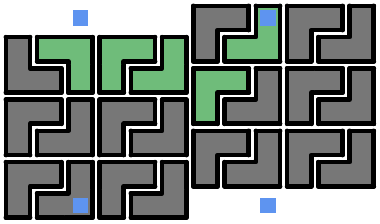}}%
    \hfil
    \subcaptionbox
       {Filler}%
       [.18\linewidth]%
       {\includegraphics[width=\linewidth]{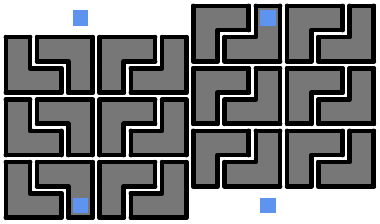}}%
    \caption{Minor subbricks for \L trominoes and their solutions.}%
    \label{fig:Lminor}
  \end{figure}

  \begin{figure}
    \centering
    \subcaptionbox
       {Equal}%
       [.18\linewidth]%
       {\includegraphics[width=\linewidth]{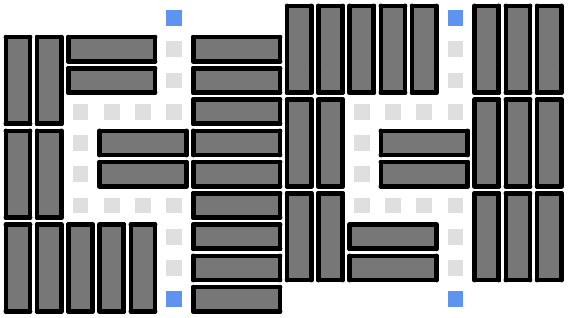}
        \par\medskip
        \includegraphics[width=\linewidth]{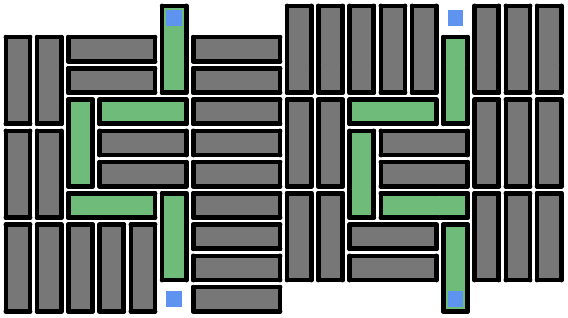}
        \par\medskip
        \includegraphics[width=\linewidth]{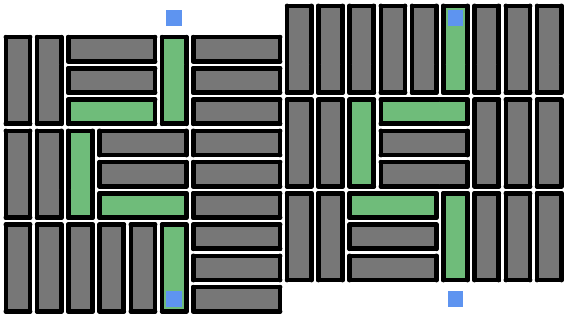}}%
    \hfill
    \subcaptionbox
       {Not}%
       [.18\linewidth]%
       {\includegraphics[width=\linewidth]{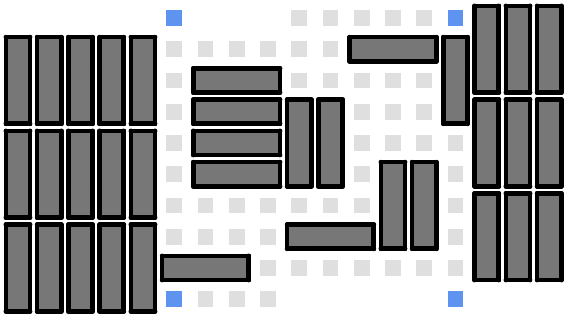}
        \par\medskip
        \includegraphics[width=\linewidth]{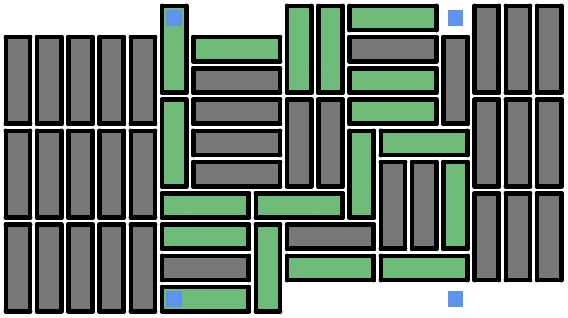}
        \par\medskip
        \includegraphics[width=\linewidth]{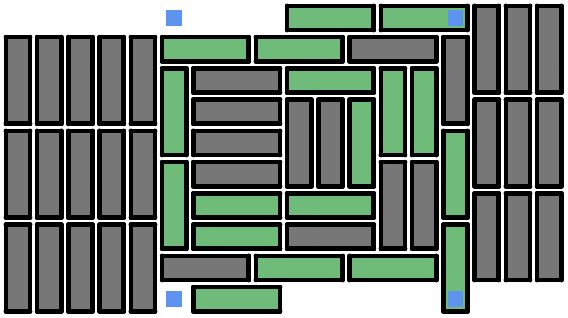}}%
    \hfill
    \subcaptionbox
       {Top plug}%
       [.18\linewidth]%
       {\includegraphics[width=\linewidth]{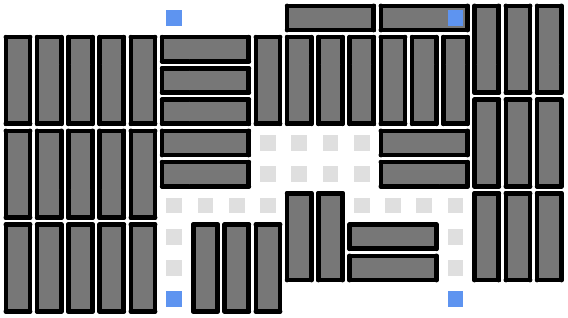}
        \par\medskip
        \includegraphics[width=\linewidth]{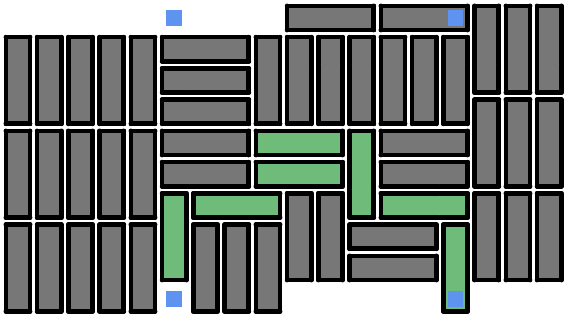}
        \par\medskip
        \includegraphics[width=\linewidth]{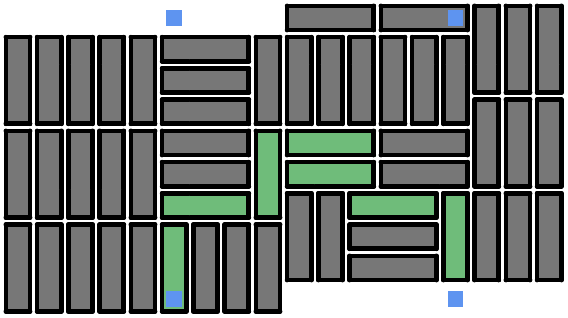}}%
    \hfill
    \subcaptionbox
       {Bottom plug}%
       [.18\linewidth]%
       {\includegraphics[width=\linewidth]{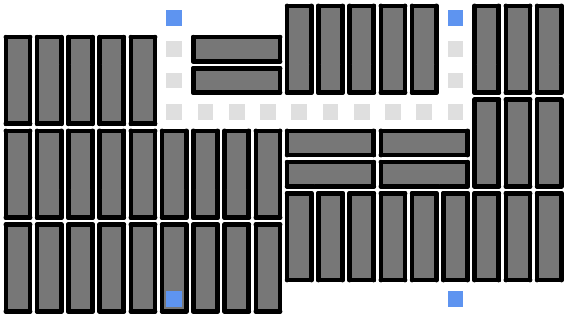}
        \par\medskip
        \includegraphics[width=\linewidth]{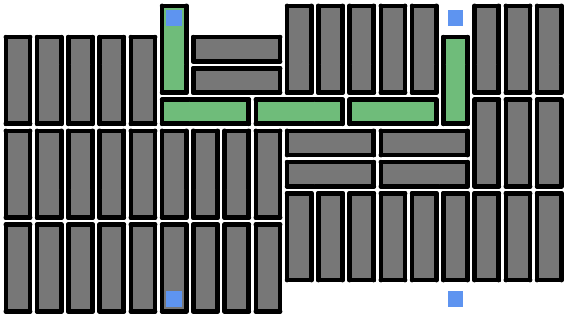}
        \par\medskip
        \includegraphics[width=\linewidth]{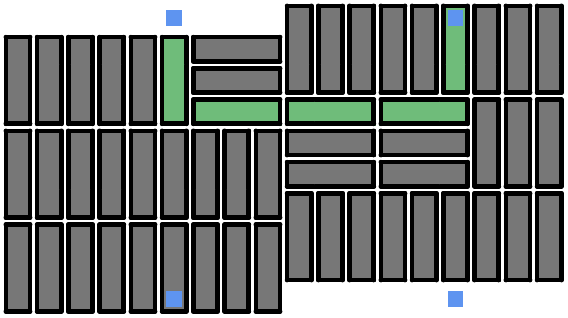}}%
    \hfill
    \subcaptionbox
       {Filler}%
       [.18\linewidth]%
       {\includegraphics[width=\linewidth]{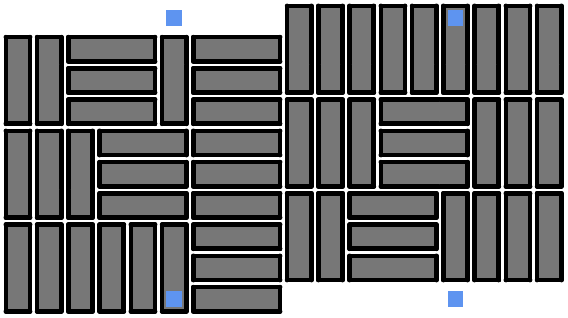}}%
    \caption{Minor subbricks for \I trominoes and their solutions.}%
    \label{fig:Iminor}
  \end{figure}

  The $K$ subbrick implements one of three gadgets:
  \begin{enumerate}
  \item A \defn{filler} (with zero connectors), by gluing 4 of the smaller filler bricks.
  \item A \defn{4-way duplicator} which ensures that the boolean value of all four connectors is equal. It is built by combining two half wires, and two upside-down plugs; see Figure~\ref{fig:duplicator}. In order to ensure the complementarity of the connectors, the 4 subbricks $A$, $B$, $C$, and $D$ surrounding the 4 way duplicator must be either plugs or nots.
  \item A \defn{monotone 3SAT clause}, with two connectors on the top and one on the bottom left, and which ensures that at least one of its 3 connectors has a true value. For \L-trominoes, the gadget is shown in Figure \ref{fig:L3SAT}. The gadget is thinner than the normal $K$ brick so it is thickened by adding equal gadgets. For \I-trominoes, the gadget is shown in Figure \ref{fig:I3SAT}. We first design a TFT-SAT gadget which is only tileable if the 3 connectors are True, False, and True, respectively. We then glue an equal and not gadget to the two top connectors to obtain the monotone 3SAT gadget. 
  \end{enumerate}
  \begin{figure}
    \centering
    \includegraphics[width=0.35\linewidth]{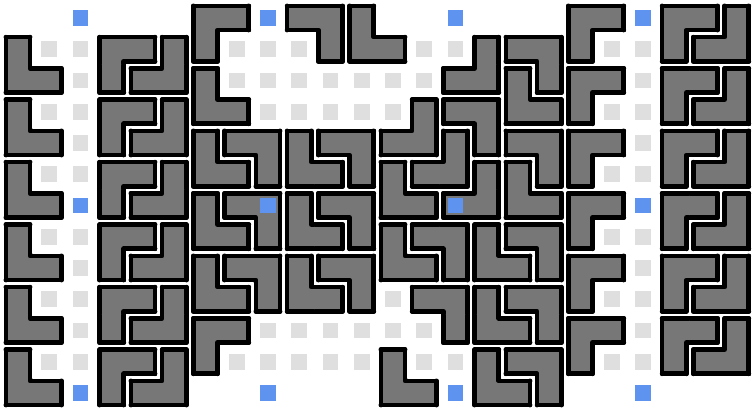}
    \hfil
    \includegraphics[width=0.35\linewidth]{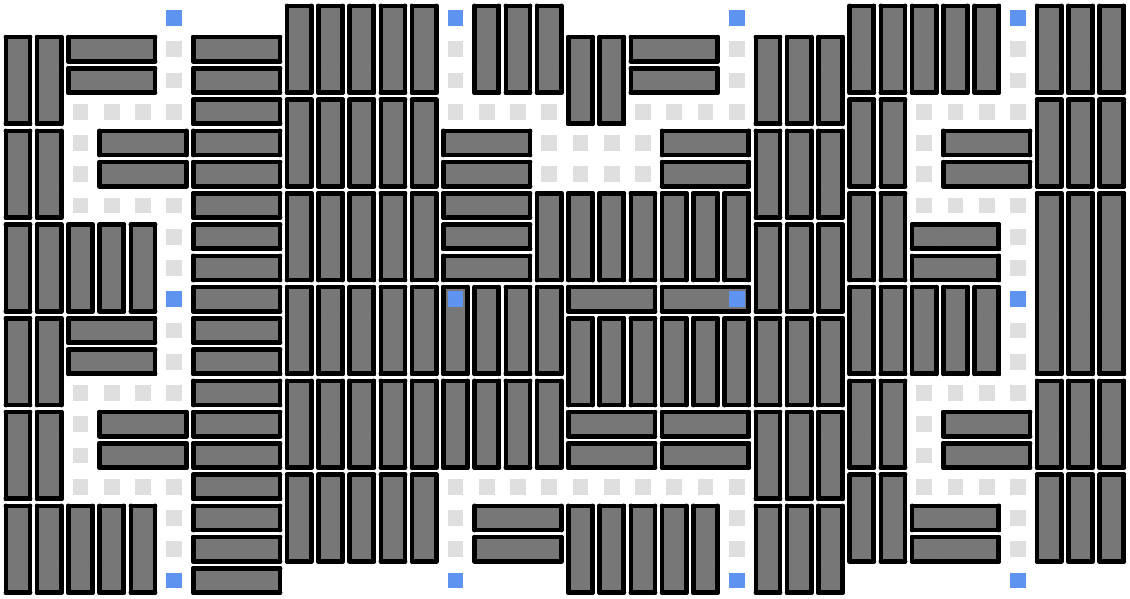}
    \caption{4-way duplicator for \L and \I trominoes.}%
    \label{fig:duplicator}
  \end{figure}

  \begin{figure}
    \centering
    \subcaptionbox
       {3SAT --- mini}%
       [.24\linewidth]%
       {\includegraphics[width=\linewidth]{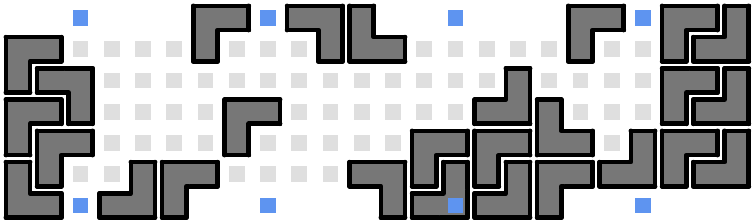}}%
    \hfill
    \subcaptionbox
       {True-True-True}%
       [.24\linewidth]%
       {\includegraphics[width=\linewidth]{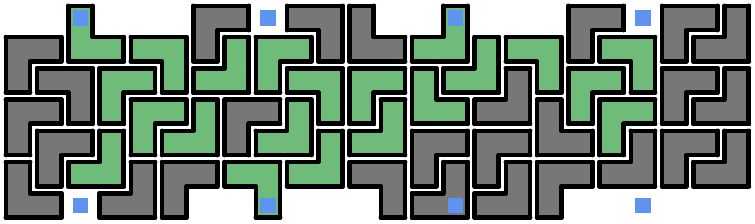}}%
    \hfill
    \subcaptionbox
       {True-True-False}%
       [.24\linewidth]%
       {\includegraphics[width=\linewidth]{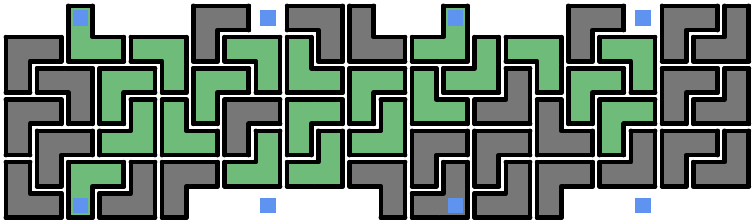}}%
    \hfill
    \subcaptionbox
       {True-False-True}%
       [.24\linewidth]%
       {\includegraphics[width=\linewidth]{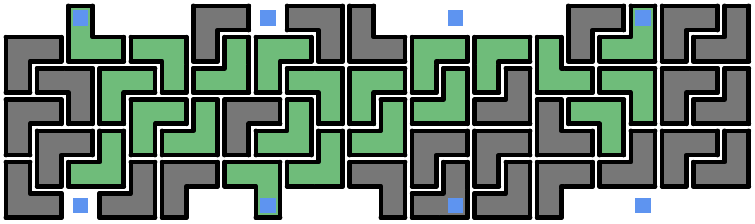}}%
    \par\medskip
    \subcaptionbox
       {True-False-False}%
       [.24\linewidth]%
       {\includegraphics[width=\linewidth]{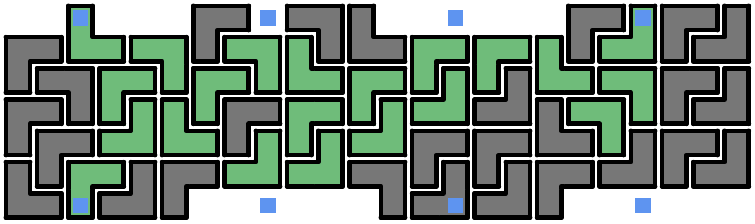}}%
    \hfill
    \subcaptionbox
       {False-True-True}%
       [.24\linewidth]%
       {\includegraphics[width=\linewidth]{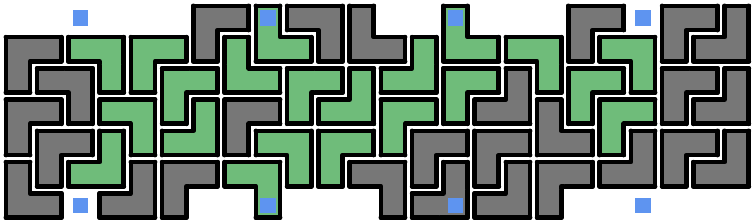}}%
    \hfill
    \subcaptionbox
       {False-True-False}%
       [.24\linewidth]%
       {\includegraphics[width=\linewidth]{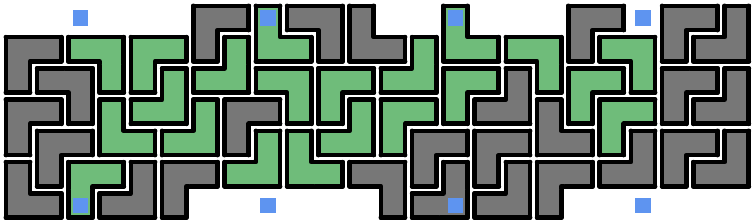}}%
    \hfill
    \subcaptionbox
       {False-False-True}%
       [.24\linewidth]%
       {\includegraphics[width=\linewidth]{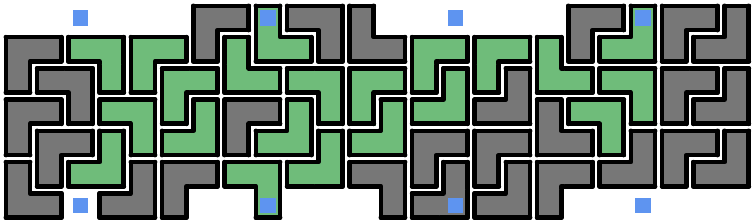}}%
    \par\medskip
    \subcaptionbox
       {3SAT --- full size}%
       [.24\linewidth]%
       {\includegraphics[width=\linewidth]{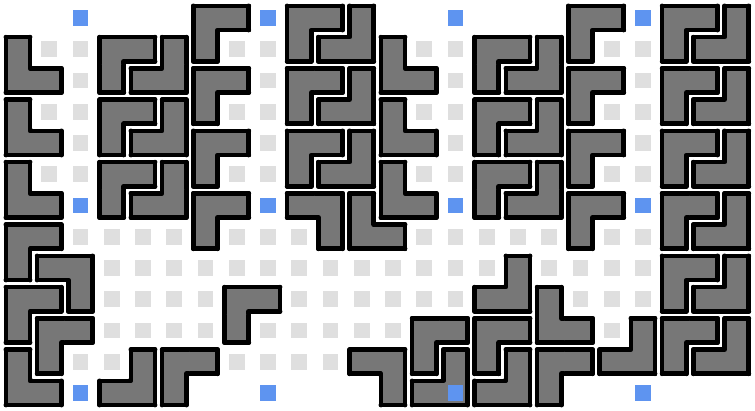}}%
    \caption{3SAT for \L trominoes.}%
    \label{fig:L3SAT}
  \end{figure}

  \begin{figure}
    \centering
    \subcaptionbox
       {True-False-True-SAT}%
       [.40\linewidth]%
       {\includegraphics[width=\linewidth]{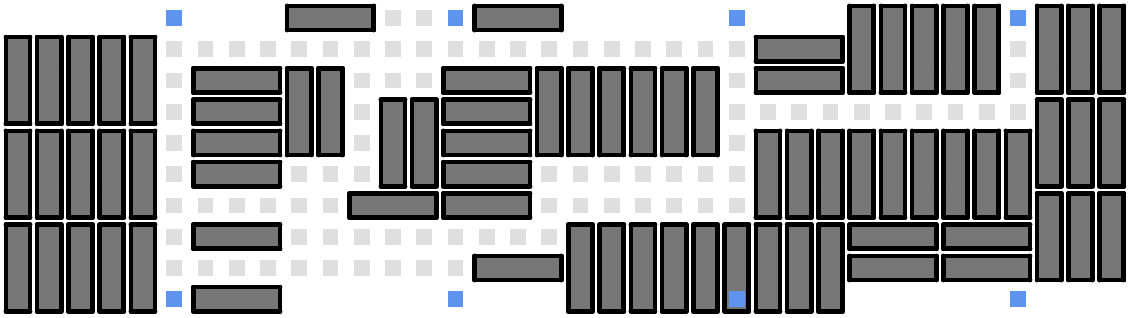}}%
    \hfil
    \subcaptionbox
       {False-False-False}%
       [.40\linewidth]%
       {\includegraphics[width=\linewidth]{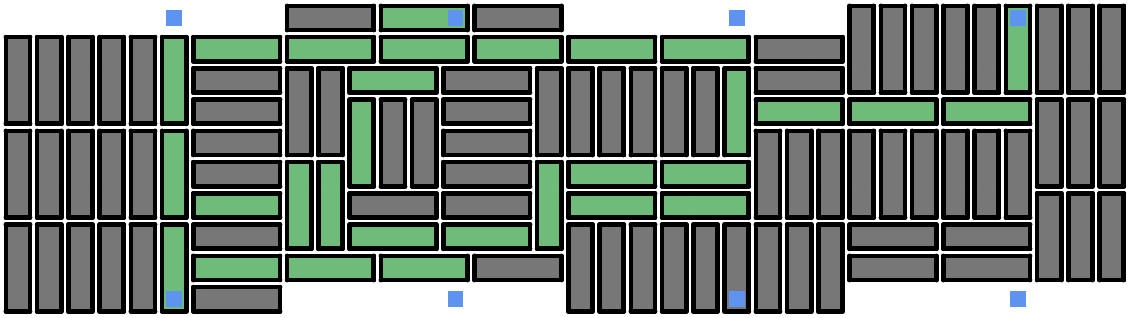}}%
    \par\medskip
    \subcaptionbox
       {True-X-False}%
       [.40\linewidth]%
       {\includegraphics[width=\linewidth]{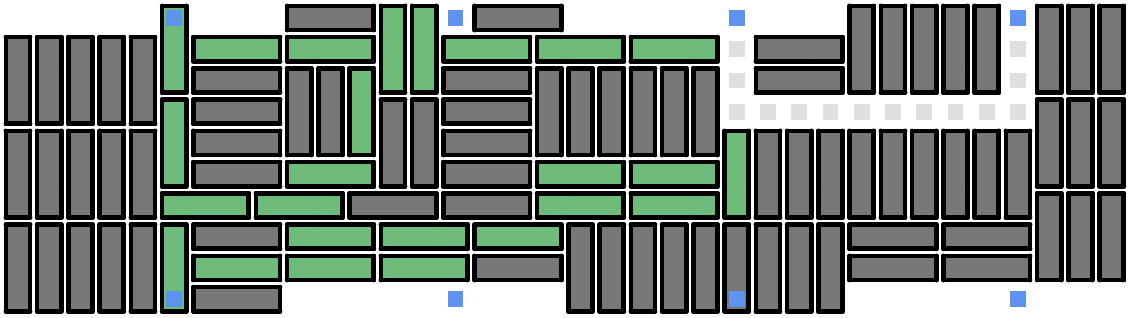}}%
    \hfil
    \subcaptionbox
       {True-X-True}%
       [.40\linewidth]%
       {\includegraphics[width=\linewidth]{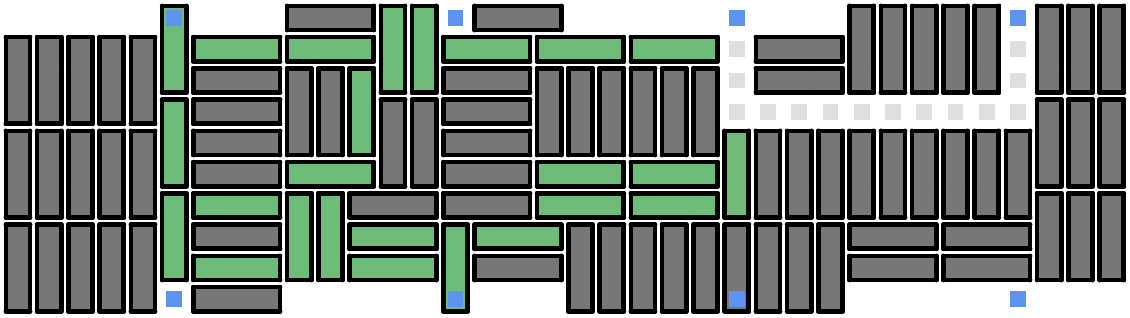}}%
    \par\medskip
    \subcaptionbox
       {3SAT}%
       [.40\linewidth]%
       {\includegraphics[width=\linewidth]{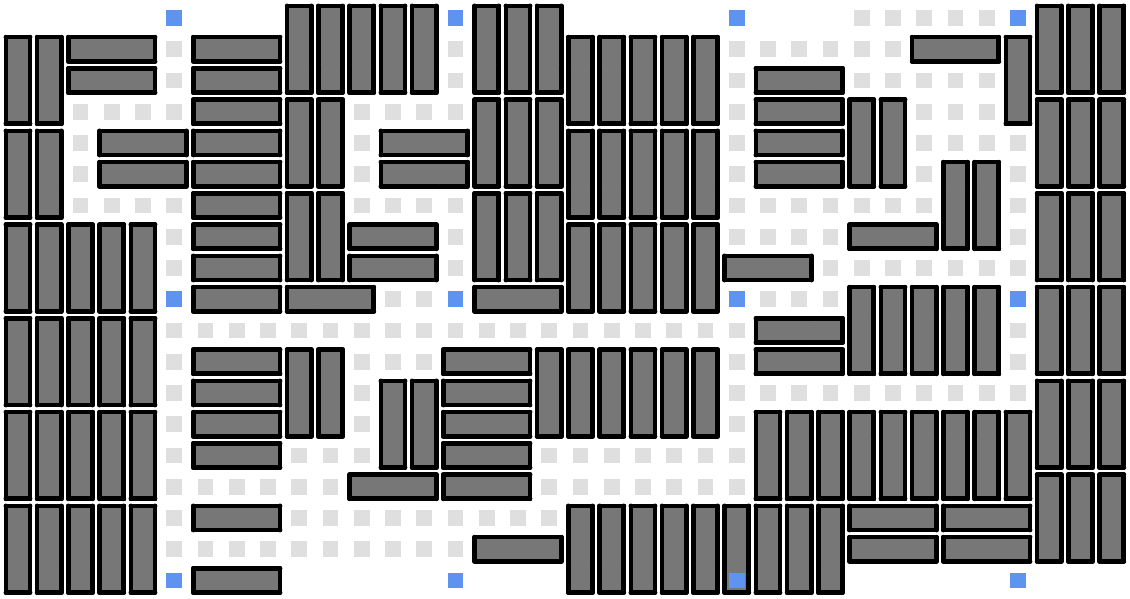}}%
    \caption{3SAT for \I trominoes.}%
    \label{fig:I3SAT}
  \end{figure}

  The subbricks are combined to obtain all required bricks:
  \begin{enumerate}
  \item Combining the 4-way duplicator with two plugs and two equals (or equivalently two nots), we obtain all straight wires and wire bends through the gridpoint of the brick.
  \item Combining the 4-way duplicator with 3 equals (or 3 nots) and one plug, we obtain a variable gadget.
  \item Combining the monotone 3SAT with equals or nots for on its three connectors, and the filler on the bottom right, we obtain all signed 3SUM clauses.
  \end{enumerate}

  Figure~\ref{fig:completionexample} shows complete examples of the reductions.
  The subbrick designs were found and verified through computer search.
  \begin{figure}
    \centering
    \subcaptionbox
       {3SAT graph overlayed with the $M\times M$ grid. Squares are clauses, dots are variables, red edges are negated literals, and blue edges are positive literals. }%
       [.48\linewidth]%
       {\includegraphics[width=0.7\linewidth]{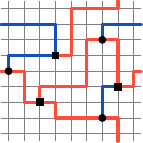}}%
    \hfill
    \subcaptionbox
       {Graph rotated and overlayed with the brick pattern.}%
       [.48\linewidth]%
       {\includegraphics[width=0.7\linewidth]{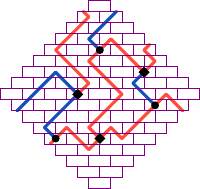}}%
    \par\vspace{\baselineskip}
    \subcaptionbox
       {Substitution of \L-tromino gadgets.}%
       [.48\linewidth]%
       {\includegraphics[width=0.9\linewidth]{figs/Graph/sat-compact-Lcomp}}%
    \hfill
    \subcaptionbox
       {Substitution of \I-tromino gadgets.}%
       [.48\linewidth]%
       {\includegraphics[width=0.9\linewidth]{figs/Graph/sat-compact-Icomp}}%
    \caption{Reduction example from a periodic orthogonal 3SAT-3 graph drawing to a partial periodic tromino tiling.}%
    \label{fig:completionexample}
  \end{figure}
\end{proof}

\hide{%
  \begin{figure}
    \centering
    \begin{tikzpicture}
      \def\cliprange{5}
      \def\drawrange{7}
      \def\brickw{1.4142}  
      \def\brickh{0.7071}  
      \clip (-\cliprange,-\cliprange) rectangle (\cliprange,\cliprange);
      \begin{scope}[rotate=45]
        \foreach \x in {-7,...,7} {
          \draw[black] (\x,-\drawrange) -- (\x,\drawrange);
        }
        \foreach \y in {-7,...,7} {
          \draw[black] (-\drawrange,\y) -- (\drawrange,\y);
        }
      \end{scope}
      \foreach \i in {-10,...,10} {
        \pgfmathsetmacro{\y}{(\i + 0.5) * \brickh}
        \draw[red] (-10,\y) -- (10,\y);
        \ifodd\i
          \def\xshift{0.5}
        \else
          \def\xshift{0.0}
        \fi
        \foreach \j in {-10,...,10} {
          \pgfmathsetmacro{\x}{(\j + \xshift) * \brickw}
          \pgfmathsetmacro{\ytop}{(\i + 1 + 0.5) * \brickh}  
          \pgfmathsetmacro{\ybot}{(\i + 0.5) * \brickh}      
          \draw[red] (\x,\ybot) -- (\x,\ytop);
        }
      }
    \end{tikzpicture}
    \caption{A rotated (black) square lattice with a (red) brick pattern.}%
    \label{fig:brickpattern}
  \end{figure}}%

\begin{corollary}\label{cor:aperiodic-completion}
  There exists a partial periodic covering of the plane by \L trominoes (respectively \I trominoes) that can be completed to a full tiling of the plane by \L trominoes (respectively \I trominoes), but all such completions are aperiodic.
\end{corollary}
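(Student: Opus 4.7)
The plan is to derive Corollary~\ref{cor:aperiodic-completion} from Theorem~\ref{thm:periodic-completion} by the standard "undecidability implies aperiodicity" argument, which exploits the asymmetry between the RE-ness of periodic completions and the co-RE-completeness of completion in general.

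First, I would observe that the set of periodic tromino partial tilings that admit a \emph{periodic} completion is recursively enumerable. A periodic completion is finitely specified by (i) a choice of lattice basis $\vec u_1, \vec u_2 \in \mathbb Z^2$ for its period (with integer coordinates of bounded denominator relative to the input period), and (ii) a finite set of tromino placements whose translates by $\mathbb Z \vec u_1 + \mathbb Z \vec u_2$ describe the completion. There are only countably many such specifications, and given one, it is decidable in finite time whether the described periodic placement is consistent (no overlaps, no uncovered cells, contains all the preplaced trominoes) by examining a single fundamental domain common to the input period and the candidate completion period. Enumerating all candidates and halting on success thus places "has a periodic completion" in RE.

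Next, suppose for contradiction that every completable periodic partial tiling (by a chosen tromino, say \L) admitted a periodic completion. Then the RE procedure above would decide exactly the set of completable instances, making tiling completion RE. Combined with the co-RE membership of tiling completion (from Demaine and Langerman~\cite{Demaine-Langerman-2025}, referenced in the introduction), this would make the problem decidable, contradicting the co-RE-completeness established in Theorem~\ref{thm:periodic-completion}. Hence there must exist at least one periodic partial \L-tromino tiling that can be completed, but for which \emph{no} completion is periodic. The identical argument applied to \I yields the \I case.

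The main conceptual point, and the only place any care is needed, is justifying step one: that the enumeration of "periodic completions" really does capture every conceivable periodic extension up to a computable class of finite descriptions. The standard observation is that any $\mathbb Z^2$-periodic completion of a $\mathbb Z^2$-periodic partial tiling has a period sublattice commensurable with the input period, so a fundamental domain of bounded combinatorial complexity (in terms of the candidate period) suffices as a witness — this makes the enumeration genuinely RE rather than merely countable in principle.
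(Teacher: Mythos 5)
Your proposal is correct and is exactly the argument the paper intends (the corollary is stated without an explicit proof, as the standard consequence of co-RE-completeness): periodic completions are recursively enumerable, so if every completable periodic preplacement had a periodic completion the problem would be RE as well as co-RE, hence decidable, contradicting Theorem~\ref{thm:periodic-completion}. Your care about commensurability is fine since any two full-rank sublattices of $\mathbb Z^2$ intersect in a full-rank sublattice, so the consistency check on a common fundamental domain is indeed decidable.
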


Our result starts from an infinite seed of preplaced tiles.
This is necessary: tiling completion from a finite seed of trominoes
is decidable.

\begin{theorem}\label{thm:finite-completion}
  For either the \L or \I tromino,
  it is NP-complete to decide whether a finite set of preplaced trominoes can be completed to a plane tiling with those trominoes.
\end{theorem}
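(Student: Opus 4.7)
The proof splits into NP-hardness and membership in NP.

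For \textbf{NP-hardness}, I plan to reduce from the known NP-hard problem of tiling a finite polyomino region $R$ with copies of the given tromino type \cite{Beauquier-Nivat-Remila-Robson-1995,Moore-Robson-2001,Horiyama-Ito-Nakatsuka-Suzuki-Uehara-2017}. Given such an $R$, I would construct a finite preplacement whose uncovered region in the plane has exactly two components: the hole $R$ (bounded), and the exterior of a surrounding rectangle $B$ (unbounded). Concretely, I would choose $B$ with a constant-thickness buffer around $R$ and dimensions that are multiples of $3$, then take the preplacement to be any tromino tiling of the annular region $B\setminus R$. If $B\setminus R$ is not directly tromino-tileable for some boundary shape of $R$, I would first attach to $R$ a constant-thickness uniquely-tileable padding (whose single forced tiling does not reach into $R$), preserving tileability-equivalence with $R$ while regularizing the annulus boundary. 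The exterior of $B$ is always tromino-tileable by a canonical periodic plane tiling (both \L and \I tile any half-plane or plane minus a rectangle), so the preplacement is completable iff $R$ is tileable by trominoes.

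For \textbf{membership in NP}, fix a canonical periodic plane tiling $T_0$ for the given tromino type (all-horizontal $3{\times}1$ strips for \I, or $2{\times}3$ paired-L blocks for \L). The plan is to take as witness a polynomial-size rectangle $B$ (aligned to $T_0$'s grid) containing the preplacement, together with a tiling $T$ of $B$ that extends the preplacement and whose trominoes crossing $\partial B$ match those of $T_0$ in the same positions. Then ``$T$ inside $B$, $T_0$ outside $B$'' is a plane completion, and the witness is polynomial-size and polynomial-time verifiable. The converse direction---that any completion yields such a polynomial-size witness---is what requires proof: starting from an arbitrary completion, I would use local tromino-exchange moves far from the preplacement (e.g., swapping a $3{\times}3$ block of three vertical \I trominoes with three horizontal ones, and analogous pinwheel swaps for \L) to rewrite the tiling into $T_0$ outside a polynomial-size ``transition region'' surrounding the preplacement.

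The main obstacle is the converse direction in NP-membership: quantitatively bounding the size of the transition region needed to convert an arbitrary completion into $T_0$ outside a bounded box. This requires a structural analysis of plane tromino tilings, specific to each of \I and \L, showing that beyond a polynomial radius from the preplacement the tiling can always be rewritten via local exchanges into $T_0$. This is plausible because the preplacement's ``influence'' on any tiling is geometrically local and the set of plane tromino tilings has a rigid combinatorial structure, but pinning down the polynomial bound will likely require a careful case analysis of how mixed-orientation patches can be merged row-by-row (or block-by-block) into the canonical pattern.
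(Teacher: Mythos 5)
Your NP-membership argument has a genuine gap, and you have correctly identified where it is: you never establish that an arbitrary plane completion can be rewritten, by local flips, to agree with a canonical tiling $T_0$ outside a polynomial-size box. That converse direction is the whole content of membership under your witness format, and the ``row-by-row merging'' you sketch is not obviously achievable (mixed-orientation patches of \I or \L tilings can propagate orientation constraints arbitrarily far, so it is not clear the transition region is polynomially bounded, or even that such a rewriting exists at all). The paper avoids this obstacle entirely by choosing a weaker witness: the set of tiles of the completion that intersect the bounding box $B$ of the preplacement. One then verifies only that these tiles (plus the preplaced ones) cover $B$ exactly and all intersect $B$; no agreement with any canonical tiling is required. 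The missing idea is that \emph{any} such ragged configuration extends to a plane tiling by an explicit greedy procedure: for \L, grow each side of $B$ outward one row at a time, sweeping along the row and filling each gap of length $\ge 2$ with an \L that deposits one cell into the next row, and each gap of length exactly $1$ with an \L pointing into unfilled territory; for \I, extend every column upward/downward with vertical \I{}s, every row with horizontal \I{}s, and pack the remaining quarter-planes. With that lemma, soundness of the witness is immediate and no structural classification of plane tromino tilings is needed.

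On hardness, your reduction from finite-region tromino tiling is a legitimately different route from the paper's (which reruns its own 3SAT gadget construction inside a rectangle), but it too has an unproven step: you must actually exhibit a tromino tiling of the annulus $B \setminus R$ to serve as the preplacement, and for an arbitrary input region $R$ this need not exist (area mod $3$ and coloring invariants of the complement depend on $R$). Your proposed fix --- attaching a uniquely-tileable padding that regularizes the boundary --- is plausible but is asserted rather than constructed; you would need to either design that padding explicitly or restrict to the structured regions produced by the known hardness proofs.
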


\begin{proof}
  NP-hardness follows from our reductions in Theorem~\ref{thm:periodic-completion}, but starting from 3SAT instead of Periodic 3SAT. Then the construction fits in a rectangle, and the exterior is always tilable, so the problem reduces to whether the interior can be tiled, which is NP-hard.

  To prove membership in NP, suppose a tiling exists.
  Let $B$ be the bounding box of the preplaced trominoes.
  The witness is the set of additional tiles that intersect $B$.
  Together with the preplaced trominoes, verify that this tiles all of $B$ and that all tiles intersect~$B$.
  We claim this is a witness, i.e., it implies the existence of a plane tiling completion.

  First, for \L trominoes:
  \begin{enumerate}
  \item We can grow an edge of $B$ by $1$ by proceeding from one end to the other.
  \item If the next gap is $>1$, put an \L so that you fill two pixels, and one pixel in the next layer. Leave a gap that's 2 smaller.
  \item If the next gap of exactly $1$, put an \L so that you fill that one pixel, and two pixels in the next layer, pointing in the direction where you haven't filled anything yet so no collisions.
  \item Repeat on all sides to tile the plane.
  \end{enumerate}

  Second, for \I trominoes:
  \begin{enumerate}
  \item Extend each column by stacking vertical {\I}s
  \item Extend each row by stacking horizontal {\I}s
  \item Left with quarter-planes; pack those with horizontal {\I}s say.
     \qedhere
  \end{enumerate}
\end{proof}


\subsection{Domino Tiling and Completion}

The relationship between domino tilings and perfect matchings was previously established in \cite{beauquier1995tiling}.

\begin{theorem}\label{thm:domino-periodic}
  If a periodic polycube subset of $\mathbb R^d$ can be tiled by dominoes, then it can be tiled by dominoes periodically with period $1$.
\end{theorem}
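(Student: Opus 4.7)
The plan is to reduce the domino tiling problem to a bipartite perfect matching problem on a periodic graph and then invoke Theorem~\ref{thm:periodicmatching}. Given a periodic polycube subset $E \subseteq \mathbb R^d$ with a chosen fundamental domain, I first build a periodic graph $\PER{G} = (\PER{V}, \PER{E})$ whose protovertices are the unit cells of $E$ inside one fundamental domain and whose protoedges encode face-adjacencies between cells, including those that cross the domain boundary. Since each domino occupies exactly two face-adjacent cells, a domino tiling of $E$ is in bijection with a perfect matching of $\PER{G}$, following the standard correspondence of Beauquier and Nivat~\cite{beauquier1995tiling}. Hence $E$ admits a domino tiling if and only if $\PER{G}$ admits a perfect matching, and the period of a matching is exactly the period of the corresponding tiling.

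Next I verify that $\PER{G}$ can be made bipartite periodic in the sense of Section~\ref{sec:matching}, so that Theorem~\ref{thm:periodicmatching} applies. The natural bipartition is the parity 2-coloring $(\sum_i x_i) \bmod 2$ of cells of $\mathbb Z^d$: any two face-adjacent cells differ in parity, so $\PER{G}$ is bipartite as an abstract graph. For the coloring to be \emph{preserved} by the periodicity (so that each protoedge goes between one protovertex of each color class), each translation vector of the fundamental domain must itself have even coordinate sum. If a generating translation vector $\vec v_i$ has odd sum, I double it to $2\vec v_i$, which at most doubles the number of protovertices while leaving the underlying infinite graph (and its set of perfect matchings) unchanged. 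After this adjustment, $\PER{G}$ is bipartite periodic.

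Now I apply Theorem~\ref{thm:periodicmatching}: since $\PER{G}$ admits a perfect matching by hypothesis, it admits a 1-periodic perfect matching, i.e., one invariant under the fundamental translations of~$\PER{G}$. Unpacking this matching back into dominoes gives a tiling of $E$ that is periodic with the (possibly doubled) period of the bipartite periodic graph, yielding the theorem.

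The conceptual content is minimal once the matching machinery is in place; the main subtlety is the potential doubling of the fundamental domain in directions with odd-sum translation vectors, which is genuinely needed (consider $E = \mathbb Z$ with period~$1$, which has no unit-period domino tiling but has a period-$2$ one). This doubling is exactly the standard maneuver flagged in the preamble to Section~\ref{sec:matching} for turning a connected bipartite periodic graph into a bipartite periodic graph, and once performed, Theorem~\ref{thm:periodicmatching} immediately delivers the required periodic tiling.
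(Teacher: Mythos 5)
Your proof takes essentially the same route as the paper: form the adjacency (dual) graph of the cells of the periodic subset, identify domino tilings with perfect matchings, and invoke Theorem~\ref{thm:periodicmatching}. The one place you diverge is that you explicitly address the gap between \emph{bipartite} and \emph{bipartite periodic}: Theorem~\ref{thm:periodicmatching} requires the parity $2$-coloring to be preserved by the fundamental translations, and you correctly double any generator with odd coordinate sum. The paper's proof says only that the dual of the lattice is bipartite and does not perform this step, so your version is the more careful one. The honest price, which you also correctly flag, is that your conclusion is ``$1$-periodic with respect to the possibly doubled domain,'' i.e., period up to $2$ in the original generators --- and your example $E=\mathbb Z$ with unit period (or $\mathbb Z^d$ with the standard unit fundamental domain) shows this loss is unavoidable, so the theorem's literal ``period~$1$'' claim should be read modulo exactly the doubling convention the paper sets up in the preamble to Section~\ref{sec:matching}.
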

\begin{proof}
  Consider the dual graph of the hypercubic lattice, and the subgraph $\PER{G}$ induced by the cells in the periodic subset to tile. The dual of the lattice, and thus $\PER{G}$, is bipartite. Any valid domino placement corresponds to an edge of $\PER{G}$; a tiling of the periodic subset corresponds to a perfect matching; and any perfect matching corresponds to a tiling of the periodic subset. By Theorem~\ref{thm:periodicmatching}, if $\PER{G}$ has a perfect matching, then it has one that is periodic with period 1.
\end{proof}

\begin{corollary}\label{cor:domino-completion-periodic}
  Any periodic partial tiling of $\mathbb R^d$ by dominoes that can be completed can be completed periodically with period $1$.
\end{corollary}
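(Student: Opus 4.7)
The plan is to reduce directly to Theorem~\ref{thm:domino-periodic} by viewing the uncovered region as a periodic polycube subset. Let the given periodic partial tiling have period lattice $\Lambda \subset \mathbb Z^d$ generated by independent vectors $\vec v_1, \dots, \vec v_d$. Define $S \subseteq \mathbb Z^d$ to be the set of unit cells not covered by any preplaced domino. Since the partial placement is $\Lambda$-periodic, so is $S$; hence $S$ is a periodic polycube subset of $\mathbb R^d$ with the same period lattice $\Lambda$.

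Next I would argue equivalence between completions of the partial tiling and tilings of $S$. Any completion of the partial tiling by dominoes is exactly a tiling of $S$ by dominoes (each added domino must occupy two cells of $S$), and conversely every tiling of $S$ by dominoes combines with the preplaced tiles to give a completion. In particular, the hypothesis that the partial tiling \emph{can} be completed is equivalent to $S$ admitting a domino tiling.

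By Theorem~\ref{thm:domino-periodic} applied to the periodic polycube $S$ (whose periodicity is exactly $\Lambda$), there is a domino tiling of $S$ that is periodic with period $1$ relative to its fundamental domain, i.e., a $\Lambda$-periodic tiling of $S$. Taking the union of this tiling with the original $\Lambda$-periodic partial tiling yields a $\Lambda$-periodic tiling of all of $\mathbb R^d$ that completes the given partial tiling, which is period~$1$ in the sense of the input's periodic structure.

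The only subtlety to check, and the step that requires the most care, is that the notion of ``period~$1$'' inherited from Theorem~\ref{thm:domino-periodic} really coincides with the period of the original partial tiling: this requires setting up the bipartite periodic graph $\PER{G}$ in the proof of Theorem~\ref{thm:domino-periodic} using the protocell of $S$ induced by $\Lambda$, so that a $1$-periodic perfect matching in $\PER{G}$ translates back into a $\Lambda$-periodic domino placement in $\mathbb R^d$. Once this bookkeeping is made explicit, the corollary follows immediately.
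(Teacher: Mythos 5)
Your proposal is correct and is exactly the argument the paper intends: the corollary is stated as an immediate consequence of Theorem~\ref{thm:domino-periodic}, obtained by treating the uncovered cells as a periodic polycube subset with the same period lattice and observing that completions of the partial tiling correspond precisely to domino tilings of that subset. Your extra care about matching the ``period~$1$'' convention to the input's fundamental domain is a reasonable bookkeeping check but does not change the argument.
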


Applying Theorem~\ref{thm:matchingalgorithm}, we obtain the following.

\begin{corollary}\label{cor:domino-periodic-polytime}
  Tiling a periodic polycube subset of $d$D with dominoes can be decided in polynomial time, in any dimension $d$.
\end{corollary}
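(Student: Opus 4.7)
The plan is to convert the tiling problem into a bipartite periodic perfect matching problem and then invoke Theorem~\ref{thm:matchingalgorithm}. Given a periodic polycube subset $S \subseteq \mathbb{R}^d$ specified by a fundamental domain and $d$ translation vectors, I would build the periodic graph $\PER{G}$ exactly as in the proof of Theorem~\ref{thm:domino-periodic}: the protovertices are the unit cubes of the fundamental domain that lie in $S$, and the protoedges join face-adjacent unit cubes (with the appropriate offset vector in $\mathbb{Z}^d$). Every valid domino placement corresponds to an edge of $\PER{G}$, and a tiling of $S$ corresponds to a perfect matching of $\PER{G}$, and conversely.

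The next step is to ensure the graph is \emph{bipartite periodic} in the strong sense required by Theorem~\ref{thm:matchingalgorithm}, namely that the $2$-coloring is $1$-periodic. The natural bipartition by parity of the coordinate sum of each unit cube makes $\PER{G}$ bipartite, but the given period vectors may flip parity. Following the remark immediately preceding Theorem~\ref{thm:matchingalgorithm}, I would resolve this by doubling the period along each axis where necessary; this blows up the fundamental domain by at most a factor of $2^d$, which is a constant for fixed $d$ and hence keeps the description polynomial. After this adjustment, $\PER{G}$ is a bipartite periodic graph whose size is polynomial in the input.

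Finally, I would apply Theorem~\ref{thm:matchingalgorithm} to decide in $O(|E|\sqrt{|V|})$ time whether $\PER{G}$ has a perfect matching, and to produce a $1$-periodic one when it exists. A $1$-periodic perfect matching of $\PER{G}$ lifts directly to a periodic domino tiling of $S$ with the same period (matching the conclusion of Theorem~\ref{thm:domino-periodic}), so the algorithm both decides tileability and, in positive instances, outputs a finitely describable witness.

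There is essentially no obstacle beyond the bookkeeping just described: Theorem~\ref{thm:matchingalgorithm} does all the algorithmic work, and the only place care is needed is in making sure the bipartition is compatible with the declared periodicity, for which the standard period-doubling trick suffices. The main content of the corollary is really the observation that domino tileability of a periodic polycube subset is exactly a bipartite periodic perfect matching instance, after which everything follows from earlier results in the paper.
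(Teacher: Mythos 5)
Your proposal is correct and follows essentially the same route as the paper: model domino placements as edges of the bipartite dual graph of the periodic polycube subset (exactly the setup of Theorem~\ref{thm:domino-periodic}) and then invoke Theorem~\ref{thm:matchingalgorithm}. Your extra care in making the parity bipartition $1$-periodic via constant-factor period doubling is a detail the paper leaves implicit, but it is the intended fix and does not change the approach.
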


\section*{Acknowledgments}

This work grew out of two research groups: the MIT Hardness Group
and the MIT CompGeom Group.
We thank the other members of these groups --- in particular,
Josh Brunner,
Craig Kaplan,
Hayashi Layers,
Anna Lubiw,
Joseph O'Rourke,
Mikhail Rudoy,
and
Frederick Stock
--- for helpful discussions.
Most figures are drawn with SVG Tiler
[%
\protect\url{https://github.com/edemaine/svgtiler/}%
].

\bibliography{tiling,trominoes}%
\bibliographystyle{alpha}%

\end{document}